\begin{document}

\newtheorem{theorem}{Theorem}
\newtheorem{corollary}[theorem]{Corollary}
\newtheorem{prop}[theorem]{Proposition} 
\newtheorem{problem}[theorem]{Problem}
\newtheorem{lemma}[theorem]{Lemma} 
\newtheorem{remark}[theorem]{Remark}
\newtheorem{observation}[theorem]{Observation}
\newtheorem{defin}{Definition} 
\newtheorem{example}{Example}
\newtheorem{conj}{Conjecture} 
\newenvironment{proof}{{\bf Proof:}}{\hfill$\Box$} 
\newcommand{\PR}{\noindent {\bf Proof:\ }} 
\def\EPR{\hfill $\Box$\linebreak\vskip.5mm} 
 
\def\Pol{{\sf Pol}} 
\def\mPol{{\sf MPol}} 
\def\Polo{{\sf Pol}_1} 
\def\PPol{{\sf pPol\;}} 
\def\Inv{{\sf Inv}}
\def\mInv{{\sf MInv}} 
\def\Clo{{\sf Clo}\;} 
\def\Con{{\sf Con}} 
\def\concom{{\sf Concom}\;} 
\def\End{{\sf End}\;}
\def\Sub{{\sf Sub}\;} 
\def\Im{{\sf Im}} 
\def\Ker{{\sf Ker}\;} 
\def\H{{\sf H}}
\def\S{{\sf S}} 
\def\D{{\sf P}} 
\def\I{{\sf I}} 
\def\Var{{\sf var}} 
\def\PVar{{\sf pvar}} 
\def\fin#1{{#1}_{\rm fin}}
\def\P{{\sf P}} 
\def\Pfin{{\sf P_{\rm fin}}} 
\def\Id{{\sf Id}}
\def\R{{\rm R}} 
\def\F{{\rm F}} 
\def\Term{{\sf Term}}
\def\var#1{{\sf var}(#1)} 
\def\Sg#1{{\sf Sg}(#1)} 
\def\Sgo#1{{\sf Sg}_{\mathrm{old}}(#1)} 
\def\Sgn#1{{\sf Sg}_{\mathrm{new}}(#1)} 
\def\Sgg#1#2{{\sf Sg}_{#1}(#2)} 
\def\Cg#1{{\sf Cg}(#1)}
\def\tol{{\sf tol}}
\def\lnk{{\sf lk}}
\def\rbcomp#1{{\sf rbcomp}(#1)}
  
\let\cd=\cdot 
\let\eq=\equiv 
\let\op=\oplus 
\let\omn=\ominus
\let\meet=\wedge 
\let\join=\vee 
\let\tm=\times
\def\ldiv{\mathbin{\backslash}} 
\def\rdiv{\mathbin/}
\def\lnk{{\sf lk}}

\def\typ{{\sf typ}} 
\def\zz{{\un 0}} 
\def\zo{{\un 1}}
\def\one{{\bf1}} 
\def\two{{\bf2}} 
\def\three{{\bf3}}
\def\four{{\bf4}} 
\def\five{{\bf5}}
\def\pq#1{(\zz_{#1},\mu_{#1})}
  
\let\wh=\widehat 
\def\ox{\ov x} 
\def\oy{\ov y} 
\def\oz{\ov z}
\def\of{\ov f} 
\def\oa{\ov a} 
\def\ob{\ov b} 
\def\oc{\ov c}
\def\od{\ov d} 
\def\oob{\ov{\ov b}} 
\def\rx{{\rm x}}
\def\rf{{\rm f}} 
\def\rrm{{\rm m}} 
\let\un=\underline
\let\ov=\overline 
\let\cc=\circ 
\let\rb=\diamond 
\def\ta{{\tilde a}} 
\def\tz{{\tilde z}}
  
  
\def\zZ{{\mathbb Z}} 
\def\B{{\mathcal B}} 
\def\P{{\mathcal P}}
\def\zL{{\mathbb L}} 
\def\zD{{\mathbb D}}
 \def\zE{{\mathbb E}}
\def\zG{{\mathbb G}} 
\def\zA{{\mathbb A}} 
\def\zB{{\mathbb B}}
\def\zC{{\mathbb C}} 
\def\zM{{\mathbb M}} 
\def\zR{{\mathbb R}}
\def\zS{{\mathbb S}} 
\def\zT{{\mathbb T}} 
\def\zN{{\mathbb N}}
\def\zQ{{\mathbb Q}} 
\def\zW{{\mathbb W}} 
\def\bK{{\bf K}}
\def\C{{\bf C}} 
\def\M{{\bf M}} 
\def\E{{\bf E}} 
\def\N{{\bf N}}
\def\O{{\bf O}} 
\def\bN{{\bf N}} 
\def\bX{{\bf X}} 
\def\GF{{\rm GF}} 
\def\cC{{\mathcal C}} 
\def\cA{{\mathcal A}}
\def\cB{{\mathcal B}} 
\def\cD{{\mathcal D}} 
\def\cE{{\mathcal E}} 
\def\cF{{\mathcal F}} 
\def\cG{{\mathcal G}} 
\def\cH{{\mathcal H}}
\def\cI{{\mathcal I}} 
\def\cK{{\mathcal K}} 
\def\cL{{\mathcal L}} 
\def\cP{{\mathcal
P}} \def\cR{{\mathcal R}} 
\def\cRY{{\mathcal RY}}
\def\cS{{\mathcal S}} 
\def\cT{{\mathcal T}} 
\def\oB{{\ov B}}
\def\oC{{\ov C}} 
\def\ooB{{\ov{\ov B}}} 
\def\ozB{{\ov{\zB}}}
\def\ozD{{\ov{\zD}}} 
\def\ozG{{\ov{\zG}}}
\def\tcA{{\widetilde\cA}} 
\def\tcC{{\widetilde\cC}}
\def\tcF{{\widetilde\cF}} 
\def\tcI{{\widetilde\cI}}
\def\tB{{\widetilde B}} 
\def\tC{{\widetilde C}}
\def\tD{{\widetilde D}} 
\def\ttB{{\widetilde{\widetilde B}}}
\def\ttC{{\widetilde{\widetilde C}}}
\def\tba{{\tilde\ba}} 
\def\ttba{{\tilde{\tilde\ba}}}
\def\tbb{{\tilde\bb}} 
\def\ttbb{{\tilde{\tilde\bb}}}
\def\tbc{{\tilde\bc}} 
\def\tbd{{\tilde\bd}}
\def\tbe{{\tilde\be}} 
\def\tbt{{\tilde\bt}}
\def\tbu{{\tilde\bu}} 
\def\tbv{{\tilde\bv}}
\def\tbw{{\tilde\bw}} 
\def\tdl{{\tilde\dl}} 
\def\ocP{{\ov\cP}}
\def\tzA{{\widetilde\zA}} 
\def\tzC{{\widetilde\zC}}
\def\new{{\mbox{\footnotesize new}}}
\def\old{{\mbox{\footnotesize old}}}
\def\prev{{\mbox{\footnotesize prev}}}
\def\oo{{\mbox{\sf\footnotesize o}}}
\def\pp{{\mbox{\sf\footnotesize p}}}
\def\nn{{\mbox{\sf\footnotesize n}}} 
\def\oR{{\ov R}}
\def\bA{\mathbf{A}}
\def\bB{\mathbf{B}}
  
  
\def\gA{{\mathfrak A}} 
\def\gV{{\mathfrak V}} 
\def\gS{{\mathfrak S}} 
\def\gK{{\mathfrak K}} 
\def\gH{{\mathfrak H}}
  
\def\ba{{\bf a}} 
\def\bb{{\bf b}} 
\def\bc{{\bf c}} 
\def\bd{{\bf d}} 
\def\be{{\bf e}} 
\def\bbf{{\bf f}} 
\def\bg{{\bf g}}
\def\bh{{\bf h}}
\def\bi{{\bf i}} 
\def\bm{{\bf m}} 
\def\bo{{\bf o}} 
\def\bp{{\bf p}} 
\def\bs{{\bf s}} 
\def\bu{{\bf u}} 
\def\bt{{\bf t}} 
\def\bv{{\bf v}} 
\def\bx{{\bf x}}
\def\by{{\bf y}} 
\def\bw{{\bf w}} 
\def\bz{{\bf z}}
\def\ga{{\mathfrak a}} 
\def\oal{{\ov\al}} 
\def\obeta{{\ov\beta}}
\def\ogm{{\ov\gm}} 
\def\oep{{\ov\varepsilon}}
\def\oeta{{\ov\eta}} 
\def\oth{{\ov\th}} 
\def\ovm{{\ov\mu}}
\def\ozero{{\ov0}}
  
  
\def\CCSP{\hbox{\rm c-CSP}} 
\def\CSP{{\rm CSP}} 
\def\NCSP{{\rm \#CSP}} 
\def\mCSP{{\rm MCSP}} 
\def\FP{{\rm FP}} 
\def\PTIME{{\bf PTIME}} 
\def\GS{\hbox{($*$)}} 
\def\ry{\hbox{\rm r+y}}
\def\rb{\hbox{\rm r+b}} 
\def\Gr#1{{\mathrm{Gr}(#1)}}
\def\Grp#1{{\mathrm{Gr'}(#1)}} 
\def\Grpr#1{{\mathrm{Gr''}(#1)}}
\def\Scc#1{{\mathrm{Scc}(#1)}} 
\def\rel{R} 
\def\relo{Q}
\def\rela{S} 
\def\dep{\mathsf{dep}}
\def\Filt{\mathrm{Ft}}
\def\Filts{\mathrm{Fts}} 
\def\Agr{$\mathbb{A}$}
\def\Al{\mathrm{Alg}}
\def\Sig{\mathrm{Sig}}
\def\strat{\mathsf{strat}}
\def\relmax{\mathsf{relmax}}
\def\srelmax{\mathsf{srelmax}}
\def\Meet{\mathsf{Meet}}
\def\amax{\mathsf{amax}}
\def\max{\mathsf{max}}
\def\as{\mathsf{as}}
\def\sss{\mathsf{s}}
\def\star{\hbox{$(*)$}}
\def\bmal{{\mathbf m}}
\def\Af{\mathsf{Af}}
\let\sqq=\sqsubseteq
\def\se#1{\mathsf{s}(#1)}
\def\see#1#2{\mathsf{s}_{#1}(#2)}
\def\umax{\mathsf{umax}}
\def\as{\mathsf{as}}
\def\asm{\mathsf{asm}}

  
\let\sse=\subseteq 
\def\ang#1{\langle #1 \rangle}
\def\angg#1{\left\langle #1 \right\rangle}
\def\dang#1{\ang{\ang{#1}}} 
\def\vc#1#2{#1 _1\zd #1 _{#2}}
\def\tms#1#2{#1 _1\tm\dots\tm #1 _{#2}}
\def\zd{,\ldots,} 
\let\bks=\backslash 
\def\red#1{\vrule height7pt depth3pt width.4pt
\lower3pt\hbox{$\scriptstyle #1$}}
\def\fac#1{/\lower2pt\hbox{$\scriptstyle #1$}}
\def\me{\stackrel{\mu}{\eq}} 
\def\nme{\stackrel{\mu}{\not\eq}}
\def\eqc#1{\stackrel{#1}{\eq}} 
\def\cl#1#2{\arraycolsep0pt
\left(\begin{array}{c} #1\\ #2 \end{array}\right)}
\def\cll#1#2#3{\arraycolsep0pt \left(\begin{array}{c} #1\\ #2\\
#3 \end{array}\right)} 
\def\clll#1#2#3#4{\arraycolsep0pt
\left(\begin{array}{c} #1\\ #2\\ #3\\ #4 \end{array}\right)}
\def\cllll#1#2#3#4#5#6{ \left(\begin{array}{c} #1\\ #2\\ #3\\
#4\\ #5\\ #6 \end{array}\right)} 
\def\pr{{\rm pr}}
\let\upr=\uparrow 
\def\ua#1{\hskip-1.7mm\uparrow^{#1}}
\def\sua#1{\hskip-0.2mm\scriptsize\uparrow^{#1}} 
\def\lcm{{\rm lcm}} 
\def\perm#1#2#3{\left(\begin{array}{ccc} 1&2&3\\ #1&#2&#3
\end{array}\right)} 
\def\w{$\wedge$} 
\let\ex=\exists
\def\NS{{\sc (No-G-Set)}} 
\def\lev{{\sf lev}}
\let\rle=\sqsubseteq 
\def\ryle{\le_{ry}} 
\def\ryprec{\le_{ry}}
\def\os{\mbox{[}} 
\def\zs{\mbox{]}}
\def\link{{\sf link}}
\def\solv{\stackrel{s}{\sim}} 
\def\mal{\mathbf{m}}
\def\precs{\prec_{as}}

  
\def\lb{$\linebreak$}  
  
\def\ar{\hbox{ar}} 
\def\Im{{\sf Im}\;} 
\def\deg{{\sf deg}}
\def\id{{\rm id}}
  
\let\al=\alpha 
\let\gm=\gamma 
\let\dl=\delta 
\let\ve=\varepsilon
\let\ld=\lambda 
\let\om=\omega 
\let\vf=\varphi 
\let\vr=\varrho
\let\th=\theta 
\let\sg=\sigma 
\let\Gm=\Gamma 
\let\Dl=\Delta
  
  
\font\tengoth=eufm10 scaled 1200 
\font\sixgoth=eufm6
\def\goth{\fam12} 
\textfont12=\tengoth 
\scriptfont12=\sixgoth
\scriptscriptfont12=\sixgoth 
\font\tenbur=msbm10
\font\eightbur=msbm8 
\def\bur{\fam13} 
\textfont11=\tenbur
\scriptfont11=\eightbur 
\scriptscriptfont11=\eightbur
\font\twelvebur=msbm10 scaled 1200 
\textfont13=\twelvebur
\scriptfont13=\tenbur 
\scriptscriptfont13=\eightbur
\mathchardef\nat="0B4E 
\mathchardef\eps="0D3F

\title{Graphs of relational structures: restricted types}
\author{Andrei A.\ Bulatov}
\date{}
\maketitle
\begin{abstract}
The algebraic approach to the Constraint Satisfaction Problem (CSP) uses 
high order symmetries of relational structures --- polymorphisms --- to study
the complexity of the CSP. In this paper we further develop one of the 
methods the
algebraic approach can be implemented, and apply it to some kinds of the CSP.
This method was introduced in our LICS 2004 paper and involves the study of 
the local structure of finite algebras and relational structures. It associates 
with an algebra $\zA$ or a relational structure $\bA$ a graph, whose 
vertices are the elements of $\zA$ (or $\bA$), the edges represent subsets 
of $\zA$ such that the restriction of some term operation of $\zA$ is `good' 
on the subset, that is, act as an operation of one of the 3 types: semilattice, 
majority, or affine. In this paper we use this theory and consider algebras 
with edges from a restricted set of types. We prove type restrictions 
are preserved under the standard algebraic constructions. Then we show 
that if the types edges in a relational structure are restricted, then the 
corresponding CSP can be solved in polynomial time by specific algorithms. 
In particular, we give a new, somewhat more intuitive proof of the Bounded 
Width Theorem: the CSP over algebra $\zA$ has bounded width if and only 
if $\zA$ does not contain affine edges. Actually, this result shows that bounded 
width implies width (2,3). Finally, we prove that algebras without semilattice 
edges have few subalgebras of powers, that is, the CSP over such algebras 
is also polynomial time. The methods and results obtained in this paper are
important ingredients of the 2017 proof of the Dichotomy Conjecture by the 
author. The Dichotomy Conjecture was also proved independently by Zhuk.
\end{abstract}



\section{Introduction}\label{sec:introduction}

The Constraint Satisfaction Problem (CSP) has received a great deal of 
attention over the last several decades from various areas including logic, 
artificial intelligence, computer science, discrete mathematics, and algebra. 
Different facets of the CSP play an important role in all these disciplines. 
In this paper we focus on the complexity of and algorithms for the CSP. 
This direction in the study of the CSP revolves around the Dichotomy 
Conjecture by Feder and Vardi \cite{Feder93:monotone,Feder98:monotone} for the decision 
version of the problem, and the Unique Games Conjecture by Khot 
\cite{Khot02:power} for the optimization version. 

One of the several possible forms of the CSP asks whether there exists 
a homomorphism between two given relational structures. The Dichotomy 
Conjecture deals with the so called \emph{nonuniform} CSP parametrized 
by the target structure $\bB$; that is, for a given relational structure $\bA$ 
the goal is to decide the existence of a homomorphism from 
$\bA$ to the fixed target structure $\bB$. Such a problem is usually 
denoted by $\CSP(\bB)$. The conjecture claims  that every problem 
$\CSP(\bB)$ is either NP-complete, or is solvable in polynomial time; so no 
intermediate complexity class is attained by problems $\CSP(\bB)$. This 
conjecture has been attacked using different approaches, see, e.g.\ 
\cite{Kolaitis00:game,Kolaitis03:csp,Hell90:h-coloring,Hell04:homomorphism}, 
however, the algebraic approach using invariance properties of relational 
structures has been the most effective and eventually led to a resolution of the 
Dichotomy Conjecture. This approach is 
based on exploiting the properties of \emph{polymorphisms} of relational
structures, which can be thought of as homomorphisms from a power 
$\bA^n$ of a structure $\bA$ to the structure itself, but are usually 
viewed as multi-ary operations on $\bA$ `preserving' the relations of 
$\bA$. The use of polymorphisms was first proposed by Jeavons et al.\
\cite{Jeavons97:closure,Jeavons98:algebraic,Jeavons98:consist}, 
who showed that the complexity of $\CSP(\bB)$ is completely determined 
by the polymorphisms of $\bB$, and identified several types of 
polymorphisms whose presence guarantees the solvability of $\CSP(\bB)$ 
in polynomial time. We will be using these types of operations all the 
time in this paper, so we name them here: semilattice, majority, and 
affine operations, for exact definitions see Section~\ref{sec:edges}. The 
algebraic approach was later developed further in 
\cite{Bulatov05:classifying,Bulatov03:multi} to use universal algebras
associated with relational structures rather than polymorphisms; which 
allowed for applications of structural results from universal algebra. 
This connection has been used first to state the Dichotomy Conjecture 
in a precise form, that basically boils down to the presence of `nontrivial' 
polymorphisms (in which case $\CSP(\bB)$ is polynomial time solvable) 
\cite{Bulatov05:classifying}, and to obtain a number of strong tractability
and dichotomy results \cite{Bulatov02:maltsev-3-element,%
Bulatov06:3-element,Bulatov06:simple,Bulatov11:conservative,%
Bulatov16:conservative,Barto11:conservative,%
Barto12:near,Barto14:local,Idziak10:few}. This line of research recently 
culminated in confirming the Dichotomy Conjecture 
\cite{Bulatov17:dichotomy,Zhuk17:proof}.

One of the main obstacles that had to be overcome to prove the Dichotomy 
Conjecture is that structural theories of universal algebras existed before
are not designed for the CSP. Therefore, the study of the CSP has triggered 
substantial research in algebra aiming to obtain more advanced results 
on the structure of finite algebras. Several approaches have been suggested. 
The first one is based on the absorbing properties of algebras, 
see, e.g.\ \cite{Barto12:absorbing,Barto15:constraint}. Within this 
approach the bounded width conjecture has been proved \cite{Barto14:local} 
(see more about this conjecture in subsequent sections), along with many 
algebraic results and generalizations of the known CSP complexity results 
\cite{Barto11:conservative,Barto12:near,Barto14:local,Barto18:finitely}. 
Another potential approach is to use so called \emph{key relations},
i.e.\ relations that cannot be represented through a combination of 
simpler ones, see, e.g.\ \cite{Zhuk14:key}; although this method requires 
further development. The third approach has been originally introduced in 
\cite{Bulatov04:graph,Bulatov08:recent,Bulatov11:conjecture} and uses 
the local structure of universal algebras. More precisely, it identifies 
small sets of elements of a relational structure or an algebra --- in most 
cases 2-element sets --- such that there is a polymorphism of the 
structure or a term operation of the algebra that behaves well on this 
subset, where `well' means that the operation is close to a semilattice, 
majority, or an affine one. These subsets are then considered edges of 
a graph; these edges can have one of the three types, corresponding 
to the three types of good operations: semilattice, majority, or affine. 
For a relational structure $\bA$ or an algebra $\zA$ the resulting graph 
will be denoted by $\cG(\bA)$ and $\cG(\zA)$, respectively. Properties of
graph $\cG(\zA)$ reflect many aspects of the corresponding CSP. For example, 
that for every algebra $\zA$ that gives rise to a tractable CSP, the graph 
$\cG(\zA)$ is connected, moreover, the types of edges present in the graph 
are related to other properties of the CSP. In particular, the absence of 
affine edges corresponds to the bounded width of the CSP. 

Although this paper does not directly deal with the Dichotomy Conjecture,
the methods and results obtained here are essential ingredients in the 
proof in \cite{Bulatov17:dichotomy}. We refine and advance the approach from 
\cite{Bulatov04:graph,Bulatov08:recent}. The main motif of this work is 
to consider algebras $\zA$ for which the graph $\cG(\zA)$ contains edges 
from a restricted set of types. We first show that the property to have 
edges from a certain set of types is preserved under the standard algebraic 
constructions.

\begin{theorem}\label{the:restrict-variety-int}
Let $S\sse\{\text{semilattice},\text{majority},\text{affine}\}$ and $\zA$ 
be a finite idempotent algebra such that every edge of $\cG(\zA)$ has a 
type from $S$. Then every edge of any finite algebra from the variety 
generated by $\zA$ belongs to $S$.
\end{theorem}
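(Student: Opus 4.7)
By Birkhoff's theorem, the variety generated by $\zA$ equals $\H\S\D(\zA)$, so it suffices to show that the property ``every edge of $\cG(\cdot)$ has type in $S$'' is preserved under subalgebras, finite direct products, and homomorphic images.

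The cases of subalgebras and products are essentially syntactic. If $\zB\le\zA$, every term operation of $\zB$ is the restriction of a term operation of $\zA$, so any edge of $\cG(\zB)$ witnessed by a term $t$ of type $\tau$ is also an edge of $\cG(\zA)$ of type $\tau$, and hence $\tau\in S$. For a finite product $\zB=\zA_1\tm\cdots\tm\zA_n$ of algebras satisfying the hypothesis, term operations act coordinatewise: if $\{\bx,\by\}$ is an edge of $\cG(\zB)$ of type $\tau$ witnessed by $t$, and $i$ is any coordinate with $x_i\ne y_i$, then the identities defining $\tau$ (semilattice, majority, or Mal'cev) hold componentwise at coordinate $i$, giving an edge of $\cG(\zA_i)$ of type $\tau\in S$.

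The substantive step is closure under homomorphic images $\zB=\zA/\th$. Suppose $\{u,v\}$ is an edge of $\cG(\zB)$ of type $\tau$ witnessed by a term $t$; pick $a\in u$, $b\in v$. The central difficulty is that $t$ on $\zA$ only preserves the union $u\cup v$ of the two blocks, not the pair $\{a,b\}$ itself, so no edge of $\cG(\zA)$ is immediately produced. I would first pass to the 2-generated subalgebra $\zC=\Sg{\{a,b\}}$ and to the restricted congruence $\eta=\th\cap\zC^2$ (using the already established subalgebra step to preserve the hypothesis), so that we may assume $\zA=\zC$. Then for each of the three possible types $\tau$, the strategy is to use the defining identities of $\tau$ for $t$ on $\{u,v\}$ in $\zC/\eta$, read in $\zC$ as identities modulo $\eta$, to locate a 2-element subset $\{a',b'\}\sse u\cup v$ and a derived term operation realizing the type $\tau$ on $\{a',b'\}$ on the nose, giving an edge of $\cG(\zC)$ of type $\tau\in S$.

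The main obstacle is the affine case. A semilattice operation has an absorbing direction and a majority operation has a robust three-identity local structure, so for these types the passage from ``identities modulo $\eta$'' to genuine identities on a 2-element witness can be achieved by iterating $t$ on $a,b$ inside $u\cup v$ and extracting a stabilizing pair. A Mal'cev operation, by contrast, acts as a nontrivial group operation on every 2-element slice and exhibits no absorbing behaviour, so extracting a true 2-element affine edge of $\cG(\zC)$ from an affine edge of $\cG(\zC/\eta)$ requires a finer analysis of the module-like structure that the Mal'cev identities induce on $u\cup v$ modulo $\eta$ inside $\zC$, most likely invoking the structural results about local algebras and edges developed in the earlier sections of this paper. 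This is the step I expect to require the heaviest machinery.
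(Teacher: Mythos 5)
Your top-level HSP decomposition matches the paper's, and the subalgebra step is fine, but both of the other two steps have genuine gaps that stem from the same misreading of the definition: an edge $ab$ of $\cG(\zA)$ is \emph{not} witnessed by identities holding on the nose on $\{a,b\}$, but by a (maximal) congruence $\th$ of $\Sg{a,b}$ such that the identities hold on the pair of classes $\{a\fac\th,b\fac\th\}$ in the quotient. In the product step this matters: from an edge $\ba\bb$ of $\zA_1\tm\dots\tm\zA_n$ witnessed by $\th\le\Sg{\ba,\bb}^2$ you cannot simply say the identities "hold componentwise at any coordinate $i$ with $\ba[i]\ne\bb[i]$" --- they hold only modulo the projection of $\th$, and at a given coordinate the (transitive closure of the) projected congruence may be the full congruence of $\pr_i\Sg{\ba,\bb}$, in which case nothing is witnessed; already a $2\tm2$ semilattice example produces a coordinate where the two $\th$-blocks have overlapping projections. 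The paper closes exactly this hole by first replacing $\ba,\bb$ with representatives of the same $\th$-blocks maximizing the agreement set $I(\ba,\bb)=\{i:\ba[i]=\bb[i]\}$ (with a separate replacement argument in the affine case), which forces the projected congruence at any $i\notin I$ to be proper and to separate the images of the two blocks.

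In the homomorphic-image step your stated goal --- to extract a two-element subset $\{a',b'\}\sse u\cup v$ on which a derived term satisfies the type identities on the nose --- is both unnecessary and, in the affine case, unattainable in general: a thick affine edge need not contain any pair on which a term is Mal'tsev (the paper proves such "Mal'tsev edges" exist only under extra hypotheses, smoothness and majority-freeness, in Theorem~\ref{the:affine-thin}), which is why that step looked to you like it needed heavy machinery. Because edges are defined modulo a congruence, the paper's argument is short: given an edge $ab$ of $\zB=\zA\fac\al$ of type $z$ witnessed by $\th$ on $\zC=\Sgg{\zB}{a,b}$, take the union $\zC'$ of the $\al$-blocks forming $\zC$, pull $\th$ back to a congruence $\th'$ of $\zC'$, pick any $a'\in a$, $b'\in b$, and restrict $\th'$ to $\zC''=\Sgg{\zA}{a',b'}$; since $\zC''$ maps onto $\zC\fac\th$ one gets $\zC''\fac{\th''}\cong\zC\fac\th$, so the same quotient (set, two-element semilattice/majority pair, or module) reappears and $a'b'$ is an edge of $\zA$ of type $z$. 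Since you explicitly leave the affine case open and your product step fails without the representative-choosing argument, the proposal as written does not yet constitute a proof.
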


An algebra $\zA$ is said to have \emph{few subpowers} if the number of 
subalgebras of direct products of several copies of $\zA$ is exponentially 
smaller than it generally can be. Idziak et al.\ \cite{Idziak10:few} proved 
that if $\zA$ 
has few subpowers then $\CSP(\zA)$ can be solved in polynomial time. 
Moreover, for such CSPs it is possible to construct a small (polynomial size) 
set of generators of the set of all solutions to the problem. We show that 
every algebra whose edges are majority or affine has few subpowers, 
although it is not true that every algebra with few subpowers satisfies this 
condition.

\begin{theorem}\label{the:few-subpowers-int}
Let $\zA$ be an idempotent algebra every edge of which is majority or affine.
Then $\zA$ has few subpowers. In particular, $\CSP(\zA)$ can be solved in 
polynomial time.
\end{theorem}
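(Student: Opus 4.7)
The strategy is to reduce Theorem~\ref{the:few-subpowers-int} to the characterization of few subpowers in terms of \emph{edge terms}, due to Berman et al.\ \cite{Idziak10:few}: an idempotent finite algebra has few subpowers, and in particular a polynomial-time CSP, if and only if it admits a $k$-edge term for some $k \ge 2$. Recall that a $(k+1)$-ary term $t$ is a $k$-edge term if $t(y,y,x,x,\dots,x) \eq t(y,x,y,x,\dots,x) \eq t(x,x,x,y,x,\dots,x) \eq \cdots \eq t(x,x,x,x,\dots,y) \eq x$, a strong Mal'tsev condition that simultaneously generalises Mal'tsev terms ($k=2$) and near-unanimity terms. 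Our task thus reduces to producing such a term on $\zA$ under the assumption that every edge of $\cG(\zA)$ is majority or affine.

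The first step is to propagate the hypothesis to the whole variety via Theorem~\ref{the:restrict-variety-int}: for every finite $\zB\in\Var(\zA)$, every edge of $\cG(\zB)$ is again majority or affine. This closure property lets us manipulate candidate terms by passing to finite products and subalgebras, and in particular to the free algebras of $\Var(\zA)$, while preserving the hypothesis. Next, on any affine edge of any $\zB\in\Var(\zA)$ there is a ternary term of $\zB$ acting as an affine Mal'tsev operation, and on any majority edge a ternary term acting as a majority; these are extracted from the definition of the respective edge types. I would then amalgamate this local data into two global ternary terms $p,m$ of $\zA$ (using connectedness arguments inside $\cG(\zA)$ and clone-closure on powers) and combine them via the standard nesting construction that builds an edge term from a Mal'tsev term and a near-unanimity term. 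The resulting $(k+1)$-ary term $t$ is designed so that its value on any 2-element subset $\{a,b\}\sse A$ is controlled by whichever of $p,m$ behaves correctly on the edge structure joining $a$ and $b$, hence the edge-term identities hold pointwise on every edge of $\cG(\zA)$.

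The principal obstacle is that the globalised $p$ and $m$ are only guaranteed to behave well \emph{edge by edge}, and there is no a priori reason a nested combination should satisfy the edge-term identities on all of $\zA$, let alone on $\zA^n$. To control this I would argue by contradiction: if the candidate term $t$ failed one of the edge identities, a minimal failure would live in the two-generated free algebra $\F_{\Var(\zA)}(x,y)$, and the two free generators would span a 2-element subset on which neither majority nor affine behaviour is possible. A careful analysis of this obstruction within the graph framework of the paper should force the presence of a semilattice edge on this pair inside $\F_{\Var(\zA)}(x,y)\in\Var(\zA)$, directly contradicting Theorem~\ref{the:restrict-variety-int}. Hence $t$ is a genuine $k$-edge term of $\zA$, and both conclusions of Theorem~\ref{the:few-subpowers-int} --- few subpowers and polynomial-time solvability of $\CSP(\zA)$ --- follow at once from the main theorem of \cite{Idziak10:few}.
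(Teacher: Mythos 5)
Your plan runs in the opposite direction from the paper's, and the key step where you combine local data into a global edge term has a genuine gap that the paper's route is designed to avoid.

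The paper does not try to build an edge term. Instead it proves few subpowers directly: it shows (Theorem~\ref{the:semilattice-free-few}) that every subdirect product $\rel$ of $n$ semilattice-free algebras is generated by a set of size $O(n^3)$, from which few subproducts follows by the equivalence (a) $\Leftrightarrow$ (c) in Theorem~\ref{the:edge}/Corollary~\ref{cor:edge}. The argument uses the thin-edge machinery: Corollary~\ref{cor:sf-strongly-connected} (any two elements are joined by a directed path of thin affine and majority edges), a notion of \emph{signature} that records, for each coordinate, which thin affine edges appear in a fork of $\rel$, a notion of \emph{representation} (a subset of $\rel$ hitting the signature and all projections of arity $\le3$), and the rectangularity/localized-identities lemmas (Lemmas~\ref{lem:affine-sl} and \ref{lem:as-rectangularity}) to show any representation generates $\rel$. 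Only after few subpowers is in hand does the paper extract an edge term (Corollary~\ref{cor:edge-term}) via the Berman et al.\ characterization --- i.e.\ the implication you want to use is invoked in the opposite direction.

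The two specific gaps in your sketch: first, there is no ``standard nesting construction'' that produces an edge term from operations that are only \emph{locally} Mal'tsev on affine edges and \emph{locally} majority on majority edges. The well-known constructions require a Mal'tsev term or a near-unanimity term that works globally on the algebra, whereas here the Mal'tsev/majority behaviour is tied to the edge structure (and even there it lives on the quotient $\Sg{a,b}\fac\th$, not on $\{a,b\}$ itself), so a fresh argument is needed to globalize. Second, the contradiction argument does not go through: failure of an edge-term identity for a pair of elements of $\F_{\Var(\zA)}(x,y)$ is a statement about a single term's behaviour at two points, whereas the presence or absence of a semilattice edge on a pair is a statement quantified over \emph{all} congruences of $\Sg{a,b}$ and \emph{all} term operations, so the failure of one identity for one $t$ does not force the pair to be a semilattice edge. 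In short, ``edge-by-edge good'' does not obviously lift to ``a global edge term,'' which is precisely the difficulty the paper sidesteps by establishing small generating sets of subpowers combinatorially instead.
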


Then we study algebras whose edges are either of the affine and semilattice 
types, or of the majority and semilattice types. In both case we show that 
$\cG(\zA)$ has stronger connectivity properties. In the semilattice-majority 
case we also give a somewhat more intuitive proof for the characterization of 
CSPs of bounded width than that 
in \cite{Barto14:local} and \cite{Bulatov09:bounded}.

\begin{theorem}\label{the:bounded-width-int}
Let $\zA$ be an idempotent algebra every edge of which is semilattice or 
majority. Then $\CSP(\zA)$ has bounded width. Moreover, every algebra 
that gives rise to a CSP of bounded width satisfies this condition.
\end{theorem}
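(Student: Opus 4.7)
The plan is to establish the two implications separately. The substantive direction is the forward one: if every edge of $\cG(\zA)$ is semilattice or majority, then the $(2,3)$-consistency algorithm already solves every instance of $\CSP(\zA)$. The converse, that bounded width forces the absence of affine edges, follows from a standard pp-interpretation of linear equations.

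For the forward direction I would work with a $(2,3)$-minimal instance $\cI$ of $\CSP(\zA)$. Such an instance assigns to each variable $v$ a domain $D_v\sse A$ which is a subalgebra of $\zA$, and to each pair $(u,v)$ a binary constraint $R_{uv}\sse D_u\tm D_v$ which is a subuniverse of $\zA\tm\zA$; all of these live in the variety generated by $\zA$. By Theorem~\ref{the:restrict-variety-int}, every such subalgebra, quotient, and finite power of $\zA$ again has all edges of semilattice or majority type, so the restricted-type property is stable under the algebraic operations performed during consistency propagation. I would then argue by induction on $|\zA|$ together with a suitable instance-size measure, splitting the inductive step according to which edges appear in the current domains. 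If some domain $D_v$ contains a proper semilattice edge, the associated semilattice-like term can be used as an absorbing retraction to replace $D_v$ by a strictly smaller subuniverse while preserving solvability, yielding a smaller instance that falls under the induction hypothesis. Otherwise, every edge in every current $D_v$ is majority, and the proof reduces to the classical fact that $(2,3)$-consistency solves CSPs whose local two-element structure is governed by a majority operation, via a standard Helly-type argument built from the edge graph.

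The main obstacle lies at the interface of the two cases. When one retracts $D_v$ along a semilattice edge and then propagates, it must be verified that the new instance is still $(2,3)$-consistent after a further closure step, that no domain becomes empty (so solutions are not spuriously lost), that the restricted-type invariant survives propagation, and that any solution of the retracted instance extends back to a solution of the original. Making this go through cleanly is exactly where the edge-graph techniques of this paper are used: one needs connectivity properties of $\cG$ on the shrinking domains, together with the preservation theorem, to show that semilattice reductions and majority arguments can be interleaved without obstruction. Compared with \cite{Barto14:local} and \cite{Bulatov09:bounded}, this route replaces absorption-theoretic machinery by direct local reasoning on edges.

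For the converse, suppose $\cG(\zA)$ contains an affine edge $\{a,b\}$. Then, by the classification of edges set up earlier in the paper, there is a subalgebra of a finite power $\zA^k$ together with a congruence whose quotient is term-equivalent to a module over a prime field $\zZ_p$, and the module operations are pp-definable from the relations of $\zA$. Systems of linear equations over $\zZ_p$ can therefore be encoded as instances of $\CSP(\zA)$, and such systems are known to fail bounded width by the original Feder-Vardi argument. Hence whenever an affine edge is present, $\CSP(\zA)$ cannot have bounded width, completing the equivalence.
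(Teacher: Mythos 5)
Your converse sketch is essentially the standard argument (an affine edge yields a module quotient of a two-generated subalgebra, and such algebras admit type \two, which rules out bounded width by the results cited around Proposition~\ref{pro:sm-omitting2}; note only that the module need not be over a prime field without passing further to a simple quotient), and the paper likewise treats this direction as known. The genuine gap is in the forward direction, which is the substantive content. Your plan case-splits: shrink a domain along a semilattice edge by an ``absorbing retraction,'' and, when no semilattice edges remain, invoke ``the classical fact'' that $(2,3)$-consistency solves majority-governed instances. Neither step is established, and both are exactly where the difficulty lies. A semilattice edge only provides a binary term that is semilattice on a two-element quotient $\{a\fac\th,b\fac\th\}$ of a two-generated subalgebra; it gives no retraction of the whole domain, and the assertion that restricting $D_v$ and re-propagating leaves a $(2,3)$-minimal instance with nonempty domains (so that not all solutions are lost) is the heart of the theorem --- you acknowledge this yourself but supply no argument, and Theorem~\ref{the:restrict-variety-int} (preservation of edge types in the variety) is far too weak to bridge it. Likewise, ``all edges majority'' does not yield a majority or near-unanimity polymorphism of $\zA$ or of the constraint relations --- majority behaviour again lives only on two-element quotients --- so the Helly-type/2-decomposability argument for NU algebras does not apply as stated.

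For comparison, the paper's proof of Theorem~\ref{the:main} does not case-split on edge types at all. After passing to smooth reducts via Theorem~\ref{the:adding}, it takes a $(2,3)$-minimal instance, picks a variable $v$, a maximal congruence $\th$ of $\zA_v$, and restricts $\zA_v$ to a $\th$-block $B$ that is maximal as a vertex of $\cG(\zA_v\fac\th)$, restricting simultaneously those variables whose binary projection to $v$ is functional modulo $\th$. All the work is in Lemma~\ref{lem:to-cong-block}: the restricted instance is again $(2,3)$-minimal, proved via quasi-2-decomposability (Theorem~\ref{the:quasi-2-decomp}), rectangularity of maximal and as-components in subdirect products (Proposition~\ref{pro:max-gen}, Corollary~\ref{cor:max-comp-product}, Corollary~\ref{cor:path-extension}), and the connectivity and undirected thin majority structure of affine-free algebras developed in Section~\ref{sec:affine-free}; induction on the sizes of the domains then concludes. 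Your proposal invokes none of these ingredients, so to repair it you would essentially have to reconstruct this consistency-preservation lemma; as written it is a plan rather than a proof.
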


\section{Preliminaries}\label{sec:preliminaries}

\subsection{Relational structures and universal algebras}\label{sec:csp}

By $[n]$ we denote the set $\{1\zd n\}$. For sets $\vc An$ tuples from 
$\tms An$ are denoted in boldface, say, $\ba$; the $i$th component of 
$\ba$ is referred to as $\ba[i]$. An $n$-ary relation $\rel$ over sets 
$\vc An$ is any subset of $\tms An$. For $I=\{\vc ik\}\sse[n]$ by 
$\pr_I\ba,\pr_I\rel$ we denote the \emph{projections} 
$\pr_I\ba=(\ba[i_1]\zd\ba[i_k])$, $\pr_I\rel=\{\pr_I\ba\mid\ba\in\rel\}$ 
of tuple $\ba$ and relation $\rel$. If $\pr_i\rel=A_i$ for each $i\in[n]$, 
relation $\rel$ is said to be a \emph{subdirect product} of $\tms An$. 
As usual, a \emph{relational structure} $\bA$ with a \emph{(relational) 
alphabet} $(\vc \rel m)$ is a set $A$ equipped with interpretations 
$\rel_i^\bA$ of predicate symbols with relations over $A$ of matching 
arity.

We assume familiarity with basic concepts of universal algebra, for 
references see \cite{Burris81:universal}. A (universal) algebra $\zA$ 
with a \emph{functional alphabet} $\vc f\ell$ is a set $A$, called the 
\emph{universe} equipped with interpretations $f_i^\zA$ of functional 
symbols with (multi-ary) operations on $A$ of matching arity. 
In this paper all structures and algebras are assumed finite. 
Algebras with the same functional alphabet are said to be \emph{similar}. 
Operations that can be derived from $\vc f\ell$ by means of composition 
are called \emph{term operations}. 

Let $\zA,\zB$ be similar algebras with 
universes $A$ and $B$, respectively. A mapping $\vf:A\to B$ is a 
\emph{homomorphism} of algebras, if it preserves all the operations, 
that is $\vf(f^\zA_i(\vc ak))=f_i^\zB(\vf(a_1)\zd\vf(a_k))$ for any 
$i\in[\ell]$ and  any $\vc ak\in A$.  A  bijective homomorphism is an 
\emph{isomorphism}. A set $B\sse A$ is a \emph{subuniverse} of 
$\zA$ if, for every $i\in[\ell]$, the operation $f_i^\zA$ restricted to $B$ 
takes values from $B$ only. For a nonempty subuniverse $B$ of algebra 
$\zA$ the algebra $\zB$ with universe $B$ and operations 
$f_1^\zB\zd f_\ell^\zB$ (where $f_i^\zB$ is a restriction of 
$f_i^\zA$ to $B$) is a \emph{subalgebra} of $\zA$. Given similar algebras 
$\zA,\zB$, a \emph{product} $\zA\tm \zB$ of $\zA$ and $\zB$ is the algebra 
similar to $\zA$ and $\zB$ with universe $A\tm B$ and operations computed 
coordinate-wise.  An algebra $\zC$ is a \emph{subdirect product} of $\zA$ 
and $\zB$ if it is a subalgebra of $\zA\tm\zB$ whose universe is a subdirect 
product of $A$ and $B$. An equivalence relation $\th$ on $A$ is called a 
\emph{congruence} of algebra $\zA$ if $\th$ is a subalgebra of
$\zA\tm\zA$. Given  a  congruence $\th$ on $\zA$   we  can  form  the  
\emph{factor  algebra} $\zA\fac\th$ similar to $\zA$,  whose  elements  
are  the equivalence  classes  of $\th$ and  the  operations  are  defined  
so  that  the natural projection mapping is a homomorphism 
$\zA\to\zA\fac\th$. The $\th$-block containing element $a\in\zA$ is denoted 
by $a\fac\th$. We often abuse the notation and use the same operation 
symbol for all similar algebras including factor algebras. In particular, 
to make notation lighter we use $f$ rather than $f\fac\th$ for operations
on a factor algebra. Algebra $\zA$ is \emph{simple}, if it has the trivial 
congruences only (i.e.\  the equality relation and the full congruence).  
If $\th$ is a maximal congruence of $\zA$, then $\zA\fac\th$ is simple. 

A \emph{variety} is a class of algebras closed under direct products 
(including infinite products), subalgebras, and homomorphic images 
(or factor algebras). Algebra $\zA$ is said to be idempotent if 
$f_i(x\zd x)=x$ for all $x\in A$ and any $i\in[\ell]$. If $\th$ is a congruence 
of an idempotent algebra $\zA$, then $\th$-blocks are subuniverses of 
$\zA$. The subalgebra of $\zA$ generated by a set $B\sse\zA$ is denoted 
$\Sgg\zA B$. In most cases $\zA$ is clear from the context and is omitted.

The connection between algebras and relational structures is given by the 
invariance relation. Let $\vc An$ be sets, operation $f(\vc xk)$ is defined on each 
of the $A_i$, and $\rel$ is a relation over $\vc An$. An operation $f(\vc xk)$ 
is said to \emph{preserve} relation $\rel$, or $f$ is a \emph{polymorphism} 
of $\rel$, or $\rel$ is \emph{invariant} with respect to $f$, if for any 
$\vc\ba k\in\rel$ the tuple $f(\vc\ba k)\in\rel$. Operation $f$ on a set $A$ 
is a polymorphism of  relational structure $\bA=(A;\vc\rel m)$ if it is a 
polymorphism of every relation of $\bA$. This definition can be generalized 
to multi-sorted relational structures, but we do not need it here. For a (finite) 
class of finite algebras $\cK$ with basic operations $\vc f\ell$ 
by $\Inv(\cK)$ we denote the class of all finitary relations over the universes 
of algebras from $\cK$ invariant under every $f_i$, $i\in[\ell]$. Alternatively, 
$\Inv(\cK)$ is the class of subalgebras of direct products of algebras 
from $\cK$.

\subsection{Constraint Satisfaction Problem}

The (\emph{nonuniform}) \emph{Constraint Satisfaction Problem} 
(\emph{CSP}) associated with a relational structure $\bB$ is the problem 
$\CSP(\bB)$, in which, given a structure $\bA$ of the same alphabet 
as $\bB$, the goal is to decide whether or not there is a homomorphism
from $\bA$ to $\bB$. Nonuniform CSPs can also be defined for algebras. 
For a class of algebras $\cK=\{\zA_i\mid i\in I\}$ 
for some set $I$ an instance of  $\CSP(\cK)$ is a triple $(V,\dl,\cC)$, 
where $V$ is a set of variables; $\dl:V\to\cK$ is a type function that 
associates every variable with a domain in $\cK$. Finally, $\cC$ is a 
set of constraints, i.e.\ pairs $\ang{\bs,\rel}$, where $\bs=(\vc vk)$ is 
a tuple of variables from $V$, and $\rel\in\Inv(\cK)$, a subset of 
$A_{\dl(v_1)}\tm\dots A_{\dl(v_k)}$. The goal is to find a solution, that is 
a mapping $\vf:V\to\bigcup\cK$ such that $\vf(v)\in\zA_{\dl(v)}$ and for every 
constraint $\ang{\bs,\rel}$, $\vf(\bs)\in\rel$. It is easy to see that if 
$\cK$ is a class containing just one algebra $\zA$, then $\CSP(\cK)$ 
can be viewed as the union of $\CSP(\bA)$ for all relational structures 
$\bA$ invariant under the operations of $\zA$.

The CSP dichotomy theorem 
\cite{Feder93:monotone,Bulatov17:dichotomy,Zhuk17:proof} states that for every 
relational structure $\bB$, $\CSP(\bB)$ is either solvable in polynomial time or 
is NP-complete. In its algebraic form \cite{Bulatov05:classifying} it claims that 
for any finite algebra $\zA$ the problem $\CSP(\zA)$ is either solvable in 
polynomial time or NP-complete; the single algebra $\zA$ can also be 
replaced here with a finite class of finite similar algebras. The algebraic 
approach also helps to make the tractability condition more precise: for a class 
$\cK$ of idempotent algebras the problem $\CSP(\cK)$ is solvable
in polynomial time if and only if the variety generated by $\cK$ does not 
contain `trivial' algebras, or, equivalently, when it omits type \one\ in the 
sense of tame congruence theory \cite{Hobby88:structure}. Otherwise 
$\CSP(\cK)$ is NP-complete. Note that all CSPs for non-idempotent 
algebras or relational structures are equivalent to some CSPs over 
idempotent algebras under log-space reductions 
\cite{Bulatov05:classifying}. In the next section we give an alternative 
characterization of algebras omitting type \one\ that will be used in this 
paper. In particular, all algebras we deal with will be assumed finite, 
idempotent, and omitting type \one.

\section{Coloured graphs}

\subsection{Edges}\label{sec:edges}
In \cite{Bulatov04:graph,Bulatov08:recent} we introduced a local approach 
to the structure of finite algebras. As we use this approach throughout the 
paper, we present it here in some details, see also 
\cite{Bulatov16:connectivity,Bulatov20:connectivity,Bulatov20:maximal}. 

Let $\zA$ be an algebra with universe $A$. Recall that a binary operation 
$f$ on $A$ is said to be \emph{semilattice} if it satisfies the equations 
$f(x,x)=x$, $f(x,y)=f(y,x)$, and $f(x,f(y,z))=f(f(x,y),z)$ for any $x,y,z\in A$. 
A ternary operation $g$ is said to be \emph{majority} if it satisfies the 
equations $g(x,x,y)=g(x,y,x)=g(y,x,x)=x$ for all $x,y\in A$. It is called 
\emph{Mal'tsev} if it satisfies $g(x,y,y)=g(y,y,x)=x$. An operation
is said to be semilatiice (majority, Mal'tsev) on a set $B\sse A$ or 
$B\sse A\fac\th$ for an equivalence relation $\th$, if the above equalities 
hold for all $x,y,z\in B$ or $x\fac\th,y\fac\th,z\fac\th\in B\fac\th$. 
A standard example of a Mal'tsev operation is the operation
$x-y+z$ of a module; we call this operation of a module \emph{affine}. 
Modules are used for definitions below, and so we need the following 
observation. Modules are not idempotent, and so in this paper they are 
replaced with their
\emph{full idempotent reducts}, in which we remove all the 
non-idempotent operations from the module. 

Graph $\cG(\zA)$ is introduced as follows. The vertex set
is the set $A$. A pair $ab$ of vertices is an \emph{edge} iff
there exists a congruence $\th$ of $\Sg{a,b}$, other than the full 
congruence and a term operation $f$ of $\zA$ such 
that either $\Sg{a,b}\fac\th$ is a set (that is, an algebra whose term 
operations are trivial), or $\Sg{a,b}\fac\th$ is a module and $f$ is an affine 
operation on it, or $f$ is a semilattice operation on
$\{a\fac\th,b\fac\th\}$, or $f$ is a majority operation on
$\{a\fac\th,b\fac\th\}$. (Note that we use the same operation symbol in this case.)

If there is a maximal congruence $\th$ such that $\Sg{a,b}\fac\th$ is 
a set then $ab$ is said to have the \emph{unary} type. If there are a 
maximal congruence $\th$ and a term operation of $\zA$ such that $f$ 
is a semilattice operation on $\{a\fac\th,b\fac\th\}$ then $ab$ is said 
to have the
{\em semilattice type}. An edge $ab$ is of {\em majority type} if there are 
a maximal congruence $\th$ and a term operation $f$
such that $f$ is a majority operation on $\{a\fac\th,b\fac\th\}$ and there is no 
semilattice term operation on $\{a\fac\th,b\fac\th\}$. Finally, $ab$ has the 
{\em affine type} if there are maximal $\th$ and $f$ such that $f$ is an affine 
operation on $\Sg{a,b}\fac\th$ and $\Sg{a,b}\fac\th$ is a module; in 
particular it implies that there is no semilattice or majority operation on 
$\{a\fac\th,b\fac\th\}$.  In all cases we say that congruence $\th$ 
\emph{witnesses} the type of edge $ab$. Observe that a pair $ab$ can 
still be an edge of more than one type as witnessed by different congruences.
We will often refer to the set $\{a\fac\th,b\fac\th\}$ as a \emph{thick}
edge.

The conditions for $\CSP(\zA)$ to be tractable --- `omitting type \one', --- and
the condition for $\CSP(\zA)$ to have bounded width (see Section~\ref{sec:sm}
for more details) --- `omitting types \one\ and \two ' --- can be characterized as 
follows: An idempotent algebra $\zA$ omits the type \one\ (the types 
\one\ and \two) if and only if $\cG(\zB)$ contains no edges of the unary type 
($\cG(\zB)$ does not contain edges of the unary and affine types) for every 
subalgebra $\zB$ of $\zA$, see Theorem~12 from \cite{Bulatov16:connectivity}.
In this paper we always assume that algebras do not contain edges of the 
unary type.

For the sake of the Dichotomy Theorem, it suffices to consider 
\emph{reducts} of an algebra $\zA$ omitting type \one, that is, algebras 
with same universe but reduced set of term operations, as long as the reducts 
also omit type \one. In particular, we are interested in reducts of $\zA$, in 
which semilattice and majority edges are subalgebras.

\begin{theorem}[Theorem~12, \cite{Bulatov20:connectivity}]%
\label{the:adding}
Let $\zA$ be an idempotent algebra. There exists a reduct $\zA'$ of
$\zA$ such that\\[2mm]
(1) if 
$\cG(\zA)$ does not contain edges of the unary type, then $\cG(\zA')$ 
does not contain edges of the unary type;\\[2mm]
(2) if $\cG(\zA)$ contains no edges of the unary and affine types, 
then $\cG(\zA')$ contains no edges of the unary and affine types.
\end{theorem}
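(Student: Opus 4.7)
The plan is to construct $\zA'$ by restricting the clone of $\zA$ to those term operations that preserve a carefully chosen family of unary relations encoding the thick semilattice and majority edges of $\cG(\zA)$. Concretely, for every semilattice or majority edge $ab$ of $\cG(\zA)$, fix a witnessing maximal congruence $\th_{ab}$ of $\Sg{a,b}$ and let $B_{ab}\sse A$ be the union of the two $\th_{ab}$-blocks $a\fac{\th_{ab}}$ and $b\fac{\th_{ab}}$. Let $\cR$ be the collection of all such $B_{ab}$, viewed as unary relations on $A$. Define $\zA'$ to be the reduct of $\zA$ whose basic operations are the term operations of $\zA$ that are polymorphisms of every relation in $\cR$. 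By construction, every thick semilattice or majority edge of $\cG(\zA)$ is a subuniverse of $\zA'$ and thus a subalgebra, which is the implicit design goal stated just before the theorem.

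To prove (1), assume $\cG(\zA)$ has no unary edges and suppose towards a contradiction that some pair $ab$ is a unary edge of $\cG(\zA')$, witnessed by a maximal congruence $\th'$ of $\Sgg{\zA'}{a,b}$. I would first observe that $\Sgg{\zA'}{a,b}\sse\Sg{a,b}$, and analyze how $\th'$ relates to congruences of $\Sg{a,b}$; in particular, one shows that there is a congruence $\th$ of $\Sg{a,b}$ whose restriction refines $\th'$ suitably. The hypothesis on $\zA$ then supplies a term operation $f$ of $\zA$ that is semilattice, majority, or affine on a quotient of $\Sg{a,b}$ refining $\{a\fac{\th'},b\fac{\th'}\}$. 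The key step is to promote $f$ to an operation $f'\in\zA'$ with the same good behaviour on this two-block set. Since semilattice and majority operations automatically preserve their own thick supports, the work reduces to absorbing potential violations of preservation on \emph{other} $B_{cd}\in\cR$; the natural device is to iterate $f$ together with operations already in $\zA'$, exploiting idempotency on diagonals. For (2), an analogous argument shows that no new affine edge can appear: an affine witness in $\zA'$ would lift through the module quotient of $\Sgg{\zA'}{a,b}$ to an affine edge of $\cG(\zA)$, contradicting the hypothesis.

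The main obstacle is precisely this pullback: an arbitrary term operation of $\zA$ witnessing the non-unary (or non-affine) type of $ab$ need not preserve every $B_{cd}\in\cR$, so it is not automatically a member of $\zA'$. Replacing it with a genuine operation of $\zA'$ requires a controlled composition that neutralizes each bad pair $(B_{cd},f)$ without destroying the good behaviour on $\{a\fac{\th'},b\fac{\th'}\}$. Executing this step uniformly across all three edge types, while keeping the resulting operation still semilattice, majority, or affine on the target quotient, is the technical heart of the argument and is the reason the result is non-trivial; all other parts are bookkeeping with the correspondence between congruences of $\Sgg{\zA'}{a,b}$ and those of $\Sg{a,b}$.
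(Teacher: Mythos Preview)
The paper does not actually prove this theorem; it is quoted verbatim as Theorem~12 of \cite{Bulatov20:connectivity} and used as a black box. Consequently there is no ``paper's own proof'' to compare against, and any assessment must be of your proposal on its own merits.

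Your construction of $\zA'$ is the natural one and does achieve the implicit goal (smoothness). The difficulty is exactly where you locate it, but you have not resolved it --- you have only \emph{named} it. Two concrete gaps remain. First, the congruence bookkeeping is not as routine as you suggest: a maximal congruence $\th'$ of $\Sgg{\zA'}{a,b}$ need not be the restriction of any congruence of $\Sgg{\zA}{a,b}$, so the sentence ``there is a congruence $\th$ of $\Sg{a,b}$ whose restriction refines $\th'$ suitably'' requires an argument you have not supplied. Second, and more seriously, the ``promotion'' of a good $\zA$-term $f$ to an $\zA'$-term $f'$ with the same behaviour on the target two-block quotient is the entire content of the theorem, and ``iterate $f$ together with operations already in $\zA'$'' is not a method --- it is a hope. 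You give no mechanism that repairs a violation of one $B_{cd}$ without creating a violation of another, nor any reason why such a process terminates.

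Your treatment of part~(2) is also in the wrong direction. An affine edge of $\zA'$ means $\Sgg{\zA'}{a,b}\fac{\th'}$ is a module \emph{under the $\zA'$-operations}; since $\zA$ has more term operations, there is no reason this quotient (even if it made sense for $\zA$) would remain a module, so the ``lift to an affine edge of $\cG(\zA)$'' does not follow. In short, the outline is plausible as a strategy but is not a proof: the acknowledged ``technical heart'' is missing, and the surrounding bookkeeping is more delicate than you indicate.
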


An algebra $\zA$ such that $a\fac\th\cup b\fac\th$ is a subuniverse of $\zA$ for 
every semilattice or majority edge $ab$ of $\zA$ is called \emph{smooth}.

Operations witnessing the type of edges can be significantly uniformized.
As is proved in \cite{Bulatov04:graph,Bulatov16:connectivity,Bulatov20:connectivity}, 
for any
finite class $\cK$ of smooth algebras there are term operations $f,g,h$ of 
$\cK$ such that for every edge $ab$ of any $\zA\in\cK$ witnessed by a 
maximal congruence $\th$, $f$ is a 
semilattice operation on $\{a\fac\th,b\fac\th\}$ whenever $ab$ is a semilattice 
edge, $g$ is a majority operation on $\{a\fac\th,b\fac\th\}$ if $ab$ is
a majority edge, and $h$ is an affine operation operation on 
$\{a\fac\th,b\fac\th\}$ if $ab$ is an affine edge.

Unlike majority and affine operations, for a semilattice edge $ab$ and a 
congruence $\th$ of $\Sg{a,b}$ witnessing that, there can be semilattice 
operations acting differently on $\{a\fac\th,b\fac\th\}$,which corresponds to 
the two possible orientations of $ab$. In every such case by fixing 
operation $f$ introduced above we effectively 
choose one of the two orientations. In this paper we do not really care 
about what orientation is preferable. 

\subsection{Thin edges}
Edges as defined above are not always most effective. In \cite{Bulatov20:connectivity,Bulatov20:maximal} 
we therefore refine these notions. For the definitions given in this section
we first need to fix a finite class $\cK$ of similar smooth algebras that
do not contain edges of the unary type. To streamline the arguments we will 
assume that $\cK$ is closed under taking subalgebras and factor-algebras.
A pair $ab$ of elements of algebra $\zA\in\cK$ is 
called a \emph{thin semilattice edge} if $ab$ is a semilattice edge, and the 
congruence witnessing that is the equality relation. In other words, $f(a,a)=a$ 
and $f(a,b)=f(b,a)=f(b,b)=b$. We denote the fact that $ab$ is a thin 
semilattice edge by $a\le b$. Operation $f$ can be selected to have an 
additional property.

\begin{prop}[Proposition~24, \cite{Bulatov20:connectivity}]%
\label{pro:good-operation}
Let $\cK$ be as specified above. There is a binary
term operation $f$ of $\cK$ such that $f$ is a semilattice operation on
$\{a\fac\th,b\fac\th\}$ for every semilattice edge $ab$ of any 
$\zA\in\cK$, where 
congruence $\th$ witnesses that, and, for any $a,b\in\zA$, either 
$a=f(a,b)$ or the pair $(a,f(a,b))$ is a thin semilattice edge of $\zA$. 
Operation $f$ with this property will be denoted by a dot (think 
multiplication).
\end{prop}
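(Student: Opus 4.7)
The plan is to start with a uniformizing binary term operation $f_0$ of $\cK$ provided by the uniformization result cited right before the proposition; this $f_0$ is a semilattice operation on every thick semilattice edge $\{a\fac\th,b\fac\th\}$ of every $\zA\in\cK$. The operation $f$ will be obtained by iterating $f_0$ in its second argument a uniformly bounded number of times, with the goal of forcing a fixed point on the limit of the iteration.

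Set $g_a(x) := f_0(a,x)$. Since $\cK$ is finite and each $\zA\in\cK$ is finite, there is a uniform $N$ such that for every $a \in \zA \in \cK$ and every $b \in \zA$ the orbit of $b$ under $g_a$ is eventually periodic by step $N$. Define $f(a,b) := g_a^M(b)$, where $M$ is a sufficiently large multiple of $N$; then $f$ is a term operation of $\cK$ because $f_0$ is. Two properties then need to be verified: (i) $f$ is still a semilattice operation on every thick semilattice edge, and (ii) for any $a,b$, writing $c := f(a,b)$, either $c = a$ or the pair $(a,c)$ is a thin semilattice edge, i.e.\ $f(a,c) = f(c,a) = c$ hold on the nose.

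Property (i) is routine: on $\{a\fac\th,b\fac\th\}$ the operation $f_0$ is a semilattice, so by associativity and idempotence the iterates do not move anything modulo $\th$, and the semilattice identities transfer from $f_0$ to $f$. The core of the argument is (ii), which amounts to showing that the eventual cycle of $g_a$ acting on $b$ has length exactly one (so that $f_0(a,c)=c$ exactly), and that, symmetrically, $f_0(c,a) = c$ as well.

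The main obstacle is precisely this collapse from ``equal modulo some congruence'' to true equality, i.e.\ the thinness condition. The strategy is to exploit the standing hypotheses that $\cK$ is smooth, closed under subalgebras and factor-algebras, and contains no edges of the unary type, so that every edge inside $\Sg{a,c}$ is of semilattice, majority, or affine type and $f_0$ behaves correctly on each. A case analysis then shows that if $g_a$ had a residual cycle of length greater than one at $c$, then choosing a maximal congruence witnessing the type of some edge inside that cycle would force $f_0$ to violate its own semilattice identities there, contradicting uniformization. A further symmetrizing iteration in the first argument --- or an appeal to the orientation convention fixed for the semilattice operations in the paragraph preceding the proposition --- guarantees $f_0(c,a) = c$ as well, establishing the thin semilattice edge $(a,c)$ and producing the desired operation $f$, which we then denote by a dot as stipulated.
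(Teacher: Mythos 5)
Your high-level plan --- iterate the uniformized operation $f_0$ in the second argument to stabilize it --- is a plausible opening move, and part (i) and the identity $f(a,f(a,b))=f(a,b)$ do come out essentially for free once $M$ is a multiple of every possible cycle length (take $M=N!$ with $N=\max_{\zA\in\cK}|\zA|$). In fact, you overcomplicate that half of (ii): you do not need the eventual cycle of $g_a$ to have length one; $g_a^M(g_a^M(b))=g_a^M(b)$ holds automatically for such $M$, regardless of the cycle length, so the ``case analysis'' ruling out longer cycles is both unnecessary for $f(a,c)=c$ and, as you have stated it, unsupported --- it is not at all clear why a majority- or affine-type edge inside the cycle would make $f_0$ ``violate its own semilattice identities.''

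The real content of the proposition is the other half of (ii): showing $f(c,a)=c$ where $c=f(a,b)$, i.e.\ that the pair $(a,c)$ is a semilattice edge witnessed by equality. This is exactly where your argument stops being a proof. Iterating $f_0(a,\cdot)$ gives you no control whatsoever over $f_0(c,a)$: that value lives in $\Sg{a,c}$ but could lie in a completely different $\th$-class for a maximal congruence $\th$ of $\Sg{a,c}$, and if the edge $ac$ has majority or affine type rather than semilattice type, then $f_0$ restricted to $\{a\fac\th,c\fac\th\}$ need not be commutative at all (it could well be the second projection, consistent with $f_0(a\fac\th,c\fac\th)=c\fac\th$ but giving $f_0(c\fac\th,a\fac\th)=a\fac\th$). ``A further symmetrizing iteration in the first argument'' is not a construction --- iterating $x\mapsto f_1(x,a)$ from $c$ runs into the same cycle-length issue and changes the output, breaking the identity $f(a,f(a,b))=f(a,b)$ you just established; and the ``orientation convention'' only fixes a direction on pairs that are \emph{already} thin semilattice edges, it cannot promote an arbitrary pair $(a,c)$ to one. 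What is actually required here is a structural argument about the simple quotient $\Sg{a,c}\fac\th$ (using that $\cK$ is closed under subalgebras and factors, smooth, and omits the unary type) showing that the absorption $f(a,c)=c$ forces the edge $ac$ to be of semilattice type and then that it can be made thin; that step is the substance of the proposition and is missing from your proposal.
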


Defining thin majority and affine edges requires a bit more work. 
A ternary term operation $g'$ of $\cK$ is said to satisfy the \emph{majority
condition} if it satisfies the identity  $g'(x,g'(x,y,y),g'(x,y,y))=g'(x,y,y)$ 
and $g'$ is a majority operation on every thick majority
edge of every algebra from $\cK$. A ternary term operation $h'$ is said 
to satisfy the \emph{minority
condition} if it satisfies the identity $h'(h'(x,y,y),y,y)=h'(x,y,y)$ and 
$h'$ is a Mal'tsev operation on every thick minority
edge of every algebra from~$\cK$.

A pair $ab$, $a,b\in\zA\in\cK$ is called a \emph{thin majority edge} if 
\begin{itemize}
\item[(*)] 
for any term operation $g'$ satisfying the majority condition
the subalgebras $\Sg{a,g'(a,b,b)},\Sg{a,g'(b,a,b)},\Sg{a,g'(b,b,a)}$
contain $b$.
\end{itemize}

Fix an operation $h$ satisfying the minority condition, in particular it 
satisfies the equation $h(h(x,y,y),y,y)=h(x,y,y)$. A pair $ab$, $a,b\in\zA\in\cK$, 
is called a \emph{thin affine edge} (with respect to $\cK$) if $h(b,a,a)=b$ and 
for every term operation $h'$ satisfying the minority condition
\begin{itemize}
\item[(**)] 
$b\in\Sg{a,h'(a,a,b)}$.
\end{itemize}
The operations $g,h$ introduced in the previous section, although they satisfy 
the majority and minority
conditions, respectively, do not 
have to satisfy any specific conditions on the set $\{a,b\}$, when
$ab$ is a thin majority or affine edge, except what follows from their 
definition. Also, both thin majority and thin affine edges are 
directed, since $a,b$ in the definition occur asymmetrically. We 
therefore can define yet another directed graph, 
$\cG'(\zA)$, in which the arcs are the thin edges of all types.

\begin{lemma}[Corollaries~25,29,33, \cite{Bulatov20:connectivity}]%
\label{lem:thin-semilattice}
Let $\zA$ be a smooth algebra.
Let $ab$ be a semilattice (majority, affine) edge, $\th$ a congruence of
$\Sg{a,b}$ that witnesses this. Then, for any $c\in a\fac\th$ there is 
$d\in b\fac\th$ such that $cd$ is a thin edge of the same type as $ab$.
\end{lemma}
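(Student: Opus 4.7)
The plan is to handle each of the three types of edge separately, since each is witnessed by a different operation ($\cdot$, $g$, or $h$), though in all three cases the strategy is the same: exhibit a concrete candidate $d \in b\fac\th$ inside $\Sg{c, b}$ and verify the defining conditions of a thin edge. The semilattice case is essentially immediate: set $d = c \cdot b$ using the dot operation of Proposition~\ref{pro:good-operation}. Since $\cdot$ is semilattice on $\{a\fac\th, b\fac\th\}$, computing in the quotient gives $d\fac\th = b\fac\th$, so $d \in b\fac\th$ and in particular $d \ne c$. Then Proposition~\ref{pro:good-operation} applied to the pair $(c, b)$ gives that $(c, c \cdot b) = (c, d)$ is a thin semilattice edge.

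For the majority case I would introduce the preorder $\sqq$ on the set $D = \Sg{c, b} \cap b\fac\th$ defined by $d' \sqq d''$ iff $d' \in \Sg{c, d''}$. Since $b \in D$, the set is non-empty, and by finiteness contains a $\sqq$-minimal element $d$. To verify condition~\textup{(*)} for the pair $(c, d)$, pick any term operation $g'$ satisfying the majority condition; each of $g'(c, d, d)$, $g'(d, c, d)$, $g'(d, d, c)$ lies in $b\fac\th$ (since $g'$ acts as a majority operation on the quotient thick edge) and in $\Sg{c, d}$, so each lies in $D$ below $d$. Minimality forces the reverse containment, placing $d$ in each of $\Sg{c, g'(c,d,d)}$, $\Sg{c, g'(d,c,d)}$, $\Sg{c, g'(d,d,c)}$, which is exactly~\textup{(*)}.

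For the affine case I would first iterate the unary map $x \mapsto h(x, c, c)$ starting at $b$; this map preserves $b\fac\th \cap \Sg{c, b}$ because in the quotient module $h$ evaluates as $b - a + a = b$, and by finiteness the iteration stabilises at a fixed point. Within the nonempty set $D' = \{d' \in b\fac\th \cap \Sg{c, b} : h(d', c, c) = d'\}$ I would then choose a $\sqq$-minimal element $d$. To verify~\textup{(**)}, given any $h'$ satisfying the minority condition, I would apply $h(\cdot, c, c)$ to $h'(c, c, d)$: the resulting element $e = h(h'(c, c, d), c, c)$ lies in $b\fac\th$ by the Mal'tsev computation on the quotient, lies in $\Sg{c, d}$, and satisfies $h(e, c, c) = e$ thanks to the minority identity $h(h(x,y,y), y, y) = h(x, y, y)$, so $e \in D'$ and $e \sqq d$. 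Minimality then yields $d \in \Sg{c, e} \sse \Sg{c, h'(c, c, d)}$, as required.

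The step I expect to be the main obstacle is the affine case, where~\textup{(**)} has to be verified after enforcing the extra constraint $h(d, c, c) = d$ needed to close the iteration. The key move is to restrict the minimality argument to the subset of fixed points of $h(\cdot, c, c)$ rather than to all of $D$, and to use the minority identity of $h$ to pull arbitrary $h'$-candidates back into this fixed-point set; without this projection step the minimality inequality would not interact correctly with condition~\textup{(**)}.
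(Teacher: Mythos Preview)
Your argument is correct in all three cases, and the paper itself gives no proof of this lemma: it is imported wholesale from \cite{Bulatov20:connectivity} as Corollaries~25, 29, 33, so there is nothing in the present paper to compare against beyond the bare statement. Your minimality-in-$\Sg{c,\cdot}$ strategy is precisely the kind of argument used in that source, so methodologically you are on the same track.

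One small slip worth fixing: in the affine case you write that the iteration of $x\mapsto h(x,c,c)$ ``by finiteness \dots\ stabilises at a fixed point''. Finiteness alone only gives eventual periodicity, not a fixed point. What actually does the work here is the minority identity $h(h(x,y,y),y,y)=h(x,y,y)$, which you invoke later anyway; it makes the map $x\mapsto h(x,c,c)$ idempotent, so $h(b,c,c)$ is already a fixed point and $D'$ is nonempty after a single application. Replace the appeal to finiteness by this observation and the argument is clean.
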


We conclude this section with a result that shows that the presence of 
types of  thick and thin edges are closely related.

\begin{prop}[Proposition~34, \cite{Bulatov20:connectivity}]%
\label{pro:thin-thick-colors}
Let $\cK$ be a finite class of smooth algebras. Then there exists an 
algebra from $\cK$ containing a 
thin semilattice (majority, affine) edge if and only if there is an
algebra in $\cK$ containing a thick edge of the same type.
\end{prop}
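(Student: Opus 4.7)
The proposition splits into two directions, one of which is nearly immediate. Given a thick edge $ab$ of type $t$ in some $\zA\in\cK$ witnessed by a maximal congruence $\th$, Lemma~\ref{lem:thin-semilattice} applied with $c=a\in a\fac\th$ produces some $d\in b\fac\th$ such that $ad$ is a thin edge of type $t$ in the same algebra $\zA\in\cK$. So the bulk of the work is in the reverse direction, from thin to thick.

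Fix a thin edge $ab$ of type $t\in\{\text{semilattice},\text{majority},\text{affine}\}$ in some $\zA\in\cK$. Using the closure of $\cK$ under subalgebras, set $\zB:=\Sg{a,b}\in\cK$. Consider the family of congruences $\mu$ of $\zB$ with $(a,b)\notin\mu$; it contains the equality relation, and any congruence below a separator is again a separator, so the family is downward-closed. Choose $\th$ maximal in this family. Because $\{a,b\}$ generates $\zB$, any congruence strictly containing $\th$ identifies $a$ with $b$ and thus, by generation, equals the full congruence. Hence $\th$ is a maximal congruence of $\zB$ and the quotient $\cC:=\zB\fac\th\in\cK$ is simple with $a\fac\th\neq b\fac\th$. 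The plan is to verify that $ab$ is a thick edge of type $t$ in $\zB$ witnessed by $\th$.

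For $t=\text{semilattice}$, the operation $f$ from Proposition~\ref{pro:good-operation} satisfies $f(a,a)=a$ and $f(a,b)=f(b,a)=f(b,b)=b$ on the thin edge, and these identities pass modulo $\th$, yielding a thick semilattice edge witnessed by $\th$. For $t=\text{majority}$, we use the uniform $g$ satisfying the majority condition; the thin condition (*) gives $b\in\Sg{a,g(a,b,b)}$ (and analogously for the two other permutations), which pushes to $b\fac\th\in\Sg{a\fac\th,g(a\fac\th,b\fac\th,b\fac\th)}$ in $\cC$. If $g(a\fac\th,b\fac\th,b\fac\th)=a\fac\th$ then the right-hand side would collapse to $\{a\fac\th\}$ by idempotency, contradicting the membership of $b\fac\th$; this already rules out the ``trivial'' behavior on the pair. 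The affine case is parallel, using $h$ satisfying the minority condition and the condition (**).

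The main obstacle is certifying that the thick edge so produced truly has type $t$, rather than merely producing some edge. For the majority case, one must argue that $g(a\fac\th,b\fac\th,b\fac\th)\in b\fac\th$ and symmetrically, i.e., exclude the possibility that the value lands on a third element of $\cC$ which generates $b\fac\th$ together with $a\fac\th$ without the pointwise majority identities holding on $\{a\fac\th,b\fac\th\}$. The plan here is to exploit the simplicity of $\cC$, the fact that $\cK$ omits type~\one, and the identity $g(x,g(x,y,y),g(x,y,y))=g(x,y,y)$ (respectively the minority identity for $h$) to perform a tame-congruence-style analysis of the prime quotient on $\{a\fac\th,b\fac\th\}$, converting the generation statements from the thin-edge definition into the pointwise identities required by the thick classification.
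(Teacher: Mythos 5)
This paper does not actually prove the proposition: it is imported as Proposition~34 of \cite{Bulatov20:connectivity}, so there is no in-paper argument to compare yours with, and your attempt has to stand on its own. The parts you work out in full are the easy parts. The thick-to-thin direction is indeed immediate from Lemma~\ref{lem:thin-semilattice}, and your reduction to a maximal congruence $\th$ of $\zB=\Sg{a,b}$ separating $a$ from $b$ is sound (the step ``any congruence collapsing $a$ and $b$ is full'' uses idempotency: the block containing both generators is a subuniverse, hence contains $\Sg{a,b}$). The semilattice case then goes through, because the pointwise identities $f(a,a)=a$, $f(a,b)=f(b,a)=f(b,b)=b$ survive the passage to the quotient.

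The genuine gap is the thin-to-thick step for the majority and affine types, which is exactly where the substance of the cited result lies; there your text only announces a ``plan.'' Condition (*) is a generation condition, not a pointwise one: from $b\fac\th\in\Sgg{\cC}{a\fac\th,\,g(a\fac\th,b\fac\th,b\fac\th)}$ you rule out the value $a\fac\th$, but in the simple quotient $\cC$ the element $g(a\fac\th,b\fac\th,b\fac\th)$ may be a third class that together with $a\fac\th$ generates everything, and you give no argument producing a term operation that is majority on the two-element set $\{a\fac\th,b\fac\th\}$; appealing to ``a tame-congruence-style analysis'' names the kind of machinery needed (structure of simple idempotent algebras, or the composition/iteration arguments of \cite{Bulatov20:connectivity}) without supplying any of it. The affine case is further still from the target: the thick affine type requires $\Sg{a,b}\fac\th$ to be (the idempotent reduct of) a module with $h$ affine on it, a global structural claim that cannot be extracted from $h(b,a,a)=b$ and (**) by the local manipulations you sketch. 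Finally, the type definitions carry exclusion clauses (majority type requires that no term operation is semilattice on $\{a\fac\th,b\fac\th\}$, affine excludes both semilattice and majority), and your construction never verifies that the edge it yields has the claimed type rather than collapsing to a different one, which is needed for the ``same type'' conclusion of the proposition.
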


\subsection{Paths and connectivity}
Let $\zA$ be a smooth algebra. A \emph{path} in $\zA$ is a sequence 
$a_0,a_1\zd a_k$ such 
that $a_{i-1}a_i$ is a thin edge for all $i\in[k]$ (note that thin edges 
are always assumed to be directed).
We will distinguish paths of several types depending on what types of
edges are allowed. If $a_{i-1}\le a_i$ for all $i\in[k]$ then the path is
called a \emph{semilattice} or \emph{s-path}. If for every $i\in[k]$
either $a_{i-1}\le a_i$ or $a_{i-1}a_i$ is a thin
affine edge then the path is called \emph{affine-semilattice} or
\emph{as-path}. Similarly, if only semilattice and thin majority edges
are allowed we have a \emph{semilattice-majority} or 
\emph{sm-path}. The path is called \emph{asm-path} when all types
of edges are allowed. If 
there is a path $a=a_0,a_1\zd a_k=b$ which is arbitrary (semilattice, 
affine-semilattice, semilattice-majority) then $a$ is said to be
\emph{asm-connected} (or \emph{s-connected}, or 
\emph{as-connected}, or \emph{sm-connected}) to $b$. We will also 
say that $a$ is \emph{connected} to $b$ if it is asm-connected. 
We denote this by $a\sqq^{asm}b$
(for asm-connectivity), $a\sqq b$, $a\sqq^{as}b$ and 
$a\sqq^{sm}b$ for s-, as-, and sm-connectivity, respectively. 

Let $\cG_s(\zA),\cG_{as}(\zA),\cG_{asm}(\zA)$ denote the digraph 
whose nodes are the elements of $\zA$, and the arcs are the thin 
semilattice edges (thin semilattice and affine edges, all thin edges, 
respectively). The strongly connected component of $\cG_s(\zA)$ 
containing $a\in\zA$ will be denoted by $\se a$. The set of strongly 
connected components of $\cG_s(\zA)$ are ordered in the natural 
way (if $a\le b$ then $\se a\le \se b$), the elements belonging to 
maximal ones will be called \emph{maximal}, and the set of all 
maximal elements from $\zA$ will be denoted by $\max(\zA)$. 

The strongly connected component of $\cG_{as}(\zA)$ containing 
$a\in\zA$ will be denoted by $\as(a)$. A maximal strongly connected 
component of this graph is called an
\emph{as-component}, an element from an as-component
is called \emph{as-maximal}, and the set of all as-maximal elements is
denoted by $\amax(\zA)$. 

Alternatively, maximal and as-maximal elements 
can be characterized as follows: an element $a\in\zA$ is 
maximal (as-maximal) if for every $b\in\zA$ such 
that $a\sqq b$ ($a\sqq^{as}b$)
it also holds that $b\sqq a$ ($b\sqq^{as}a$). 
Sometimes it will be necessary to specify what the algebra is, 
in which we consider maximal components or as-components, 
and the corresponding connectivity. 
In such cases we we will specify it by writing $\see \zA a$, 
$\as_\zA(a)$. For connectivity we will use
$a\sqq_\zA b$ and $a\sqq_\zA^{as}b$.

\begin{prop}[Corollary~11, Theorem~23, \cite{Bulatov20:maximal}]%
\label{pro:as-connectivity}
Let $\zA$ be an algebra omitting type \one. Then\\[1mm]
(1) any $a,b\in\zA$ are connected in $\cG_{asm}(\zA)$ with an oriented 
path;\\[1mm]
(2) any $a,b\in\max(\zA)$ (or $a,b\in\amax(\zA)$) are connected in 
$\cG_{asm}(\zA)$ with a directed path.
\end{prop}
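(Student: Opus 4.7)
I would prove both parts by induction on $|A|$. For the inductive step, pick a maximal congruence $\th$ of $\zA$: both the quotient $\zA\fac\th$ and each $\th$-block (a proper subalgebra, since $\zA$ is idempotent) are smaller algebras still omitting type \one, so the induction hypothesis applies to them.

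For part (1), the induction hypothesis in $\zA\fac\th$ produces an oriented asm-path of $\th$-classes $a\fac\th = B_0, B_1, \ldots, B_k = b\fac\th$. Each thin edge $B_{i-1}\to B_i$ of $\cG_{asm}(\zA\fac\th)$ lifts by Lemma~\ref{lem:thin-semilattice} to a thin edge of $\cG_{asm}(\zA)$ whose tail is any prescribed representative of $B_{i-1}$ and whose head lies in $B_i$. The remaining same-block gaps are filled by the induction hypothesis applied to the relevant block, and concatenation produces the required undirected oriented asm-path from $a$ to $b$. The base case (when $\zA$ is simple, or $|A|$ is small) reduces to the standard fact that for an idempotent algebra omitting type \one\ the thick graph $\cG(\zA)$ is connected, combined with Proposition~\ref{pro:thin-thick-colors} and Lemma~\ref{lem:thin-semilattice} to replace each thick edge by a thin one.

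For part (2), the oriented path from (1) must be upgraded to a directed one. The main device is the semilattice term $\cdot$ of Proposition~\ref{pro:good-operation}: for every $x,y$ either $x = x\cdot y$ or $x\to x\cdot y$ is a thin semilattice edge, so every element admits a directed s-path to some element of $\max(\zA)$, and analogously (combining with the minority operation $h$) a directed as-path to some element of $\amax(\zA)$. Starting from an undirected oriented path $a=a_0,\ldots,a_k=b$ with $a,b\in\max(\zA)$, I would propagate termwise by replacing $a_i$ with $a\cdot a_i$ and iterating. Maximality of $a$ keeps the new sequence inside the maximal s-component $\see\zA a$, where every s-arc is reversible by a directed s-path. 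Arcs of thin majority or affine type traversed against the intended direction are rectified using the conditions (*) and (**) from Section~3.2, which ensure that the appropriate endpoint still lies in a subalgebra generated by $a$ together with the relevant image of $g'$ or $h'$, so the propagated arc exists and points the right way.

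The principal obstacle is the orientation-reversal step in part (2). For semilattice arcs the propagation $a_i\mapsto a\cdot a_i$ immediately yields a consistently directed arc from the identities of $\cdot$; for majority and affine arcs traversed backwards one has to combine the uniform choice of operations $g',h'$ from Section~3.1 with the defining conditions (*) and (**) to manufacture a directed thin arc of the same type, possibly interleaved with extra directed thin semilattice edges. Verifying this carefully, and ensuring that iterating the propagation stabilises inside a single strongly connected component of $\cG_{asm}(\zA)$, is where the maximality (respectively as-maximality) of the endpoints is essential.
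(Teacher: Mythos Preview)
This proposition is not proved in the present paper: it is quoted verbatim from \cite{Bulatov20:maximal} (as Corollary~11 and Theorem~23 there) and used as a black box. So there is no ``paper's own proof'' to compare your sketch against.

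That said, a few comments on your outline. For part~(1), the inductive shape (quotient by a maximal congruence, induction on the quotient and on the blocks) is the natural one and is indeed how such connectivity results are organised in \cite{Bulatov20:maximal}. However, your appeal to Lemma~\ref{lem:thin-semilattice} for the lifting step is not quite right: that lemma starts from a \emph{thick} edge in $\zA$ witnessed by a congruence of $\Sg{a,b}$, not from a thin edge in a global quotient $\zA\fac\th$. You would first need to argue that a thin edge $B_{i-1}B_i$ in $\zA\fac\th$ yields a thick edge between any representatives in $\zA$ (via the restriction of $\th$ to the subalgebra they generate), and only then invoke Lemma~\ref{lem:thin-semilattice}. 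Your base case also leans on ``the standard fact that $\cG(\zA)$ is connected,'' but that is essentially the content of part~(1) itself, so some care is needed to avoid circularity.

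For part~(2), the propagation $a_i\mapsto a\cdot a_i$ does not obviously do what you want: there is no reason $a\cdot a_i$ stays in $\se a$, nor that the images of consecutive $a_i$ remain joined by thin edges of the appropriate type. The reversal of majority and affine arcs via conditions~(*) and~(**) is the genuinely delicate point (as you acknowledge), and the actual argument in \cite{Bulatov20:maximal} is substantially more involved than ``interleave with extra semilattice edges.'' Your sketch identifies the right obstacle but does not yet overcome it.
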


The graphs $\cG(\zA_),\cG_s(\zA), \cG_{as}(\zA), \cG_{asm}(\zA)$ retain 
substantial amount of crucial 
information required for solving CSPs. They witness that the omitting type 
\one\ condition and the bounded width condition hold, and also can certify 
some other useful properties. However, in general they also erase much 
information about the algebra. As an extreme example, if $\zA$ is a prime 
algebra, that is, every possible operation on its universe is a term operation 
of $\zA$, then it satisfies all the CSP related conditions on an algebra.
It has few subpowers (see the Section~\ref{sec:am}), $\CSP(\zA)$ has 
bounded width, etc. But according to the definitions graphs $\cG(\zA)$ and 
$\cG_{asm}(\zA)$ have only semilattice edges that are oriented in an 
arbitrary way. 
In particular, there is no way to know from these graphs that $\zA$ has few 
subpowers, unless one picks a very special orientation of semilattice edges.


\begin{lemma}[Corollary~18, \cite{Bulatov20:maximal}]%
\label{lem:path-extension}
Let $\rel$ be a subdirect product of $\tms\zA n$ and $I\sse[n]$.
For any $\ba\in\rel$, and an s- (as-, asm-) path 
$\vc\bb k\in\pr_I\rel$ with $\pr_I\ba=\bb_1$, there is an
s- (as-, asm-) path $\vc{\bb'}\ell\in\rel$ such that 
$\bb'_1=\ba$ and $\pr_I\bb'_\ell=\bb_\ell$. 
\end{lemma}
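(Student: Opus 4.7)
The plan is to induct on the length $k$ of the path $\bb_1,\dots,\bb_k$. The base case $k=1$ is immediate by taking $\bb'_1=\ba$. For the inductive step, assuming a lifted path of the required type has been constructed ending at some $\bc\in\rel$ with $\pr_I\bc=\bb_{k-1}$, it suffices to lift the single thin edge $\bb_{k-1}\to\bb_k$ to a path of the same type in $\rel$ starting at $\bc$ and ending at a tuple projecting to $\bb_k$; concatenating the pieces yields the desired $\bb'_1,\dots,\bb'_\ell$.

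To lift a single thin edge, fix any witness $\bd\in\rel$ with $\pr_I\bd=\bb_k$, which exists by subdirectness of $\rel$, and apply the appropriate term operation coordinatewise. In the semilattice case, where $\bb_{k-1}\le \bb_k$, form $\bc\cdot\bd\in\rel$ using the operation from Proposition~\ref{pro:good-operation}. Its $I$-projection is $\bb_{k-1}\cdot\bb_k=\bb_k$; since the $I$-projections of $\bc$ and $\bc\cdot\bd$ differ, the two tuples are unequal, and by the same proposition $(\bc,\bc\cdot\bd)$ must then be a thin semilattice edge, giving a one-step lift. In the affine case, apply the fixed minority operation $h$ and set $\bc'=h(\bd,\bc,\bc)$; the minority identity $h(h(x,y,y),y,y)=h(x,y,y)$ forces $h(\bc',\bc,\bc)=\bc'$, while $\pr_I\bc'=h(\bb_k,\bb_{k-1},\bb_{k-1})=\bb_k$ by the thin affine edge property of $\bb_{k-1}\bb_k$. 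The majority case is analogous, using $g$ satisfying the majority condition to build a candidate tuple projecting to $\bb_k$.

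The main obstacle is verifying condition $(**)$ for thin affine edges, and condition $(*)$ for thin majority edges, for the lifted pair $(\bc,\bc')$ inside $\rel$: these are universal statements ranging over all term operations $h'$ or $g'$ satisfying the respective identities, and although they hold on the $I$-projection by hypothesis they must be transferred back to $\rel$. I would handle this by working inside the subalgebra $\Sg{\bc,h'(\bc,\bc,\bc')}$ of $\rel$ (and its majority analogue); its $I$-projection contains $\Sg{\bb_{k-1},h'(\bb_{k-1},\bb_{k-1},\bb_k)}$, which by the hypothesis on $\bb_{k-1}\bb_k$ contains $\bb_k$. Hence there is a tuple in the generated subalgebra whose $I$-projection equals $\bb_k$; reconciling it with $\bc'$ on the remaining coordinates may call for a short auxiliary semilattice detour obtained from Lemma~\ref{lem:thin-semilattice} (applied to the coordinates outside $I$), which does not change the overall path type. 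Once $(\bc,\bc')$ is thus certified as a thin edge of the correct type (possibly preceded or followed by a semilattice step), the inductive step closes and the lemma follows.
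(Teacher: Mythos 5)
Your overall skeleton (induction on the path length, reducing to lifting a single thin edge from $\pr_I\rel$ into $\rel$) is the natural one, and your semilattice step is correct: with $\bd\in\rel$ over $\bb_k$, the pair $(\bc,\bc\cdot\bd)$ satisfies the defining identities coordinatewise by Proposition~\ref{pro:good-operation}, and it projects onto $\bb_k$. Note, however, that this paper does not prove the lemma at all (it is imported as Corollary~18 of \cite{Bulatov20:maximal}), so the proposal must stand on its own — and for the affine and majority steps it does not. For the affine case you correctly flag the obstacle: you must verify condition (**) for the pair $(\bc,\bc')$, $\bc'=h(\bd,\bc,\bc)$, i.e.\ that $\bc'\in\Sg{\bc,h'(\bc,\bc,\bc')}$ for \emph{every} $h'$ satisfying the minority condition. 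Your sketch only shows that $\pr_I\Sg{\bc,h'(\bc,\bc,\bc')}$ contains $\bb_k$, i.e.\ that \emph{some} tuple $\bw$ in this subalgebra projects onto $\bb_k$; this is strictly weaker than membership of the specific tuple $\bc'$. The proposed repair — a ``semilattice detour'' via Lemma~\ref{lem:thin-semilattice} on the coordinates outside $I$ — does not do the job: that lemma produces a thin edge between blocks of a congruence witnessing a thick edge, it does not connect two prescribed tuples of $\rel$ agreeing on $I$; and even if $\bw$ and $\bc'$ could be joined by such a path, that would not certify $(\bc,\bc')$ as a thin affine edge. The only alternative would be to re-route the lifted path through $\bw$ instead of $\bc'$, but then you need the step from $\bc$ to $\bw$ to decompose into thin edges of the allowed types, which is essentially the statement you are trying to prove — the argument becomes circular. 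In the as-case there is the further unaddressed constraint that any detour must avoid majority edges.

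The majority case is in worse shape, because it is not ``analogous'': a thin majority edge satisfies no identity of the form $g(b,a,a)=b$, only the memberships in (*), so $g(\bd,\bc,\bc)$ need not even project onto $\bb_k$ — you have no candidate tuple over $\bb_k$, let alone a verification of the three subalgebra conditions in (*) (again quantified over all $g'$ satisfying the majority condition) for the lifted pair. So what is actually established is the s-path case only; the as- and asm-cases require the genuinely harder single-edge lifting results through subdirect products developed in \cite{Bulatov20:maximal} (from which Corollary~18 there is derived), and these cannot be replaced by the projection-of-generated-subalgebra observation plus Lemma~\ref{lem:thin-semilattice}.
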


We will usually apply Lemma~\ref{lem:path-extension} as follows.

\begin{corollary}\label{cor:path-extension}
Let $\rel$ be a subdirect product of smooth algebras $\zA_1,\zA_2$ and let
$C_1,C_2$ be as-components of $\zA_1,\zA_2$, respectively. Then either 
$\rel\cap(C_1\tm C_2)=\eps$ or $\rel\cap(C_1\tm C_2)$ is a subdirect 
product of $C_1\tm C_2$.
\end{corollary}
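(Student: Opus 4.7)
The plan is to assume $\rel\cap(C_1\tm C_2)$ is nonempty, fix some $\ba=(a_1,a_2)$ in it, and then show that for every $c_1\in C_1$ there exists $d_2\in C_2$ with $(c_1,d_2)\in\rel$; a symmetric statement will then yield that $\rel\cap(C_1\tm C_2)$ is subdirect in $C_1\tm C_2$.

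First I would use that $C_1$ is an as-component of $\zA_1$, hence a strongly connected component of $\cG_{as}(\zA_1)$, to produce an as-path from $a_1$ to $c_1$ lying in $C_1\sse\zA_1=\pr_1\rel$ (the equality by subdirectness of $\rel$). Applying Lemma~\ref{lem:path-extension} with $n=2$, $I=\{1\}$, starting tuple $\ba$, and the as-path just constructed yields an as-path $\bb'_1,\ldots,\bb'_\ell$ in $\rel$ with $\bb'_1=\ba$ and $\pr_1\bb'_\ell=c_1$. Set $d_2=\pr_2\bb'_\ell$.

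The next step, and the main technical point, is to observe that the coordinate projection sends the above as-path in $\rel$ to a sequence $a_2=\pr_2\bb'_1,\ldots,\pr_2\bb'_\ell=d_2$ in which consecutive entries are either equal or form a thin edge of $\zA_2$ of the same type as in $\rel$. For a thin semilattice edge $\bu\le\bv$ in $\rel$, the defining equalities $f(\bu,\bu)=\bu$ and $f(\bu,\bv)=f(\bv,\bu)=f(\bv,\bv)=\bv$ project coordinatewise, so either $\bu[2]=\bv[2]$ or $\bu[2]\le\bv[2]$. For a thin affine edge $\bu\bv$ in $\rel$, the identity $h(\bv,\bu,\bu)=\bv$ and the containment $\bv\in\Sg{\bu,h'(\bu,\bu,\bv)}$ for every minority $h'$ project along $\pr_2$ and, when $\bu[2]\neq\bv[2]$, certify that $\bu[2]\bv[2]$ is a thin affine edge of $\zA_2$. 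Consequently $a_2\sqq_{\zA_2}^{as}d_2$.

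Finally, $a_2$ lies in an as-component and is therefore as-maximal in $\zA_2$, so by the characterization of as-maximality recalled just before Proposition~\ref{pro:as-connectivity}, $a_2\sqq_{\zA_2}^{as}d_2$ forces $d_2\sqq_{\zA_2}^{as}a_2$. Hence $d_2\in\as_{\zA_2}(a_2)=C_2$, and $(c_1,d_2)\in\rel\cap(C_1\tm C_2)$ is the required element. A symmetric argument handles the projection to the second coordinate. The main obstacle is the projection step: verifying that thin edges of each type are stable under coordinatewise projection requires unpacking the defining identities for semilattice and affine edges, but once this routine check is made the rest follows immediately from Lemma~\ref{lem:path-extension} and the definition of an as-component.
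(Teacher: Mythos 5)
Your proof is correct and follows the approach the paper signals by placing the corollary immediately after Lemma~\ref{lem:path-extension}: fix a witness in $\rel\cap(C_1\tm C_2)$, use strong as-connectivity of $C_1$ to get an as-path in $\pr_1\rel=\zA_1$, lift it via Lemma~\ref{lem:path-extension}, and observe that the second coordinate of the endpoint must land back in $C_2$ by as-maximality. The one step you verify by hand — that thin semilattice and thin affine edges project coordinatewise to equalities or thin edges of the same type — is exactly Lemma~\ref{lem:quotient-edge} applied to the kernel of $\pr_2$, so the argument could be compressed by citing it, but your direct unpacking of the defining identities is sound and serves the same purpose.
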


\subsection{Rectangularity}\label{sec:rectangularity}

Let $\rel\le\zA_1\tm\dots\tm\zA_k$ be a relation. Also, let 
$\tol_i(\rel)$ (or simply $\tol_i$ if $\rel$ is clear from the context),
$i\in[k]$, denote the \emph{link} tolerance
\begin{align*}
& \{(a_i,a'_i)\in\zA_i^2\mid (a_1\zd a_{i-1},a_i,a_{i+1}\zd a_k),\\
& \quad
(a_1\zd a_{i-1},a'_i,a_{i+1}\zd a_k)\in\rel, \text{ for some
$(a_1\zd a_{i-1},a_{i+1}\zd a_k)$}\}.
\end{align*}
Recall that a tolerance is said to be \emph{connected} if its transitive 
closure is the full relation. The transitive closure 
$\lnk_i(\rel)$ of $\tol_i(\rel)$, $i\in[k]$, is called the 
\emph{link congruence}, and it is, indeed, a congruence. A binary 
relation $\rel$ is said to be \emph{linked} if both $\lnk_1(\rel)$ and 
$\lnk_2(\rel)$ are total congruences.

In \cite{Bulatov20:maximal} we proved some `rectangularity' 
properties of relations with respect to as-components and link congruences.


\begin{prop}[Corollary~27, \cite{Bulatov20:maximal}]%
\label{pro:max-gen} 
Let $\rel$ be a subdirect product of $\zA_1$ and $\zA_2$, 
$\lnk_1(\rel),\lnk_2(\rel)$ the link congruences, and let $B_1,B_2$ 
be as-components of a $\lnk_1(\rel)$-block and a 
$\lnk_2(\rel)$-block, respectively, such that 
$\rel\cap(B_1\tm B_2)\ne\eps$. Then $B_1\tm B_2\sse\rel$.
\end{prop}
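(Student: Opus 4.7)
My plan is to upgrade $R' := \rel \cap (B_1 \tm B_2)$ from a nonempty subset to all of $B_1 \tm B_2$, in three stages: establish subdirectness of $R'$, then linkedness, then full rectangularity via propagation along thin edges.

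First, I would show $R'$ is a subdirect product of $B_1\tm B_2$. A routine check of the link tolerance shows that $\rel$ induces a bijection between the $\lnk_1(\rel)$-blocks meeting $\pr_1\rel$ and the $\lnk_2(\rel)$-blocks meeting $\pr_2\rel$: if $(a,b),(a,b')\in\rel$, then $(b,b')\in\tol_2(\rel)\sse\lnk_2(\rel)$. Hence the $\lnk_1$-block $C_1$ containing $B_1$ is matched to the $\lnk_2$-block $C_2$ containing $B_2$, and (since link blocks of idempotent algebras are subalgebras) $\rel\cap(C_1\tm C_2)$ is a linked subdirect product of $C_1\tm C_2$. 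Applying Corollary \ref{cor:path-extension} inside $C_1\tm C_2$ with $B_1,B_2$ viewed as as-components of $C_1,C_2$ then gives that $R'$, being nonempty, is subdirect in $B_1\tm B_2$. Next, to get linkedness of $R'$: given $a,a'\in B_1$, strong as-connectivity of $B_1$ supplies an as-path entirely within $B_1$, and starting from any $(a,b)\in R'$ Lemma \ref{lem:path-extension} lifts this path to an as-path in $R'$ terminating at some $(a',b^*)\in R'$, so $a\lnk_1(R')a'$. Symmetric reasoning makes $\lnk_2(R')$ total on $B_2$.

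The heart of the argument is the final step: deduce $R'=B_1\tm B_2$ from linkedness by propagation along thin edges. Fix $(a,b)\in R'$ and a target $a'\in B_1$, and proceed by induction on the length of an as-path $a=a_0,\ldots,a_k=a'$. At a thin semilattice edge $a_{i-1}\le a_i$, I would exploit the identity $x\cdot a_i=a_i$ together with an auxiliary $(a_i,b_i)\in R'$ (provided by subdirectness) to compute $(a_{i-1},b)\cdot(a_i,b_i)=(a_i,b\cdot b_i)$, then restore the second coordinate to $b$ via a further as-path lifted by Lemma \ref{lem:path-extension} inside $B_2$. At a thin affine edge, I would apply the minority operation $h$ and the containment property $b\in\Sg{b_i,h(b_i,b,b)}$ from the definition of thin affine edge to execute the analogous swap, coordinate-wise.

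The main obstacle is that neither $\cdot$ nor $h$ is a rectangularity polymorphism on its own: $\cdot$ only propagates values in one direction along a thin semilattice edge, and $h$ acts as Mal'tsev only on thin affine edges. Coordinating these step types along a general as-path, while using Corollary \ref{cor:path-extension} to ensure intermediate elements remain inside $B_1$ and $B_2$, requires repeated reapplication of Lemma \ref{lem:path-extension} to realign coordinates after each algebraic manipulation, together with careful bookkeeping of backward traversals when a semilattice edge is oriented against the direction of propagation.
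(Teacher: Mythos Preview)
This proposition is not proved in the present paper; it is quoted from \cite{Bulatov20:maximal} as Corollary~27 there, immediately following Lemma~25 (stated here as Lemma~\ref{lem:as-rectangularity}), and that lemma is the rectangularity engine actually used. Your first two stages are fine: restricting $\rel$ to the matched link blocks $C_1\tm C_2$ gives a linked subdirect product, and Corollary~\ref{cor:path-extension} then makes $R'=\rel\cap(B_1\tm B_2)$ subdirect in $B_1\tm B_2$.

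The gap is in your propagation step. After computing $(a_{i-1},b)\cdot(a_i,b_i)=(a_i,b\cdot b_i)$ you propose to ``restore the second coordinate to $b$'' by lifting an as-path from $b\cdot b_i$ back to $b$ via Lemma~\ref{lem:path-extension}. But that lemma only produces a lifted path ending at some $(a^*,b)$ with $a_i\sqq^{as}a^*$; it does not pin the first coordinate at $a_i$. So each ``restoration'' reintroduces in the first coordinate exactly the drift you just removed from the second, and your induction does not close. Iterating does not help: there is no reason the process terminates at the desired pair. The affine case has the same defect, and in addition your invocation of ``$b\in\Sg{b_i,h(b_i,b,b)}$'' is unjustified, since the containment clause in the definition of a thin affine edge applies to the pair $a_{i-1}a_i$, not to the arbitrary pair $b_ib$.

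What you are missing is precisely Lemma~\ref{lem:as-rectangularity}: from $(a,c),(a,d)\in\rel$ with $cd$ a thin semilattice or affine edge and $ab$ \emph{any} thin edge, one gets $(b,d)\in\rel$. This moves one coordinate along a thin edge while holding the other \emph{exactly} fixed, which is what your scheme cannot do with $\cdot$, $h$, and path-lifting alone. With that lemma in hand, one first shows $\{a_0\}\tm B_2\sse\rel$ by walking an as-path in $B_2$ and, at each step, pulling the needed witness over to $a_0$ along an as-path in $B_1$; then one repeats the argument with the roles of the coordinates swapped.
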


\section{Algebras with graphs of restricted types}\label{sec:restricted}

We start with showing that finite algebras in the variety generated by an
algebra $\zA$ can only contain edges of the types already in $\zA$. 
                                                                                                                               
Let $T\sse\{\text{semilattice, majority, affine}\}$. An algebra $\zA$ is said 
to be \emph{$T$-restricted} if every edge of $\zA$ has a type from $T$.
Note that if we set $\cK$ to be the class of all factor algebras of subalgebras
of $\zA$ then by Proposition~\ref{pro:thin-thick-colors} it makes no difference
whether we restrict the set of types of thick or thin edges.

\begin{theorem}\label{the:set-preservation}
Let $T\sse\{\text{semilattice, majority, affine}\}$ and $\cK$ a finite 
collection of similar smooth $T$-restricted algebras containing no edges 
of the unary type. Then every finite algebra from the variety generated 
by $\cK$ is $T$-restricted.
\end{theorem}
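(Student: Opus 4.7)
The plan is to verify closure of the class of $T$-restricted algebras under the three operations of Birkhoff's HSP theorem: taking subalgebras, finite direct products, and factor algebras. Since every finite algebra in the variety generated by $\cK$ arises as a factor algebra of a subalgebra of a finite direct product of algebras from $\cK$, this closure suffices.

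Closure under subalgebras is immediate. For $\zB \le \zA$ and $a, b \in \zB$, the subalgebra $\Sg{a, b}$, its congruence lattice, and the action of any term operation restricted to it are computed identically in $\zB$ and in $\zA$; hence any maximal congruence $\th$ and witnessing term operation $f$ of an edge $ab$ in $\zB$ also witness the edge in $\zA$ with the same type. For factor algebras, I apply the correspondence theorem: given an edge $(a/\eta)(b/\eta)$ of $\zA/\eta$ witnessed by a maximal congruence $\bar\th$ of $\Sg{a/\eta, b/\eta} \cong \Sg{a, b}/\eta'$ (where $\eta' = \eta \cap \Sg{a,b}^2$) and term operation $\bar f$, I lift $\bar\th$ to a maximal congruence $\th$ of $\Sg{a, b}$ in $\zA$ containing $\eta'$; the term operation $f$ of $\zA$ underlying $\bar f$ then witnesses an edge $ab$ of $\zA$ with the same type.

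The substantive case is closure under finite direct products. By induction on arity, this reduces to the binary case $\zA_1 \times \zA_2$ with both factors $T$-restricted. Let $((a,b), (c,d))$ be an edge of the product of type $t$, witnessed by a maximal congruence $\alpha$ of $\zS := \Sg{(a,b),(c,d)}$ and a term operation $f$. Then $\zS$ is a subdirect product of $\zS_1 := \Sg{a,c} \le \zA_1$ and $\zS_2 := \Sg{b,d} \le \zA_2$, with projection kernels $\eta_1, \eta_2 \in \Con \zS$ satisfying $\eta_1 \cap \eta_2 = 0_\zS$. In the easy case where $\eta_i \sse \alpha$ for some $i$, the congruence $\alpha$ descends to a maximal congruence of $\zS/\eta_i \cong \zS_i$, and the edge descends to an edge of $\zS_i \le \zA_i$ of type $t$, giving $t \in T$ by $T$-restrictedness of $\zA_i$.

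The remaining skew case, where $\alpha \vee \eta_i = 1_\zS$ for both $i$ by the maximality of $\alpha$, is the main obstacle. Here the simple quotient $\zS/\alpha$ does not obviously descend to either factor. I expect, however, that a version of Fleischer's lemma for simple quotients of subdirect products applies: using the subdirect embedding $\zS/(\alpha \cap \eta_1) \hookrightarrow \zS_1 \times \zS/\alpha$, whose projection kernels $\eta_1/(\alpha \cap \eta_1)$ and $\alpha/(\alpha \cap \eta_1)$ meet at $0$ and join at $1$, together with the covering $\alpha \prec 1_\zS$ in $\Con \zS$ forced by maximality, one should obtain that $\zS/\alpha$ is isomorphic to a simple quotient $\zS_1/\psi_1$ for some maximal $\psi_1 \in \Con \zS_1$, with the isomorphism transporting $f$ to a witness of an edge $ac$ of $\zS_1$ of type $t$. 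Then $T$-restrictedness of $\zA_1$ forces $t \in T$, completing the proof.
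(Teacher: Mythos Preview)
Your handling of subalgebras and factor algebras is fine and essentially matches the paper. The gap is in the direct-product step, specifically in what you call the ``skew case''. Your appeal to a Fleischer-type argument is not justified: from $\bar\eta_1 \cap \bar\alpha = 0$, $\bar\eta_1 \vee \bar\alpha = 1$, and $\bar\alpha \prec 1$ in $\Con(\zS/(\alpha\cap\eta_1))$ it does \emph{not} follow in general that $\zS/\alpha$ is isomorphic to a quotient of $\zS_1$. That conclusion would require congruence modularity (or at least permutability in the relevant interval), which is not available under the hypotheses; idempotent algebras omitting type~\one\ need not be congruence modular. Your own phrase ``I expect'' correctly flags that this step is unproved, and the argument as written does not close it.

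The paper's proof avoids the skew case altogether by a different device. Working directly in the $n$-ary product, one may replace the generators $\ba,\bb$ by any representatives of the same $\th$-classes without changing the edge type; the paper chooses representatives that \emph{maximise} the agreement set $I=\{i:\ba[i]=\bb[i]\}$. Idempotence forces every element of $\Sg{\ba,\bb}$ to agree with $\ba$ on $I$, so for each $i\notin I$ the projections of distinct $\th$-blocks to coordinate $i$ are disjoint (semilattice/majority case), or else one can strictly enlarge $I$, contradicting maximality (affine case). Either way one obtains $\eta_i\sse\th$ for some $i$, which is exactly your ``easy case'', and the edge descends to $\zA_i$ with the same type. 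Specialised to your binary reduction, this maximisation trick is precisely what eliminates the skew case, and it is the idea your argument is missing.
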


\begin{proof}
Every subalgebra of a $T$-restricted algebra is $T$-restricted, as it follows 
from the definition of types of edges. Let $\zA=\zA_1\tm\dots\tm\zA_n$ 
where all $\vc\zA n$ are $T$-restricted. Suppose there is an edge $\ba\bb$ 
in $\zA$ of type $z\in\{\text{semilattice, majority, affine}\}-T$, and $\th$ 
is the congruence of $\zB=\Sg{\ba,\bb}$ witnessing that. Let 
$I(\ba',\bb')=\{i\in[n]\mid \ba'[i]=\bb'[i]\}$ for $\ba',\bb'\in\zA$. As is easily 
seen, for any $\ba'\in\ba\fac\th, \bb'\in\bb\fac\th$, the congruence 
$\th'=\th\cap\zD^2$ of $\zD=\Sg{\ba',\bb'}$ witnesses that $\ba'\bb'$
is an edge of $\zA$ of the same type as $\ba\bb$. Therefore, tuples 
$\ba,\bb$ can be assumed to be such that $I=I(\ba,\bb)$
is maximal among pairs $\ba',\bb'$ with $\ba'\in\ba\fac\th$, $\bb'\in\bb\fac\th$. 
Then for any $\bc\in\zB$ and any $i\in I$, $\bc[i]=\ba[i]$. 

Take $i\in[n]-I$ and set $A'=\{\ba'[i]\mid \ba'\in\ba\fac\th\}$ and 
$B'=\{\bb'[i]\mid \bb'\in\bb\fac\th\}$. By the choice of $\ba,\bb$, 
$A'\cap B'=\eps$. We argue that this means that the projection $\eta$ of $\th$
on the $i$th coordinate, that is, the congruence of $\zC=\pr_i\zB$ given by 
the transitive closure of 
$$
\tol_i = \{(a,b)\mid \text{ for some $\bc,\bd\in\zB$,
$a=\bc[i]$, $b=\bd[i], (\bc,\bd)\in\th$}\}
$$
is nontrivial. Indeed, if $\ba\bb$ is a semilattice or majority edge, then 
$\th$ has only two congruence blocks, whose restrictions on $\zC$ are
$A',B'$, which are disjoint. If $\ba\bb$ is affine, then suppose there exist 
$\bc,\bd\in\zB$ such that $(\bc,\bd)\not\in\th$ but $\bc[i]=\bd[i]$. Then
we replace $\ba,\bb$ with $\bc,\bd$: $\bc\bd$ is an affine edge and 
$I(\ba,\bb)\subset I(\bc,\bd)$.
Every term operation that is semilattice, majority or
affine on $\zB\fac\th$ is semilattice, majority, or affine on $\zC\fac\eta$,
as well. Since $\zC$ is generated by $\ba[i],\bb[i]$, this pair is an edge of 
$\zA_i$ of type $z$, a contradiction.

Now suppose that $\zA$ is $T$-restricted and $\zB=\zA\fac\al$ for some 
congruence $\al$. Let $ab$, $a,b\in\zB$, be an edge of type 
$z\in\{\text{semilattice,}\lb \text{majority, affine}\}$ and $\th$ a maximal 
congruence of $\zC=\Sgg{\zB}{a,b}$ witnessing that. We will find 
$a',b'\in\zA$ such that $a'b'$ is an edge of $\zA$ of type $z$, see also 
Lemma~16 from \cite{Bulatov20:maximal}. Let
$\zC'=\bigcup_{c\in\zC}c$ (elements of $\zC$ are subsets of $\zA$), 
$\al'=\al\cap\zC'^2$,
and $\th'=\al'\join\th$, a congruence of $\zC'$. Choose $a',b'\in\zC'$ such that
$a'\in a$ and $b'\in b$. Let $\th''$ be the restriction of $\th'$ on 
$\zC''=\Sgg{\zA}{a',b'}$. Then clearly, $\zC''\fac{\th''}$
is isomorphic to $\zC\fac\th$, and therefore $\th''$ witnesses that $a'b'$ is an 
edge of type $z$ in $\zA$.
\end{proof}

\section{Affine and majority: few subpowers}\label{sec:am}

\subsection{Few subpowers}\label{sec:few-subpowers}

We call algebras without semilattice edges \emph{semilattice free}.
In this section we prove two results that relate semilattice free algebras to algebras 
with the property to have few subpowers. The 
few subpowers property has been introduced in \cite{Berman10:varieties}. 
Let $\zA$ be a finite algebra. Then $s_\zA(n)$ denotes the logarithm (base 2) 
of the number of subalgebras of $\zA^n$; and $g_\zA(n)$ is the least 
number $k$ such that for every subalgebra $\zB$ of $\zA^n$, $\zB$ has a 
generating set containing at most $k$ elements. Algebra $\zA$ is said to have 
\emph{few subpowers} if $s_\zA(n)$ is bounded by a single exponential
function in $n$, that is by $2^{O(p(n))}$, where $p$ is a polynomial in $n$. 
Note that the number of all subsets of $\zA^n$ is $|\zA|^{|\zA|^n}$.

Having few subpowers can be characterized by the presence of an 
\emph{edge} term \cite{Berman10:varieties}. A term operation $f$ in $k+1$ 
variables is called an edge term if the following $k$ identities are satisfied:
\begin{eqnarray*}
f(y,y,x,x,x\zd x) &=& x\\
f(y,x,y,x,x\zd x) &=& x\\
f(x,x,x,y,x\zd x) &=& x\\
f(x,x,x,x,y\zd x) &=& x\\
&\vdots&\\
f(x,x,x,x,x\zd y) &=& x.
\end{eqnarray*}

\begin{theorem}[\cite{Berman10:varieties}]\label{the:edge}
For a finite algebra $\zA$ the following conditions are equivalent:\\[1mm]
(a) $\zA$ has few subpowers,\\[1mm]
(b) the variety generated by $\zA$ has an edge term,\\[1mm]
(c) $g_\zA$ is bounded by a polynomial.
\end{theorem}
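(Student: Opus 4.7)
The plan is to prove the cycle $(c) \Rightarrow (a) \Rightarrow (b) \Rightarrow (c)$. The implication $(c) \Rightarrow (a)$ is a straightforward counting argument: if every subalgebra $\zB \le \zA^n$ admits a generating set of size at most $p(n)$, then specifying $\zB$ amounts to choosing an ordered tuple of at most $p(n)$ elements of $\zA^n$, so the number of such subalgebras is at most $|\zA|^{n \cdot p(n)}$. Thus $s_\zA(n) \le n \cdot p(n) \log_2 |\zA|$, which is singly exponential in $n$, as required.

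For $(b) \Rightarrow (c)$, let $f$ be a $(k+1)$-ary edge term of the variety generated by $\zA$. Given a subalgebra $\zB \le \zA^n$, I would construct a polynomial-size generating set as follows. For each pair of coordinates $(i,j)$ with $i < j$ and each ``type'' of disagreement realizable in $\pr_{\{i,j\}} \zB$, I would store one distinguished element of $\zB$ as a witness. The edge identities
\begin{align*}
f(y,y,x,x,\dots,x) &= x, & f(y,x,y,x,\dots,x) &= x, \\
f(x,x,x,y,\dots,x) &= x, & \dots \quad f(x,x,x,x,\dots,y) &= x
\end{align*}
then allow one to realize any target $\bb \in \zB$ iteratively: starting from a witness agreeing with $\bb$ on a growing coordinate set, each application of $f$ with the appropriate witnesses plugged into its arguments eliminates one coordinate of disagreement while preserving the rest. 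A bookkeeping argument bounds the number of required witnesses by $O(n^k \cdot |\zA|^{O(1)})$, which gives $g_\zA(n) = O(n^k)$.

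The remaining implication $(a) \Rightarrow (b)$ is the main obstacle, and it is the heart of \cite{Berman10:varieties}. The strategy is contrapositive: assuming the variety of $\zA$ has no edge term of any arity, one exhibits super-singly-exponentially many distinct subalgebras of $\zA^n$. Following \cite{Berman10:varieties}, one replaces ``edge term'' by the equivalent notion of a \emph{cube term}, and shows that failure of a cube term at arity $k+1$ produces, for each sufficiently large $n$, an obstruction relation on $\zA^n$ with incompatible witnesses. Combining these obstructions across different arities and coordinates, one builds a family of pairwise distinct subpowers whose cardinality grows faster than $2^{O(p(n))}$ for any polynomial $p$.

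The delicate step will be ensuring that the constructed obstruction subalgebras are genuinely pairwise distinct — naive constructions tend to collapse because different term failures may produce coinciding relations. The resolution in \cite{Berman10:varieties} is a clone-theoretic pigeonhole argument: only finitely many $(k+1)$-ary operations are compatible with a given relation, so if few subpowers existed then some compatible operation would have to satisfy all of the edge identities simultaneously. Since this direction is not invoked in the present paper, I would simply cite Theorem~\ref{the:edge} as stated and refer the reader to \cite{Berman10:varieties} for the full construction.
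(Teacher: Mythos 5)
The paper does not give a proof of Theorem~\ref{the:edge}: it is imported verbatim from \cite{Berman10:varieties} and used as a black box. So there is no ``paper's own proof'' to compare against, and your ultimate conclusion --- that one should simply cite \cite{Berman10:varieties} --- is exactly what the paper does.

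That said, a few remarks on your sketches of the two easy directions, which are broadly sound. In $(c)\Rightarrow(a)$ your count is correct, but the last sentence has a small slip: $s_\zA(n)\le n\,p(n)\log_2|\zA|$ shows that $s_\zA(n)$ (the \emph{logarithm} of the number of subpowers) is bounded by a polynomial, which is precisely the few-subpowers condition; it is the number of subpowers $2^{s_\zA(n)}$, not $s_\zA(n)$ itself, that is singly exponential. In $(b)\Rightarrow(c)$ your high-level plan --- a signature of witnessed pairwise ``disagreements'' together with the edge identities used to absorb one coordinate of disagreement at a time --- is indeed the shape of the argument in \cite{Berman10:varieties}, and it is also the template the present paper reuses in Section~\ref{sec:s-free-few} (signatures, representations, and Theorem~\ref{the:semilattice-free-few}); making the ``bookkeeping argument'' precise requires the notion of an index/fork for each coordinate rather than merely pairs $(i,j)$, but your degree bound $O(n^k)$ is of the right order. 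For $(a)\Rightarrow(b)$ you are right that this is the substantive direction and that citing \cite{Berman10:varieties} is appropriate; your description of the contrapositive via cube terms and a clone-theoretic pigeonhole is a fair characterization of that proof, though it is of course not reconstructed here.
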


In this section we will need the property of a finite collection of algebras to 
have few subproducts. More precisely, let $\cK$ be a finite set of 
similar algebras. Let $s_\cK(n)$ be the maximal number of subalgebras of 
a direct product $\zA_1\tm\dots\tm\zA_n$, where $\vc\zA n\in\cK$ are not 
necessarily different. Also, let $g_\cK(n)$ be the least number $k$ such that 
for every subalgebra $\zB$ of $\zA_1\tm\dots\tm\zA_n$ for any 
$\vc\zA n\in\cK$, $\zB$ has a 
generating set containing at most $k$ elements. Set $\cK$ is said to have 
\emph{few subproducts} if $s_\cK(n)$ is bounded by $2^{O(p(n))}$, 
where $p$ is a polynomial in $n$. 
The next statement easily follows from Theorem~\ref{the:edge}.

\begin{corollary}\label{cor:edge}
For a finite set of finite idempotent algebras $\cK$ the following conditions 
are equivalent:\\[1mm]
(a) $\cK$ has few subproducts,\\
(b) the variety generated by $\cK$ has an edge term,\\
(c) $g_\cK$ is bounded by a polynomial.
\end{corollary}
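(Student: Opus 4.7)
The plan is to reduce the multi-algebra statement to the single-algebra statement of Theorem~\ref{the:edge} by passing to the product algebra $\zA=\prod_{\zB\in\cK}\zB$, and then transferring each of conditions (a), (b), (c) back and forth between $\cK$ and $\zA$ with only a polynomial blowup.

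First I would observe that $\Var(\cK)=\Var(\zA)$. The inclusion $\Var(\zA)\sse\Var(\cK)$ is immediate from $\zA\in\Var(\cK)$. For the reverse inclusion I use idempotence: fix representatives $c_\zB\in\zB$ for each $\zB\in\cK$; then for each $\zB_0\in\cK$ the set $\{c_{\zB_1}\}\tm\dots\tm\zB_0\tm\dots\tm\{c_{\zB_m}\}$ is a subuniverse of $\zA$ (because each singleton of an idempotent algebra is a subuniverse), and this subalgebra is isomorphic to $\zB_0$. Hence every member of $\cK$ lies in $\Var(\zA)$, so $\Var(\cK)=\Var(\zA)$. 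In particular, (b) for $\cK$ is the same as (b) for $\zA$.

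Next I would show that subproducts of $\cK$ correspond, up to embedding with polynomial overhead, to subpowers of $\zA$. Given $\vc\zB n\in\cK$, the embedding $\zB_1\tm\dots\tm\zB_n\hookrightarrow\zA^n$ that sends $(b_1\zd b_n)$ to the $n$-tuple whose $j$th component is $b_j$ in the $\zB_j$-coordinate and $c_\zD$ in every other coordinate $\zD\ne\zB_j$ is a homomorphism (by idempotence, each output coordinate depends only on a single input coordinate together with fixed constants). Consequently every subalgebra of $\zB_1\tm\dots\tm\zB_n$ is isomorphic to a subalgebra of $\zA^n$, giving $s_\cK(n)\le s_\zA(n)$ and $g_\cK(n)\le g_\zA(n)$. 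Conversely, $\zA^n=(\zB_1\tm\dots\tm\zB_m)^n$ is, after permuting coordinates, literally a product of $mn$ members of $\cK$, so $s_\zA(n)\le s_\cK(mn)$ and $g_\zA(n)\le g_\cK(mn)$. Since $m=|\cK|$ is a fixed constant, polynomial bounds for $\cK$ translate into polynomial bounds for $\zA$ and vice versa.

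Combining the two reductions, each of (a), (b), (c) for $\cK$ is equivalent to the corresponding condition for the single algebra $\zA$, and then Theorem~\ref{the:edge} applied to $\zA$ closes the loop. I do not expect a serious obstacle here; the only point that has to be handled with care is ensuring that the auxiliary embedding of $\zB_1\tm\dots\tm\zB_n$ into $\zA^n$ is a genuine homomorphism of algebras, which is exactly where idempotence of the members of $\cK$ is used (and is the reason the corollary is stated for idempotent algebras).
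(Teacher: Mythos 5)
Your proposal is correct and follows essentially the same route as the paper's proof: pass to the product algebra $\zA=\prod_{\zB\in\cK}\zB$, use idempotence to view $\zB_1\tm\dots\tm\zB_n$ as a subalgebra of $\zA^n$ and $\zA^n$ as a product of $|\cK|\cdot n$ members of $\cK$, and then invoke Theorem~\ref{the:edge}. You merely spell out a few details the paper leaves implicit (the equality of varieties via singleton-coordinate embeddings and the transfer of the $g$-function bounds), which is fine.
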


\begin{proof}
Let $\zA=\prod_{\zB\in\cG}\zB$. Since the variety generated by $\zA$ 
equals that generated by $\cK$, it suffices to prove that $\cK$ has few 
subproducts if and only if $\zA$ has few subpowers. Suppose $\zA$ has few subpowers. 
Take $\zB=\zA_1\tm\dots\tm\zA_n$ with $\vc\zA n\in\cK$. Observe that, 
since all the algebras from $\cK$ are idempotent, $\zB$ can be viewed as a 
subalgebra of $\zA^n$, and therefore $s_\cK(n)\le s_\zA(n)$. 
On the other hand $\zA^n$ can be viewed as a product of $|\cK|\cdot n$ 
algebras from $\cK$. Hence, $s_\zA(n)\le s_\cK(|\cK|\cdot n)$, and
therefore is also bounded by $2^{O(p(n))}$, where $p$ is a polynomial.
\end{proof}

\subsection{Semilattice free algebras have few subpowers}%
\label{sec:s-free-few}

Firstly, we observe a simple corollary of 
Proposition~\ref{pro:as-connectivity}(2). By this proposition any two maximal 
elements are connected by a directed asm-path. Since semilattice free 
algebras contain no semilattice edges, every element in such algebras is 
maximal, and every asm-path is a path containing only thin affine 
and majority edges.

\begin{corollary}\label{cor:sf-strongly-connected}
Let $\zA$ be a smooth semilattice free algebra. Then any $a,b\in\zA$ are 
connected with a thin directed path containing only affine and majority 
edges.
\end{corollary}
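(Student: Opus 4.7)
The plan is to extract the conclusion directly from Proposition~\ref{pro:as-connectivity}(2), and the whole argument rests on observing that semilattice freeness collapses the s-connectivity relation to equality. First, I would unpack what ``semilattice free'' forces on the various graphs. A thin semilattice edge is, by definition, a special kind of semilattice edge, so if $\zA$ has no semilattice edges at all, it has no thin semilattice edges either. Consequently the digraph $\cG_s(\zA)$, whose arcs are exactly the thin semilattice edges, has no arcs. Every singleton $\{a\}$ is therefore a strongly connected component of $\cG_s(\zA)$, and this component is trivially maximal in the induced order on components. Hence $\max(\zA)=\zA$.

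Next I would simply invoke Proposition~\ref{pro:as-connectivity}(2). Since $\zA$ omits type \one\ (a standing assumption of the paper for all algebras under consideration), and since every element is maximal, any pair $a,b\in\zA$ lies in $\max(\zA)$ and is therefore connected by a directed path in $\cG_{asm}(\zA)$. By the definition of $\cG_{asm}(\zA)$, the arcs of such a path are thin edges, each of which is semilattice, majority, or affine. Since no thin semilattice edges exist in $\zA$, every arc of the path must in fact be a thin majority or thin affine edge, which is exactly the claim.

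There is no real obstacle here; the only thing one must be careful about is that ``semilattice free'' is a condition on the existence of (thick) semilattice edges, not directly on thin ones, so one has to note either the trivial containment ``thin semilattice $\Rightarrow$ semilattice'' or, if preferred, appeal to Proposition~\ref{pro:thin-thick-colors} applied to the class of subalgebras and factor algebras of $\zA$. Either route immediately gives that no thin semilattice edges exist, which is all that the argument needs.
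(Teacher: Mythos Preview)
Your proposal is correct and follows essentially the same route as the paper: the paper's argument is exactly that semilattice freeness makes every element maximal, so Proposition~\ref{pro:as-connectivity}(2) yields a directed asm-path between any two elements, which necessarily uses only affine and majority thin edges. Your additional care in distinguishing thick from thin semilattice edges is a nice clarification but not a departure from the paper's reasoning.
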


We now show that every finite collection of semilattice free algebras has 
few subproducts. We use the definition of signature and representation quite similar
to \cite{Berman10:varieties}, except instead of \emph{minority index} we 
use thin affine edges. Let $\rel$ be a subdirect product of $\vc\zA n$, let 
every $\zA_i$ be semilattice free, and let $\Af(\zA_i)$ denote the set of thin 
affine edges of $\zA_i$ (it also contains all pairs of the form $(a,a)$). The 
\emph{signature} is the set 
\begin{eqnarray*}
\Sig(\rel) &=& \{(i,a,b)\mid i\in[n], (a,b)\in\Af(\zA_i), \exists 
\text{ $\ba,\bb\in\rel$}\\
&& \text{with $\ba[i]=a$, $\bb[i]=b$, and $\pr_{[i-1]}\ba=\pr_{[i-1]}\bb$}\}.
\end{eqnarray*}
Note that the pair $a,b$ in this kind of a signature is ordered, because thin
affine edges are directed.
A set of tuples $\rel'\sse\rel$ is a \emph{representation} of $\rel$ if\\[2mm]
(1) for each $(i,a,b)\in\Sig(\rel)$ there are $\ba,\bb\in\rel'$
such that $\ba[i]=a$, $\bb[i]=b$, and $\pr_{[i-1]}\ba=\pr_{[i-1]}\bb$;\\
(2) for each $I\sse[n]$, $|I|\le3$, and every $\ba\in\pr_I\rel$ there is 
$\bb\in\rel'$ such that $\pr_I\bb=\ba$.\\[2mm]
As is easily seen, every representation $\rel'$ of $\rel$ contains a subset 
$\rel''\sse\rel'$ which is also a representation and has size at most
$$
2|\Sig(\rel)|+{n\choose3}\cdot\max\{|\zA_i|\cdot|\zA_j|\cdot|\zA_k|
\mid i,j,k\in[n]\}.
$$

We will need the following lemmas. 

\begin{lemma}[Lemmas 27,~35, \cite{Bulatov20:connectivity}]\label{lem:affine-sl}
Let $\zA_1,\zA_2,\zA_3$ be similar idempotent algebras omitting 
type \one.\\[1mm]
(1) Let $ab$ and $cd$ be thin edges of different types in $\zA_1,\zA_2$, resp. 
Then there is a term operation $r$ with $r(b,a)=b$, $r(c,d)=d$.\\[1mm]
(2) Let $a_1b_1$, $a_2b_2$, and $a_3b_3$ be thin majority edges in 
$\zA_1,\zA_2,\zA_3$, respectively. Then there is a term operation $g'$ 
such that $g'(a_1,b_1,b_1)=b_1$, $g'(b_2,a_2,b_2)=b_2$, 
$g'(b_3,b_3,a_3)=b_3$.
\end{lemma}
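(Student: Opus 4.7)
The plan for both parts is to exhibit a subdirect product in which the desired term operation arises as a witness of membership of a designated target tuple. That is, if $\rel=\Sg{\vc\bt m}$ is a subalgebra of $\zA_1\tm\dots\tm\zA_k$ and we can show that a target $\bt^*$ belongs to $\rel$, then there is a term $t(\vc xm)$ with $t(\vc\bt m)=\bt^*$; reading this equation coordinate-wise yields exactly the required identities for $t$ on the individual algebras.

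For part (1), I would take $\rel=\Sg{(b,c),(a,d)}\sse\Sg{a,b}\tm\Sg{c,d}$ and aim to show $(b,d)\in\rel$; any witnessing binary term is the required $r$. The argument splits by the pair of types of $ab$ and $cd$, giving (up to symmetry) three mixed cases: semilattice--majority, semilattice--affine, and majority--affine. Whenever one edge is semilattice I would apply the operation $\cdot$ from Proposition~\ref{pro:good-operation}, which gives $b\cdot a=b$ on the semilattice side while shifting the other coordinate inside $\Sg{c,d}$; for the non-semilattice side I would invoke the term satisfying the majority or minority condition. The main difficulty is that for thin majority and affine edges the conditions $(*)$ and $(**)$ only guarantee subalgebra containment of the target, not exact equalities. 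To bridge this gap I would compose the above operations inside $\rel$ iteratively: once the first coordinate of a tuple in $\rel$ is fixed at $b$, rectangularity (Proposition~\ref{pro:max-gen}) applied to the $\lnk$-congruence blocks, together with subdirectness, forces every element of the as-component of the second coordinate containing $d$ to pair with $b$ in $\rel$, so that $(b,d)$ is reached.

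For part (2), I would set $\bt_1=(a_1,b_2,b_3)$, $\bt_2=(b_1,a_2,b_3)$, $\bt_3=(b_1,b_2,a_3)$ and $\rel=\Sg{\bt_1,\bt_2,\bt_3}\sse\zA_1\tm\zA_2\tm\zA_3$; any ternary term witnessing $(b_1,b_2,b_3)\in\rel$ is exactly the desired $g'$, since reading coordinate-wise gives the three prescribed identities. Applying a term $m$ satisfying the majority condition uniformly across $\cK$ (whose existence is recalled in Section~\ref{sec:edges}), we obtain $\bc=m(\bt_1,\bt_2,\bt_3)\in\rel$, and condition $(*)$ for thin majority edges gives $b_i\in\Sg{a_i,\bc[i]}$ for each $i$. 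The principal obstacle is precisely this coordinate-wise containment: a priori, different terms may be needed to produce $b_i$ from $(a_i,\bc[i])$ in each of the three algebras, whereas the lemma demands a single term that works on all three coordinates simultaneously. I would overcome this by iterating $m$ together with the generators inside $\rel$ and using rectangularity (Proposition~\ref{pro:max-gen}) to glue the per-coordinate witnesses into a uniform one, with induction on the length of the shortest generating word for $b_i$ in $\Sg{a_i,\bc[i]}$, maximised over $i$.
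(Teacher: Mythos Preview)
This lemma is not proved in the present paper at all: it is quoted verbatim as Lemmas~27 and~35 of \cite{Bulatov20:connectivity} and used as a black box in the proof of Theorem~\ref{the:semilattice-free-few}. There is therefore no ``paper's own proof'' to compare your proposal against.

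That said, your overall framework is the right one and is exactly how such statements are established: encode the desired equalities as membership of a target tuple in the subalgebra generated by the input tuples, and then argue membership. Where your sketch is thin is precisely at the step you flag as the ``main difficulty.'' Your appeal to Proposition~\ref{pro:max-gen} is not obviously justified: that proposition requires $b$ and $d$ (respectively $b_1,b_2,b_3$) to lie in as-components of $\lnk$-blocks of the generated relation, and nothing in the hypotheses of a thin majority or thin affine edge places the endpoints in $\amax$ of anything. In particular, for the majority--affine case of part~(1) and for part~(2) as written, you have not explained why the as-component containing the coordinate you want is the one that rectangularity actually delivers, nor why the relevant $\lnk$-blocks are large enough. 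The genuine work in \cite{Bulatov20:connectivity} is to exploit the defining conditions $(*)$ and $(**)$ of thin edges directly (these are universally quantified over \emph{all} terms satisfying the majority/minority condition, which is what lets one absorb the ``different terms in different coordinates'' problem), rather than to route through the as-maximality machinery of \cite{Bulatov20:maximal}. Your inductive ``gluing'' idea in part~(2) points in this direction but would need to be made precise to stand on its own.
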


\begin{lemma}[Lemma~25, \cite{Bulatov20:maximal}]\label{lem:as-rectangularity} 
Let $\rel$ be a subalgebra of $\zA_1\tm\zA_2$ and let $(a,c)\in\rel$. 
For any $b\in\zA_1$ such that $ab$ is thin edge, 
and any $d\in\zA_2$ such that $cd$ is a thin semilattice or affine edge and 
$(a,d)\in\rel$, it holds $(b,d)\in\rel$.
\end{lemma}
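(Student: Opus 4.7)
The plan is to exhibit $(b,d)$ as the coordinate-wise image of tuples already lying in $\rel$ under a term operation tailored to the types of the thin edges $ab$ and $cd$. The underlying principle is that each thin edge type is witnessed by a term that ``moves'' the first endpoint to the second on that coordinate while behaving as the identity on the other.

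I would split into cases on the types of $ab$ and $cd$. For any case in which the two types differ --- in particular whenever $ab$ is thin majority, since $cd$ is restricted to be thin semilattice or thin affine --- Lemma~\ref{lem:affine-sl}(1) supplies a binary term $r$ with $r(b,a)=b$ and $r(c,d)=d$, so $r((b,c),(a,d))=(b,d)$. In the semilattice-semilattice case the dot operation of Proposition~\ref{pro:good-operation} satisfies $b\cdot a=b$ and $c\cdot d=d$, and hence $(b,c)\cdot(a,d)=(b,d)$. In the affine-affine case the fixed minority operation $h$ satisfies $h(b,a,a)=b$ by the definition of a thin affine edge and $h(c,c,d)=d$ because $h$ is Mal'tsev on the thick affine edge $\{c,d\}$; thus $h((b,c),(a,c),(a,d))=(b,d)$. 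No majority-majority case arises, by the hypothesis on $cd$.

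The main obstacle is that every case uses the auxiliary tuple $(b,c)\in\rel$, which is not literally among the given hypotheses $(a,c),(a,d)\in\rel$. I would obtain it via Lemma~\ref{lem:path-extension}: the thin edge $a\to b$ of $\zA_1$ lifts along $(a,c)$ to a path inside $\rel$ ending at some $(b,c_1)\in\rel$. To bring the second coordinate back from $c_1$ to $c$ one applies an analogous argument on the second coordinate, combining $(b,c_1)$ with the tuples involving $c$ and $d$ already available; the restriction that $cd$ is semilattice or affine rather than majority is precisely what enables this correction step, since majority edges lack the one-sided absorption needed to adjust a single coordinate independently. Once $(b,c)\in\rel$ is secured, the case analysis of the previous paragraph completes the proof.
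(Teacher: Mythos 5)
Your plan hinges on first establishing $(b,c)\in\rel$ and then applying a suitable binary or ternary term to $(b,c)$ and the given tuples. This intermediate claim is the crux of the proposal, and it is not a consequence of the hypotheses: the lemma only asserts $(b,d)\in\rel$, and $(b,c)$ need not lie in $\rel$ at all. For a concrete illustration, let $\zA_1=\{a,b\}$ and $\zA_2=\{c,d\}$ carry the dot operation with $a\le b$ and $c\le d$ thin semilattice edges, and set $\rel=\{(a,c),(a,d),(b,d)\}$. This is a subuniverse of $\zA_1\tm\zA_2$ (indeed a subdirect one), the conclusion of the lemma holds, but $(b,c)\notin\rel$. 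So a proof that goes through $(b,c)\in\rel$ is proving something false in general, and the ``correction step'' cannot be made to work. The appeal to Lemma~\ref{lem:path-extension} only produces some $(b,c_1)\in\rel$ with $c_1$ uncontrolled; since thin edges are directed, there is no mechanism to walk $c_1$ back to $c$, and the fact that $cd$ is semilattice or affine tells you nothing about the path from $c$ to $c_1$ created by lifting the first-coordinate edge. The argument needs to aim directly at $(b,d)$ using the two given tuples together with whatever tuples path-lifting supplies, not route through $(b,c)$.

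There is a second, independent gap in the affine-affine case. You assert $h(c,c,d)=d$ on the grounds that ``$h$ is Mal'tsev on the thick affine edge $\{c,d\}$.'' But $cd$ being a thin affine edge does not make $\{c,d\}$ a thick affine edge, and the defining identity of a thin affine edge is $h(d,c,c)=d$ --- the arguments in the opposite order. The element $h(c,c,d)$ is constrained only by the generation condition $d\in\Sg{c,h(c,c,d)}$ from (**); it need not equal $d$. So the ternary identity you invoke is not available, and the computation $h((b,c),(a,c),(a,d))=(b,d)$ does not go through. Both of these gaps are genuine; the case split and the idea of applying term operations coordinate-wise are on the right general lines, but the specific tuples fed into those terms, and the identities assumed of $h$, have to be revisited.
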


The main result of this section is the following

\begin{theorem}\label{the:semilattice-free-few}
Let $\cK$ be a finite set of finite semilattice free algebras closed under 
subalgebras. Then $\cK$ has few subproducts.
\end{theorem}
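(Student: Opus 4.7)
By Corollary \ref{cor:edge}, it suffices to show that $g_\cK(n)$ grows polynomially in $n$. Fix a subdirect product $\rel\le\zA_1\tm\dots\tm\zA_n$ with each $\zA_i\in\cK$, and consider any representation $\rel'$ of $\rel$. As noted right after the definition, one can always take $|\rel'|\le 2|\Sig(\rel)|+\binom{n}{3}M^3$, where $M=\max\{|\zA|:\zA\in\cK\}$; since $|\Sig(\rel)|\le nM^2$, this is $O(n^3)$. Hence the theorem reduces to the single claim that every representation generates its relation, i.e.\ $\Sg{\rel'}=\rel$.

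I would prove this claim by contradiction. Suppose $\ba\in\rel\setminus\Sg{\rel'}$ and choose $\ba$ together with $\bb\in\Sg{\rel'}$ so that the length $i$ of the shared prefix $\pr_{[i]}\ba=\pr_{[i]}\bb$ is as large as possible; necessarily $i<n$ and $\bb[i+1]\ne\ba[i+1]$. The fiber $F=\{\bc\in\rel:\pr_{[i]}\bc=\pr_{[i]}\ba\}$ is a subalgebra of $\rel$ by idempotency, and its projection $P=\pr_{i+1}F$ is a subalgebra of $\zA_{i+1}$, hence in $\cK$ by closure under subalgebras, so smooth and semilattice free. By Corollary \ref{cor:sf-strongly-connected}, $\bb[i+1]$ and $\ba[i+1]$ are connected in $P$ by a thin directed path $b=a_0,a_1,\dots,a_m=a$ of affine and majority edges. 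Lemma \ref{lem:path-extension} lifts this to a thin asm-path $\bb=\bt_0,\dots,\bt_m=\bt$ in $F$; and since Theorem \ref{the:set-preservation} applied with $T=\{\text{majority, affine}\}$ guarantees that $\rel$ is itself semilattice free, each lifted edge is affine or majority. In particular $\pr_{[i+1]}\bt=\pr_{[i+1]}\ba$, so it suffices to show $\bt\in\Sg{\rel'}$ to contradict the maximality of $i$.

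I would establish $\bt_j\in\Sg{\rel'}$ by induction on $j$. For an affine edge $a_{j-1}a_j$ in $P$, the triple $(i+1,a_{j-1},a_j)$ lies in $\Sig(\rel)$ (witnessed by elements of $F$), so $\rel'$ contains tuples $\bc,\bd$ with $\bc[i+1]=a_{j-1}$, $\bd[i+1]=a_j$ and $\pr_{[i]}\bc=\pr_{[i]}\bd$. Applying the minority term $h(\bd,\bc,\bt_{j-1})$ then places $a_j$ at coordinate $i+1$ thanks to the thin affine edge identity $h(a_j,a_{j-1},a_{j-1})=a_j$. For a thin majority edge the analogous construction is performed with three witnesses produced by the size-3 projection condition (2) of the representation, combined via the simultaneous majority term provided by Lemma \ref{lem:affine-sl}(2).

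The main obstacle, which I expect to absorb the bulk of the work, is prefix preservation: neither the minority nor the majority term acts as its defining identity on arbitrary triples of prefix entries, so $h(\bd,\bc,\bt_{j-1})$ may disturb $\pr_{[i]}\bt_{j-1}=\pr_{[i]}\ba$. Overcoming this will require a subtler scheme than the naive one above, presumably tracking the set of coordinates where the constructed element still matches $\ba$ and using the rectangularity statements of Lemma \ref{lem:as-rectangularity} and Proposition \ref{pro:max-gen}, together with the abundant supply of witnesses in $\rel'$ guaranteed by condition (2) of the representation, to correct prefix disturbances by subsequent applications. Making this correction mechanism precise, and ensuring the process terminates, is the heart of the argument.
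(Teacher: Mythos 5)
Your outer frame matches the paper's: bound the size of a representation by $O(n^3)$, reduce via Corollary~\ref{cor:edge} to the claim that every representation generates its relation, induct on the length of the prefix on which $\ba$ agrees with an element of $\Sg{\rel'}$, and use Corollary~\ref{cor:sf-strongly-connected} together with Lemma~\ref{lem:path-extension} to move along thin affine and majority edges in the next coordinate. But the proposal stops exactly where the proof begins. You yourself point out that $h(\bd,\bc,\bt_{j-1})$ need not fix the prefix, since the minority condition gives no identity of the form $h(e,e,x)=x$ on arbitrary elements, and the same objection kills the sketched majority step; you then declare that designing a ``correction mechanism'' and proving it terminates ``is the heart of the argument'' without supplying it. That missing mechanism is precisely the content of the paper's proof, so what you have is a correct plan of the routine part plus an acknowledged hole at the core.

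For comparison, the paper never applies the minority term to tuples at all in the affine case: it uses the signature witnesses $\bc,\bd\in\rel'$ (which agree on $[k]$ and end in $b$ and $a$) to get both $(\pr_{[k]}\bc,b)$ and $(\pr_{[k]}\bc,a)$ into $\pr_{[k+1]}\Sg{\rel'}$, and then transports the second coordinate $a$ along a thin affine/majority path from $\pr_{[k]}\bc$ to $\pr_{[k]}\bb$ purely by the rectangularity statement Lemma~\ref{lem:as-rectangularity} --- no term application, hence no prefix disturbance. In the majority case it does apply terms, but the prefix problem is handled by a second induction, over subsets $J\sse[k]$: two tuples already agreeing with $\ba$ on $J-\{\ell-1\}$ and $J-\{\ell\}$ (and carrying $a$ in coordinate $k+1$) are merged using the operations of Lemma~\ref{lem:affine-sl}, which are guaranteed to act as needed on up to three \emph{specified} thin edges simultaneously (the two offending coordinates plus the edge $ba$ in coordinate $k+1$), while idempotency preserves every coordinate on which the merged tuples already agree with $\ba$; a further ``replace the edge by one crossing the boundary of the reachable set'' step (via Corollary~\ref{cor:sf-strongly-connected} inside $\Sg{a_j,b_j}$) ensures the relevant pairs really are thin edges so that Lemma~\ref{lem:affine-sl} applies. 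Neither of these two devices appears in your proposal beyond a mention of Lemma~\ref{lem:as-rectangularity} as a possible tool, so the argument as written is incomplete rather than merely different.
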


\begin{proof}
Let $\rel$ be a subdirect product of $\zA_1\tm\dots\tm\zA_n$, 
$\vc\zA n\in\cK$. We show that any representation of $\rel$ generates $\rel$, 
which will prove that any such $\rel$ has a generating set of size $O(n^3)$, and 
therefore few subproducts. 
Let $\rel'\sse\rel$ be a representation of $\rel$, and $\relo=\Sg{\rel'}$.

Take $\ba\in\rel$; we prove by induction on $k\in[n]$ that 
$\pr_{[k]}\ba\in\pr_{[k]}\relo$. For $k\le3$ it follows from property (2) 
of representations, so assume that $k\ge4$. Suppose that there is 
$\bb\in\relo$ with $\pr_{[k]}\bb=\pr_{[k]}\ba$. Let $\ba[k+1]=a$, 
$\bb[k+1]=b$, and $\zB$ the subalgebra of $\zA_{k+1}$ generated by 
$\{a,b\}$. We will show that $\pr_{[k]}\ba\tm\zB\sse\pr_{[k+1]}\relo$, 
which implies the result. Note that, since 
$(\pr_{[k]}\ba,a),(\pr_{[k]}\ba,b)\in\pr_{[k+1]}\rel$, it also holds that 
$\pr_{[k]}\ba\tm\zB\sse\pr_{[k+1]}\rel$. 
Let $C=\{c\in\zB\mid (\pr_{[k]}\ba,c)\in\pr_{[k+1]}\relo\}$. If $C\ne\zB$ 
then by Corollary~\ref{cor:sf-strongly-connected} there are $c\in C$ and 
$d\in\zB-C$ such that $cd$ is 
a thin majority or affine edge. For the sake of obtaining a contradiction, 
replace $a$ and $b$ with $d$ and $c$, respectively. If $ba$ is an affine 
edge, then as $(\pr_{[k]}\ba,a)\in \pr_{[k+1]}\rel$, the triple
$(k+1,b,a)\in\Sig(\rel)$. Therefore there are $\bc,\bd\in\rel'$ witnessing it,  
and so $(b,a)$ is in the link congruence of $\pr_{[k+1]}\relo$. By 
Corollary~\ref{cor:sf-strongly-connected} there is a path from $\pr_{[k]}\bc$
to $\pr_{[k]}\bb$ consisting of thin affine and majority edges. Since
$(\pr_{[k]}\bb,b)\in\pr_{[k+1]}\relo$ and $ba$ is a thin affine edge, by
Lemma~\ref{lem:as-rectangularity} $(\pr_{[k]}\bb,a)\in\pr_{[k+1]}\relo$,
as well, a contradiction.

Consider now the case when $ba$ is a majority edge. We show that for 
any $J\sse[k]$ there is $\bc\in\relo$ such that $\pr_J\bc=\pr_J\ba$ and 
$\bc[k+1]=a$. For subsets $|J|\le2$ the statement follows from property 
(2) of representations. Take $J\sse[k]$, without loss of generality, 
$J=[\ell]$, and suppose that there are $\ba_1,\ba_2\in\relo$ such that 
$\ba_1[k+1]=\ba_2[k+1]=a$, and 
$\pr_{J-\{\ell-1\}}\ba_1=\pr_{J-\{\ell-1\}}\ba$, 
$\pr_{J-\{\ell\}}\ba_2=\pr_{J-\{\ell\}}\ba$. Let $\zB_1$ be the subalgebra 
of $\zA_{\ell-1}$ generated by $a_1=\ba[\ell-1]$ and 
$b_1=\ba_1[\ell-1]$, and let 
$C_1=\{e\in\zB_1\mid (\pr_{[\ell-2]}\ba,e,\ba[\ell],a)\in
\pr_{[\ell]\cup\{k+1\}}\relo\}$.
As $a_1\not\in C_1$, $C_1\ne\zB_1$, and therefore there are $c_1\in C_1$ and 
$d_1\in\zB_1- C_1$ such that $c_1d_1$ is a thin affine or majority edge. Again,
replace $a_1$ with $d_1$ and $b_1$ with $c_1$. If $b_1a_1$
is an affine edge, by Lemma~\ref{lem:affine-sl}(1)
there is a term operation $r(x,y)$ such that $r(a,b)=a$ and $r(b_1,a_1)=a_1$.
Applying $\bc=r(\ba_1,\bb)$ we obtain a tuple $\bc$ such that $\bc[i]=\ba[i]$
for $i\in[\ell]-\{\ell-1\}$, because $t$ is idempotent, $\bc[\ell-1]=a_1$, and
$\bc[k+1]=a$, a contradiction.

Consider the case when $b_1a_1$ is a majority edge. Let $\zB_2$ be the subalgebra of 
$\zA_{\ell}$ generated by $a_2=\ba[\ell]$ and $b_2=\ba_1[\ell]$, and let
$C_2=\{e\in\zB_2\mid (\pr_{[\ell-1]}\ba,e,a)\in\pr_{[\ell]\cup\{k+1\}}\relo\}$.
As before, we may assume that $b_2a_2$ is a thin majority edge. Then by
Lemma~\ref{lem:affine-sl}(2) there is a term operation $g$ such that 
$g(a_1,a_1,b_1)=a_1$, $g(a_2,b_2,a_2)=a_2$, and $g(b,a,a)=a$.
Therefore for $\bc=g(\bb, \ba_2,\ba_1)$ we have 
$\pr_{[\ell-2]}\bc=\pr_{[\ell-2]}\ba$,
$\bc[\ell-1]=a_1$, $\bc[\ell]=a_2$, and $\bc[k+1]=a$. The result follows.
\end{proof}

\begin{corollary}\label{cor:edge-term}
Let $\cK$ be a finite set of similar semilattice free algebras. Then the variety
generated by $\cK$ has an edge term.
\end{corollary}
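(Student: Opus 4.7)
The plan is to derive this as an essentially immediate combination of Theorem~\ref{the:semilattice-free-few} and Corollary~\ref{cor:edge}, with one small preparatory observation.

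First I would reduce to the hypothesis of Theorem~\ref{the:semilattice-free-few} by replacing $\cK$ with its closure $\cK'$ under taking subalgebras. Since $\cK$ is finite and each algebra in $\cK$ is finite, $\cK'$ is finite. The key point is that being semilattice free is inherited by subalgebras: if $\zB$ is a subalgebra of a semilattice free algebra $\zA$ and $ab$ were a semilattice edge of $\zB$ witnessed by a congruence $\th$ of $\Sgg\zB{a,b}$, then since $\Sgg\zB{a,b}=\Sgg\zA{a,b}$ (same generating set, same term operations restricted), $\th$ would also witness $ab$ as a semilattice edge of $\zA$, contradicting the assumption. Hence every algebra in $\cK'$ is semilattice free.

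Next I would apply Theorem~\ref{the:semilattice-free-few} to $\cK'$, which yields that $\cK'$ has few subproducts. Since $\cK\sse\cK'$, any subdirect product of algebras from $\cK$ is also a subdirect product of algebras from $\cK'$, so $s_\cK(n)\le s_{\cK'}(n)$; thus $\cK$ itself has few subproducts. Finally, Corollary~\ref{cor:edge}, applied to $\cK$ (which is a finite set of finite idempotent algebras, per the standing assumption that all algebras in this paper are finite and idempotent), gives the equivalence between having few subproducts and the variety generated by $\cK$ having an edge term. So the variety generated by $\cK$ has an edge term, as required.

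The main work has already been done in Theorem~\ref{the:semilattice-free-few}, so there is no real obstacle here; the only subtlety is verifying that the closure under subalgebras does not introduce semilattice edges and does not change the variety generated.
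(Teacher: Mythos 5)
Your proposal is correct and follows essentially the paper's own route: both arguments reduce the statement to Theorem~\ref{the:semilattice-free-few} combined with the few-subpowers/edge-term characterization of Berman et al.\ (packaged in the paper as Corollary~\ref{cor:edge}). The only difference is cosmetic: where the paper invokes Theorem~\ref{the:set-preservation} to get semilattice-freeness of the relevant algebras, you close $\cK$ under subalgebras and verify directly that subalgebras inherit semilattice-freeness, which is a valid and slightly lighter way to meet the hypotheses.
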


\begin{proof}
Let $\gV$ be the variety generated by $\cK$. By 
Theorem~\ref{the:set-preservation} every finite algebra from $\gV$ is 
semilattice free. By Theorem~\ref{the:semilattice-free-few}
it also has few subproducts, and by \cite{Berman10:varieties} $\gV$ has 
an edge term.
\end{proof}

\section{Affine and semilattice: thin edges and undirected connectivity}%
\label{sec:majority-free}

Algebra $\zA$ whose graph does not contain edges of the 
majority type will be called \emph{majority free}. We show that in 
majority free algebras every affine edge contains a pair that is quite similar 
to a thin affine edge, and that any two maximal components of a majority free 
algebra are connected with a thin affine edge.

We start with a simple corollary from Proposition~\ref{pro:as-connectivity}
and a simple observation.

\begin{corollary}\label{cor:max-amax}
Let $\zA$ be a smooth majority free algebra. Then 
$\max(\zA)\sse\amax(\zA)$ and $\zA$ has only one as-component.
\end{corollary}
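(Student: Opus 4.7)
The plan is to reduce both claims to Proposition~\ref{pro:as-connectivity}(2) via a single simple observation: in a majority free algebra $\zA$ there are no thin majority edges either (this follows by applying Proposition~\ref{pro:thin-thick-colors} to the class of subalgebras and factor algebras of $\zA$, as pointed out in the paragraph introducing $T$-restricted algebras), so the digraph $\cG_{asm}(\zA)$ coincides with $\cG_{as}(\zA)$ and every directed asm-path is an as-path. In particular, applying Proposition~\ref{pro:as-connectivity}(2) to a pair $a,b\in\max(\zA)$ in both orders yields $a\sqq^{as}b$ and $b\sqq^{as}a$, and likewise for any $a,b\in\amax(\zA)$.

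For the inclusion $\max(\zA)\sse\amax(\zA)$, I take $a\in\max(\zA)$ and suppose $a\sqq^{as}b$; the goal is $b\sqq^{as}a$. Since $\zA$ is finite, I can pick $b'\in\max(\zA)$ with $b\sqq b'$, which in particular gives $b\sqq^{as}b'$. By the observation above, applied to the pair $a,b'\in\max(\zA)$, we have $b'\sqq^{as}a$. Chaining the two relations yields $b\sqq^{as}a$, so $a$ is as-maximal.

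For the second assertion, the same observation applied to pairs in $\amax(\zA)$ shows that any two as-maximal elements of $\zA$ are strongly connected in $\cG_{as}(\zA)$, so $\amax(\zA)$ lies in a single strongly connected component of $\cG_{as}(\zA)$. Since the as-components are precisely the maximal strongly connected components of $\cG_{as}(\zA)$, and any finite digraph has at least one maximal strongly connected component, there is exactly one as-component.

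There is essentially no obstacle: both parts are near-immediate consequences of Proposition~\ref{pro:as-connectivity}(2) once the equality $\cG_{asm}(\zA)=\cG_{as}(\zA)$ in the majority free setting is recorded. The only delicate point is confirming that ``majority free'' rules out thin majority edges (not just thick ones), which is exactly what Proposition~\ref{pro:thin-thick-colors} provides when one closes under subalgebras and factor algebras.
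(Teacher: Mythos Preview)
Your proof is correct and follows essentially the same approach as the paper: both arguments rest on the observation that in a majority free algebra every directed asm-path is already an as-path, after which Proposition~\ref{pro:as-connectivity}(2) does all the work. The only cosmetic difference is in the order and style of the argument for $\max(\zA)\sse\amax(\zA)$: the paper first picks a maximal element $a$ lying in $\amax(\zA)$ and then uses Proposition~\ref{pro:as-connectivity}(2) to drag every other $b\in\max(\zA)$ into $\as(a)$, whereas you verify the as-maximality condition for an arbitrary $a\in\max(\zA)$ directly by passing through some $b'\in\max(\zA)$ above $b$. Your extra care in invoking Proposition~\ref{pro:thin-thick-colors} to rule out thin majority edges is a point the paper leaves implicit.
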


\begin{proof}
By Proposition~\ref{pro:as-connectivity} every two as-maximal elements $a,b$ 
of $\zA$ are asm-connected. Since $\zA$ 
contains no majority edges this is in fact a directed thin as-path, showing
that $a$ and $b$ are as-connected to each other.

As is easily seen, $\amax(\zA)$ contains a maximal element $a$. Then 
by Proposition~\ref{pro:as-connectivity} $a$ is connected to any 
$b\in\max(\zA)$ with a thin directed asm-path, which in our case is
an as-path. Therefore, $b\in\as(a)=\amax(\zA)$.
\end{proof}

\begin{lemma}\label{lem:double-swap}
Let $\zA$ be a smooth algebra and $C_1,C_2$ its maximal components.
Then there are $a\in C_1$ and $b\in C_2$ such that $(a,b)$ is a maximal
element in the subalgebra of $\zA^2$ generated by $(a,b),(b,a)$.
\end{lemma}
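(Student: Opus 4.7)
The plan is to build the pair by iteration inside $\zA^2$. Start with any $a_0\in C_1$ and $b_0\in C_2$, and at each step $n$ ask whether $(a_n,b_n)$ is already maximal in $\rel_n=\Sg{(a_n,b_n),(b_n,a_n)}$. If yes, we are done. If not, choose $(a_{n+1},b_{n+1})$ to be any maximal element of $\rel_n$ reachable from $(a_n,b_n)$ via an s-path inside $\rel_n$; such an element exists because the s-preorder on the finite algebra $\rel_n$ has a maximal element above every given one. The crucial structural remark is that each $\rel_n$ is closed under the coordinate-swap automorphism of $\zA^2$, since its generating set $\{(a_n,b_n),(b_n,a_n)\}$ is; so $(a_{n+1},b_{n+1})\in\rel_n$ forces $(b_{n+1},a_{n+1})\in\rel_n$ and therefore $\rel_{n+1}\sse\rel_n$.

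Two invariants carry the argument. First, I would check that $a_n\in C_1$ and $b_n\in C_2$ at every step: each thin s-edge in $\rel_n$ projects to a thin s-edge (or a trivial pair) in $\zA$ in each coordinate, so the s-path from $(a_n,b_n)$ to $(a_{n+1},b_{n+1})$ produces s-paths $a_n\sqq a_{n+1}$ and $b_n\sqq b_{n+1}$ in $\zA$, and the maximality of the components $C_1,C_2$ in $\cG_s(\zA)$ keeps both projections trapped inside them. Second, the nested chain $\rel_0\supseteq \rel_1\supseteq\cdots$ of finite subalgebras of $\zA^2$ strictly decreases whenever the process does not already stop: if $\rel_{n+1}=\rel_n$, then $(a_{n+1},b_{n+1})$, being maximal in $\rel_n$ by construction, is automatically maximal in $\rel_{n+1}$, and the process terminates at step $n+1$.

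The subtle step, which I would flag as the main point of care, is that ``maximal'' is measured relative to an ambient subalgebra, so the maximality of $(a_{n+1},b_{n+1})$ in the larger $\rel_n$ does not \emph{a priori} transfer to the smaller $\rel_{n+1}$: a witness $(a_{n+1},b_{n+1})\sqq(c,d)$ in $\rel_{n+1}$ produces the return path $(c,d)\sqq(a_{n+1},b_{n+1})$ only in $\rel_n$, and that return path might leave $\rel_{n+1}$. The only automatic case is exactly $\rel_{n+1}=\rel_n$, which is precisely what triggers termination. Combined with finiteness of $\zA^2$ (so only finitely many strict drops are possible), this forces the iteration to end at some step $N$, and the pair $a=a_N\in C_1$, $b=b_N\in C_2$ is then maximal in $\Sg{(a,b),(b,a)}=\rel_N$, as required.
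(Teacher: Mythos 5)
Your proposal is correct and follows essentially the same route as the paper: take a maximal element of $\Sg{(a,b),(b,a)}$ reachable by an s-path, use the coordinate-swap symmetry to see the subalgebra it generates is contained in the current one, and terminate either because the two subalgebras coincide or by strict decrease in a finite algebra. Your explicit verification that the coordinates stay in $C_1$ and $C_2$ (by projecting the s-path) is a point the paper leaves implicit, but it is the same argument.
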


\begin{proof}
Take any $a\in C_1,b\in C_2$ and let $\zC=\Sg{(a,b),(b,a)}$. If they do not 
satisfy the required conditions, let $(c,d)$ be an element maximal in 
$\zC$ and such that $(a,b)\sqq_\zC(c,d)$. Since $\zC$ is symmetric 
with respect to swapping the coordinates, $(d,c)\in\zC$. 
Now, if $\Sgg\zC{(c,d),(d,c)}=\zC$ then $c,d$ satisfy the required 
conditions, and $a,b$ can be replaced with $c,d$. If 
$\zC'=\Sgg\zC{(c,d),(d,c)}\subset\zC$, then we replace $a,b$ with $c,d$ 
and repeat the procedure in $\zC'$. 
\end{proof}

We say that a pair $ab$ from algebra $\zA$ is a \emph{Mal'tsev edge} if 
there exists a term operation of $\zA$ that is Mal'tsev on $\{a,b\}$. Note 
that although Mal'tsev edges have a number of desirable properties, this 
notion is not comparable with the notion of a thin affine edge. We 
use the following result from \cite{Bulatov20:maximal}.

\begin{lemma}[Lemma~12, \cite{Bulatov20:maximal}]%
\label{lem:quotient-edge}
Let $\zA$ be a smooth algebra and $\th\in\Con(\zA)$.
If $ab$ is a thin edge in $\zA$, then $a\fac\th b\fac\th$
is a thin edge in $\zA\fac\th$ of the same type.
\end{lemma}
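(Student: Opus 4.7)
The plan is to reduce everything to the fact that the canonical epimorphism $\pi\colon\zA\to\zA\fac\th$ commutes with every term operation of the clone of $\zA$ and satisfies $\pi(\Sgg\zA X)=\Sgg{\zA\fac\th}{\pi(X)}$ for every $X\sse\zA$. Every defining condition for a thin edge of each of the three types consists only of (i) universal identities in term operations evaluated at $a$ and $b$, and (ii) membership statements of the form ``$b$ lies in a certain generated subalgebra''; both kinds of condition are manifestly preserved by homomorphisms. So the strategy is a direct three-case analysis, pushing the defining data of the edge $ab$ through $\pi$.

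I would first treat the semilattice case by invoking the operation $\cdot$ from Proposition~\ref{pro:good-operation}: the identities $a\cdot a=a$ and $a\cdot b=b\cdot a=b\cdot b=b$ project modulo $\th$ to the corresponding identities on $\{a\fac\th,b\fac\th\}$, yielding $a\fac\th\le b\fac\th$ in $\zA\fac\th$. For the thin majority case, I would fix an arbitrary term operation $g'$ satisfying the majority condition and transport each of the three memberships $b\in\Sgg\zA{a,g'(a,b,b)}$, $b\in\Sgg\zA{a,g'(b,a,b)}$, $b\in\Sgg\zA{a,g'(b,b,a)}$ guaranteed by (*) across $\pi$, obtaining the analogous memberships in $\zA\fac\th$. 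The thin affine case goes the same way: the fixed $h$ satisfying the minority condition projects $h(b,a,a)=b$ to $h(b\fac\th,a\fac\th,a\fac\th)=b\fac\th$, and the membership (**) transfers verbatim for every $h'$ satisfying the minority condition.

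The main subtlety I would be careful about is the matching of the universal quantifiers over $g'$ and $h'$ in the definitions between $\zA$ and $\zA\fac\th$. Since $\cK$ has been assumed closed under factor algebras, both $\zA$ and $\zA\fac\th$ lie in $\cK$, so ``$g'$ satisfies the majority condition with respect to $\cK$'' and ``$h'$ satisfies the minority condition with respect to $\cK$'' are absolute statements that do not depend on which of the two algebras we test them in. Thus any hypothetical $g'$ (or $h'$) obstructing $a\fac\th b\fac\th$ from being a thin edge in the factor would already obstruct $ab$ in $\zA$, which is impossible. I do not anticipate any real obstacle beyond bookkeeping; the argument is essentially the observation that the thin-edge conditions are defined by quasi-equational data preserved under homomorphic images.
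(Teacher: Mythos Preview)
Your approach is correct and is the natural one. Note, however, that the present paper does not actually prove Lemma~\ref{lem:quotient-edge}: it is quoted verbatim from \cite{Bulatov20:maximal} (as Lemma~12 there) and used without proof. So there is no ``paper's own proof'' here to compare against; your argument is essentially what any proof of this statement must do, namely push the defining data of a thin edge through the canonical surjection $\pi\colon\zA\to\zA\fac\th$ and use $\pi(\Sgg\zA X)=\Sgg{\zA\fac\th}{\pi(X)}$.

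Two minor points worth making explicit when you write it up. First, the degenerate case $a\fac\th=b\fac\th$: if $\pi$ identifies $a$ and $b$ the conclusion is vacuous (a loop), so you may assume $a\fac\th\ne b\fac\th$ throughout. Second, your handling of the universal quantifier over $g'$ (respectively $h'$) is the right one, and it is worth stating the direction cleanly: the majority/minority condition is a property of the term operation relative to the fixed class $\cK$, not relative to the particular algebra, and since $\cK$ is assumed closed under factor algebras the set of admissible $g'$ (or $h'$) is literally the same whether one works in $\zA$ or in $\zA\fac\th$. Hence the universal statement $(*)$ or $(**)$ for $ab$ in $\zA$ immediately yields the corresponding universal statement for $a\fac\th\,b\fac\th$ in $\zA\fac\th$ by applying $\pi$ to the witnessing generation.
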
  

The main result of this section is

\begin{theorem}\label{the:affine-thin}
Let $\zA$ be a smooth majority free algebra.
\begin{itemize}
\item[(1)]
Let $C_1,C_2$ be maximal components of $\zA$. There are 
$a\in C_1,b\in C_2$ such that $ab$ is a Mal'tsev edge.
\item[(2)]
Let $a',b'\in\zA$ be such that $a'b'$ is an affine edge and this is 
witnessed by a congruence $\th$ of $\Sgg\zA{a',b'}$. Then there 
are $a\in a'\fac\th, b\in b'\fac\th$ such that $ab$ is a Mal'tsev edge. 
\end{itemize}
Moreover, $ab$ is a Mal'tsev edge for any $a\in C_1,b\in C_2$ 
(or $a\in a'\fac\th,b\in b'\fac\th$ such that $(a,b)$ is as-maximal in the 
subalgebra of $\zA^2$ generated by $(a,b),(b,a)$.
\end{theorem}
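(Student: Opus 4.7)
The plan is to nail down the ``moreover'' clause first, since it is the technical core of the theorem and once available, both existence statements reduce to producing an $(a,b)$ in the required location for which $(a,b)$ is as-maximal in $\zC:=\Sg{(a,b),(b,a)}$.

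For the moreover clause, I work inside $\zA^4$ with the subalgebra $\mathbb{F}$ generated by $T_1=(a,b,a,b)$, $T_2=(b,a,a,b)$, $T_3=(b,a,b,a)$, and aim to show that $(a,b,b,a)\in\mathbb{F}$; the witnessing term then gives, coordinate-wise, exactly the four Mal'tsev identities on $\{a,b\}$. Regrouping coordinates as $(1,2)(3,4)$, I identify $\mathbb{F}$ with the subalgebra $\mathbb{F}'$ of $\zC^2$ generated by $((a,b),(a,b))$, $((b,a),(a,b))$, $((b,a),(b,a))$. Each projection of $\mathbb{F}'$ contains $\{(a,b),(b,a)\}$, so $\mathbb{F}'$ is subdirect in $\zC^2$. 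A direct inspection of the three generators puts $((a,b),(b,a))$ into $\tol_1(\mathbb{F}')$ (witnessed by the common second coordinate $(a,b)$) and into $\tol_2(\mathbb{F}')$ (witnessed by the common first coordinate $(b,a)$); since each $\lnk_i(\mathbb{F}')$-block is a subuniverse of $\zC$ that contains both $(a,b)$ and $(b,a)$, both link congruences equal the full congruence on $\zC$. By Theorem~\ref{the:set-preservation}, $\zC$ is majority free, so Corollary~\ref{cor:max-amax} gives a unique as-component $\amax(\zC)$. Coordinate swap is an automorphism of $\zC$ because it permutes its generators, so $(b,a)\in\amax(\zC)$ as well. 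Proposition~\ref{pro:max-gen}, applied with $B_1=B_2=\amax(\zC)$ and the witness $((a,b),(a,b))\in\mathbb{F}'\cap\amax(\zC)^2$, yields $\amax(\zC)^2\sse\mathbb{F}'$, hence $((a,b),(b,a))\in\mathbb{F}'$; translating back, $(a,b,b,a)\in\mathbb{F}$.

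Part~(1) is now immediate: Lemma~\ref{lem:double-swap} produces $a\in C_1$, $b\in C_2$ with $(a,b)$ maximal in $\zC$; $\zC$ is majority free by Theorem~\ref{the:set-preservation}, Corollary~\ref{cor:max-amax} promotes maximal to as-maximal, and the moreover clause finishes. For part~(2) I run the analogue of the iteration in the proof of Lemma~\ref{lem:double-swap} while enforcing $a\in a'\fac\th$ and $b\in b'\fac\th$ at every stage. Starting from $(a_0,b_0)=(a',b')$ and setting $\zR_i:=\Sg{(a_i,b_i),(b_i,a_i)}$, as long as $(a_i,b_i)$ is not as-maximal in $\zR_i$ I pick $(a_{i+1},b_{i+1})\in\amax(\zR_i)$ with $a_{i+1}\in a'\fac\th$, $b_{i+1}\in b'\fac\th$, and $(a_i,b_i)\sqq^{as}(a_{i+1},b_{i+1})$. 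The existence of this selection comes from applying Corollary~\ref{cor:path-extension} to the graph of the canonical projection $\zR_i\to\zR_i\fac{\th\tm\th}$: the quotient $\zR_i\fac{\th\tm\th}$ is an affine subalgebra of $(\Sg{a',b'}\fac\th)^2$ in which every element is as-maximal, so the graph, as a subdirect product of $\amax(\zR_i)$ and the quotient, forces $\amax(\zR_i)$ to surject onto $\zR_i\fac{\th\tm\th}$; in particular some $(a_{i+1},b_{i+1})\in\amax(\zR_i)$ satisfies $(a_{i+1}\fac\th,b_{i+1}\fac\th)=(a'\fac\th,b'\fac\th)$. The required as-path from $(a_i,b_i)$ is produced by first going up from $(a_i,b_i)$ to some as-maximal element and then applying Proposition~\ref{pro:as-connectivity}(2) inside the strongly connected $\amax(\zR_i)$, using that asm- and as-connectivity coincide in the majority free setting. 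The chain $\zR_0\supseteq\zR_1\supseteq\cdots$ stabilizes after finitely many strict drops, and at the stabilization point $(a_k,b_k)\in\amax(\zR_k)$, so the moreover clause applies.

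The principal obstacle is the moreover clause itself: one must recognise the four Mal'tsev identities as a single membership statement in $\zA^4$ and then stitch together the closure property of Theorem~\ref{the:set-preservation}, the swap symmetry of $\zC$, the unique as-component supplied by Corollary~\ref{cor:max-amax}, and the rectangularity of Proposition~\ref{pro:max-gen}. A secondary subtlety arises in part~(2): without the surjection of $\amax(\zR_i)$ onto $\zR_i\fac{\th\tm\th}$ obtained from Corollary~\ref{cor:path-extension}, the iteration could drift into a wrong $\th$-quadrant of $(\Sg{a',b'}\fac\th)^2$ and fail to terminate inside $a'\fac\th\tm b'\fac\th$.
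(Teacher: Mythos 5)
Your proof of the ``moreover'' clause and of part~(1) is essentially the paper's argument. The paper works directly with the relation $\rel\sse\zC^2$ generated by $(\ba,\bb),(\ba,\ba),(\bb,\ba)$ (where $\ba=(a,b)$, $\bb=(b,a)$) and shows $(\bb,\bb)\in\rel$, which encodes $m(\ba,\ba,\bb)=m(\bb,\ba,\ba)=\bb$; it observes $\{\ba\}\tm\zC\sse\rel$ so $\rel$ is linked, invokes Corollary~\ref{cor:max-amax} for the single as-component, and applies Proposition~\ref{pro:max-gen}. Your $\zA^4$ regrouping into $\mathbb{F}'\sse\zC^2$ is the same device packaged slightly differently, and the remaining steps (linkedness via tolerances and subuniverse blocks, swap automorphism, $T$-restriction from Theorem~\ref{the:set-preservation}, $\amax(\zC)^2\sse\mathbb{F}'$) mirror the paper's. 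Part~(1) via Lemma~\ref{lem:double-swap} and Corollary~\ref{cor:max-amax} is identical.

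For part~(2) you take a genuinely different route. You iterate the double-swap construction inside the $\th$-quadrant, applying Corollary~\ref{cor:path-extension} to the graph of the projection $\zR_i\to\zR_i\fac{\th\tm\th}$ to produce, at each stage, an as-maximal pair of $\zR_i$ still lying in $a'\fac\th\tm b'\fac\th$, then descend to $\zR_{i+1}$. The paper instead reads off the conclusion structurally: since $\Sg{a',b'}\fac\th$ is a module and therefore has no thin semilattice edges, Lemma~\ref{lem:quotient-edge} forces every thin semilattice edge of $\Sg{a',b'}$ to stay inside a single $\th$-block; hence every maximal component of $\Sg{a',b'}$ is contained in one $\th$-block, and each $\th$-block (being a subuniverse) contains a maximal component. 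One then just applies part~(1) to $\Sg{a',b'}$ with maximal components chosen inside $a'\fac\th$ and $b'\fac\th$. The paper's argument is shorter and avoids the iteration and the auxiliary surjectivity/termination bookkeeping; your argument is correct but reproves, in a more hands-on way, the containment of maximal components in $\th$-blocks that Lemma~\ref{lem:quotient-edge} gives for free.
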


\begin{proof}
First, we show that if $(a,b)$ is as-maximal in $\zC=\Sg{(a,b),(b,a)}$
then $ab$ is a Mal'tsev edge. Let $\ba=(a,b)$ and $\bb=(b,a)$. 
It suffices to prove that there exists
a term operation $m$ such that $m(\ba,\ba,\bb)=m(\bb,\ba,\ba)=\bb$.
Consider the relation $\rel$, a subalgebra of $\zC^2$ generated by 
$\{(\ba,\bb),(\ba,\ba),(\bb,\ba)\}$. We need to show that $(\bb,\bb)\in\rel$.
Since $\{\ba\}\tm\zC\sse\rel$, this relation is linked. By 
Corollary~\ref{cor:max-amax} there is only one as-component of $\zC$,
this implies $\amax(\zC)\tm\amax(\zC)\sse\rel$. 
Since $\ba,\bb\in\amax(\zC)$, the result follows.

Now, part (1) follows immediately by Lemma~\ref{lem:double-swap}
and Corollary~\ref{cor:max-amax}. 
For part (2) if $ab$ is affine edge witnessed by a congruence $\th$,
then $\zB\fac\th$, $\zB=\Sg{a,b}$, is a module. Therefore by 
Lemma~\ref{lem:quotient-edge} 
every maximal component of belogns to one of the $\th$-blocks, 
and each $\th$-block contains a maximal component. We can now apply
the argument above to a pair of such maximal components belonging to
$a\fac\th$ and $b\fac\th$.
\end{proof}

Theorem~\ref{the:affine-thin} easily implies that term operations that are
Mal'tsev on a Mal'tsev edges can be uniformized.

\begin{corollary}\label{cor:uniform-maltsev}
Let $\zA$ be a smooth majority free algebra. Then there is a ternary term 
operation $m$ and a set $U$ of a Mal'tsev edges such that
\begin{itemize}
\item[(1)]
$m$ is Mal'tsev on every edge from $U$;
\item[(2)]
for any maximal components $C_1,C_2$ of $\zA$, $C_1\ne C_2$, there are 
$a\in C_1,b\in C_2$ such that $ab\in U$. 
\end{itemize}
\end{corollary}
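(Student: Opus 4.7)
The plan is to produce $m$ by a product-power uniformization of the individual Mal'tsev terms guaranteed by Theorem~\ref{the:affine-thin}(1). Enumerate the unordered pairs of distinct maximal components of $\zA$ as $\{C_{i_1},C_{j_1}\}\zd\{C_{i_N},C_{j_N}\}$. For each $r\in[N]$, Theorem~\ref{the:affine-thin}(1) yields elements $a_r\in C_{i_r}$, $b_r\in C_{j_r}$ such that $a_rb_r$ is a Mal'tsev edge; but each such pair is witnessed by its own ternary term, and the point is to glue these into a single $m$.

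Form the diagonal tuples $\ba=(a_1\zd a_N)$, $\bb=(b_1\zd b_N)\in\zA^N$. Since s-paths in a product project coordinatewise, the s-maximal components of $\zA^N$ are products of s-maximal components of $\zA$, so $\ba\in D_1:=C_{i_1}\tm\dots\tm C_{i_N}$ and $\bb\in D_2:=C_{j_1}\tm\dots\tm C_{j_N}$ lie in distinct maximal components of $\zA^N$. Because $\zA^N$ belongs to the variety generated by $\zA$, Theorem~\ref{the:set-preservation} (with $T=\{\text{semilattice},\text{affine}\}$) implies that $\zA^N$ is again majority free, and smoothness is inherited by products. Lemma~\ref{lem:double-swap} applied to $D_1,D_2$ inside $\zA^N$ therefore supplies $\ba'\in D_1$, $\bb'\in D_2$ with $(\ba',\bb')$ maximal, hence by Corollary~\ref{cor:max-amax} as-maximal, in $\zC=\Sg{(\ba',\bb'),(\bb',\ba')}\le(\zA^N)^2$; by the swap symmetry of $\zC$, the element $(\bb',\ba')$ is as-maximal as well.

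Now rerun the argument from Theorem~\ref{the:affine-thin} showing that as-maximality of $(a,b)$ in $\Sg{(a,b),(b,a)}$ forces $ab$ to be a Mal'tsev edge. Inside $\zC$, the subalgebra $\rel\le\zC^2$ generated by $\{((\ba',\bb'),(\bb',\ba')),((\ba',\bb'),(\ba',\bb')),((\bb',\ba'),(\ba',\bb'))\}$ contains $\{(\ba',\bb')\}\tm\zC$, hence is linked, and Proposition~\ref{pro:max-gen} applied to the unique as-component of $\zC$ forces $((\bb',\ba'),(\bb',\ba'))\in\rel$. A ternary term $m$ of $\zA$ witnesses this membership; projecting the resulting equation onto the two $\zA^N$-coordinates of $\zC^2\le(\zA^N)^4$ yields $m(\ba',\ba',\bb')=\bb'$, $m(\bb',\ba',\ba')=\bb'$, $m(\bb',\bb',\ba')=\ba'$ and $m(\ba',\bb',\bb')=\ba'$, which coordinatewise in $\zA^N$ are precisely the Mal'tsev identities on every pair $\{\ba'[r],\bb'[r]\}$. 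Setting $U=\{\ba'[r]\bb'[r]\mid r\in[N]\}$, we have $\ba'[r]\in C_{i_r}$ and $\bb'[r]\in C_{j_r}$, so $U$ provides a Mal'tsev edge between every pair of distinct maximal components, all simultaneously witnessed by~$m$. The only nonroutine bookkeeping is ensuring that Lemma~\ref{lem:double-swap} keeps $\ba',\bb'$ inside $D_1,D_2$, but this is exactly what the lemma guarantees.
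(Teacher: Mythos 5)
Your proof is correct and follows essentially the same route as the paper: collect the pairs into a tuple in a power of $\zA$, pass to a swap-symmetric maximal configuration, and rerun the argument of Theorem~\ref{the:affine-thin} (linked relation plus Proposition~\ref{pro:max-gen} on the unique as-component) to extract one term $m$, reading off the Mal'tsev identities coordinatewise. The only differences are bookkeeping --- you use an $N$-fold power with the swap at the outer level and invoke Lemma~\ref{lem:double-swap} in $\zA^N$ (which requires the easy, and true, fact that products of maximal components are maximal components of the power), whereas the paper doubles the coordinates into a $2k$-tuple and tracks the components via projections of s-paths.
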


\begin{proof}
Let $(a_1,b_1)\zd(a_k,b_k)$ be all the pairs of Mal'tsev edges such that 
$a_i\in C,b_i\in C'$, for some maximal components $C,C'$, $C\ne C'$. 
Let $\ba$ be the $2k$-ary tuple $(a_1,b_1,a_2,b_2\zd a_k,b_k)$ and 
$\bb=(b_1,a_1,b_2,a_2\zd b_k,a_k)$. Let also $\zB$ be the subalgebra of
$\zA^{2k}$ generated by $\ba,\bb$. Similar to the proof of 
Theorem~\ref{the:affine-thin} we choose $\bc$ such that 
$\ba\sqq_{\zB}\bc$ and $\bc$ is maximal in the subalgebra generated
by $\bc,\bd$, where $\bd\in\zB$ is the `symmetric' tuple to $\bc$, that is,
if $\bc=(c_1,d_1,c_2,d_2\zd c_k,d_k)$ then 
$\bd=(d_1,c_1,d_2,c_2\zd d_k,c_k)$. Then there exists an operation 
$m$ with $m(\bc,\bc,\bd)=m(\bd,\bc,\bc)=\bd$. This operation is
Mal'tsev on every pair $c_i,d_i$. It remains to observe that 
$(a_i,b_i)\sqq_{\zA^2}(c_i,d_i)$, and therefore $a_i\sqq_\zA c_i$ and 
$b_i\sqq_\zA d_i$.
\end{proof}

If an algebra $\zA$ contains only affine edges (it does not have to be smooth 
in this case) then every element is maximal, and every maximal
component is a singleton. Therefore we obtain the following

\begin{corollary}\label{cor:affine-maltsev}
Let $\zA$ be an algebra that only contains affine edges. Then $\zA$ is Mal'tsev.
\end{corollary}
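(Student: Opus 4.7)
The plan is to reduce the statement to Corollary~\ref{cor:uniform-maltsev} after verifying that $\zA$ is (vacuously) smooth and that every maximal component of $\zA$ is a singleton.

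First, since $\zA$ has no semilattice and no majority edges, the smoothness condition---that $a\fac\th\cup b\fac\th$ be a subuniverse for every semilattice or majority edge $ab$---is satisfied vacuously. So $\zA$ is smooth, and in particular majority free, so Corollary~\ref{cor:uniform-maltsev} is available. This is how the parenthetical remark ``it does not have to be smooth in this case'' is reconciled with the use of earlier results that assume smoothness.

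Next, because $\zA$ has no semilattice edges, it has no thin semilattice edges either (a thin semilattice edge is by definition a semilattice edge witnessed by the equality congruence). Hence $\cG_s(\zA)$ has no arcs, every strongly connected component of $\cG_s(\zA)$ is a singleton, and every element of $\zA$ lies in $\max(\zA)$.

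Now apply Corollary~\ref{cor:uniform-maltsev} to obtain a ternary term $m$ and a set $U$ of Mal'tsev edges such that $m$ is Mal'tsev on every edge of $U$ and, for any distinct maximal components $C_1,C_2$, some pair $a\in C_1$, $b\in C_2$ satisfies $ab\in U$. Since every maximal component is a singleton, for distinct $a,b\in\zA$ the sets $\{a\}$ and $\{b\}$ are two distinct maximal components and the only available pair forces $ab\in U$. By the definition of a Mal'tsev edge, $m$ is then Mal'tsev on $\{a,b\}$, which means $m(x,y,y)=m(y,y,x)=x$ for all $x,y\in\{a,b\}$. Together with idempotence of $m$ (covering $a=b$), this gives $m(x,y,y)=m(y,y,x)=x$ throughout $\zA$, so $m$ is a Mal'tsev term of $\zA$.

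I do not anticipate a real obstacle: the argument is essentially a bookkeeping exercise confirming that the absence of semilattice and majority edges collapses both the smoothness requirement and the structure of maximal components, after which Corollary~\ref{cor:uniform-maltsev} does all the work. The one minor point to be careful about is that ``$m$ is Mal'tsev on the two-element set $\{a,b\}$'' already supplies both orientations of the Mal'tsev identity, so a single application of the corollary to each pair of singleton components suffices.
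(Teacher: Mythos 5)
Your proof is correct and follows essentially the same route as the paper: the paper's (implicit) argument is exactly that with only affine edges smoothness is vacuous, every element is maximal and every maximal component is a singleton, after which Corollary~\ref{cor:uniform-maltsev} yields a single term $m$ that is Mal'tsev on every two-element subset. Your explicit handling of the diagonal case via idempotence and of both orientations of the Mal'tsev identities on $\{a,b\}$ just spells out what the paper leaves implicit.
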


\section{Semilattice and majority: thin edges and undirected connectivity}%
\label{sec:affine-free}

Algebra $\zA$ whose graph does not contain edges of the 
affine type will be called \emph{affine free}. In this section we prove results
similar to those in Section~\ref{sec:majority-free}. As it will require some 
additional preparations,
in Section~\ref{sec:sm-structure} we remind and improve some results 
about the structure of subdirect products of affine free algebras related 
to maximal components. A pair $ab$ of elements is called a 
\emph{thin undirected majority edge} if it is a majority edge witnessed by the 
equality relation. In other words, if there is a term operation $h$ that 
is majority on $\{a,b\}$. The main result, Theorem~\ref{the:max-majority}
is then proved in Section~\ref{sec:undirected-majority}. 

\subsection{The structure of subdirect products of affine free algebras}%
\label{sec:sm-structure}

We start with reminding several results from \cite{Bulatov20:maximal}.
Recall that an ($n$-ary) relation over a set $A$ is called
\emph{2-decomposable} if, for any tuple $\ba\in A^n$, $\ba\in\rel$ if
and only if, for any $i,j\in[n]$,
$\pr_{ij}\ba\in\pr_{ij}\rel$. The property of 2-decomposability is closely related to
the existence of majority polymorphisms of the relation. In our case
relations in general do not have a majority polymorphism, but they
still have a property close to 2-decomposability. We say that a
relation $\rel$, a subdirect product of $\vc{\zA}n$, is
\emph{quasi-2-decomposable}, if for any elements $\vc an$,
such that $(a_i,a_j)\in\amax(\pr_{ij}\rel)$ for any
$i,j$, there is a tuple $\bb\in\rel$ with $(\bb[i],\bb[j])\in\as{(a_i,a_j)}$
for all $i,j\in[n]$. 

\begin{theorem}[Theorem~30, \cite{Bulatov20:maximal}]%
\label{the:quasi-2-decomp}
Any subdirect product $\rel$ of smooth algebras is quasi-2-decomposable.

Moreover, if $\rel$ is $n$-ary, $X\sse[n]$, tuple $\ba$ is such that
$(\ba[i],\ba[j])\in\amax(\pr_{i,j}\rel)$ for any $i,j$, and $\pr_X\ba
\in\amax(\pr_X\rel)$, there is a tuple $\bb\in\rel$ with
$(\bb[i],\bb[j])\in\as{(\ba[i],\ba[j])}$ for any $i,j\in[n]$, and
$\pr_X\bb=\pr_X\ba$. 
\end{theorem}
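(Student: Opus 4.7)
\medskip
\noindent\textbf{Proof plan.}
The plan is to prove both halves of the theorem simultaneously by induction on the arity $n$, treating the \emph{moreover} clause as the real statement: if $\pr_X\ba$ happens to be as-maximal in $\pr_X\rel$ for some $X\sse[n]$, then the witness $\bb$ can be taken to agree with $\ba$ on $X$. The unconditional statement is then just the case $X=\eps$. For the base case $n=2$ there is nothing to do: the hypothesis says $\ba\in\pr_{12}\rel=\rel$, so $\bb=\ba$ works, and in the strengthened version $\ba\in\amax(\rel)$ gives the $X$-agreement for free.

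For the inductive step I would project away one coordinate. Fix $k\in[n]\setminus X$ (we may assume $X\ne[n]$, otherwise $\ba\in\amax(\rel)$ and $\bb=\ba$ works), put $\rel'=\pr_{[n]\setminus\{k\}}\rel$, and let $\ba'=\pr_{[n]\setminus\{k\}}\ba$. Because projecting preserves as-paths, the pair hypothesis and as-maximality of $\pr_X\ba$ pass to $\rel'$, so the inductive hypothesis produces a tuple $\bb'\in\rel'$ with $(\bb'[i],\bb'[j])\in\as{(\ba[i],\ba[j])}$ in $\pr_{ij}\rel$ for all $i,j\ne k$ and $\pr_X\bb'=\pr_X\ba$. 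The real work is to fill in coordinate $k$.

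Let $T=\{c\in\zA_k:\bb'_c\in\rel\}$, where $\bb'_c$ denotes the lift of $\bb'$ with the $k$th coordinate set to $c$. First I would produce some $c_0\in T$ with $(\bb'[i],c_0)\in\as{(\ba[i],\ba[k])}$ in $\pr_{i,k}\rel$ for every $i\ne k$ and then, if $k\in X$, improve it to $c_0=\ba[k]$. For each $i$ the binary projection $\pr_{i,k}\rel$ has link congruences $\lnk_1,\lnk_2$ and, since $(\bb'[i],\ba[i])$ lies in the as-component of $\pr_i\rel$ containing $\ba[i]$ and $\ba[i]$ is the first coordinate of the as-maximal pair $(\ba[i],\ba[k])$, Corollary~\ref{cor:path-extension} together with the rectangularity Proposition~\ref{pro:max-gen} shows that the as-component of $(\ba[i],\ba[k])$ in $\pr_{i,k}\rel$ sits in a link-rectangle $B_1^{(i)}\tm B_2^{(i)}\sse\pr_{i,k}\rel$ with $\bb'[i]\in B_1^{(i)}$. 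Applying Lemma~\ref{lem:path-extension} to the as-path from $\bb'$ back toward $\ba'$ and lifting through the $k$th coordinate produces an initial $c_0\in T$ sitting in $\bigcap_i B_2^{(i)}$, which is what we want.

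The last and subtlest step is the $X$-agreement when $k\notin X$ there is nothing to check, but if one needs to do the case $k\in X$ this is actually handled by running the induction in a different order: we can always arrange $k\notin X$ when selecting which coordinate to project away, because $X\ne[n]$ in the nontrivial case. I would then verify that the built-in $\pr_{X}\bb=\pr_X\ba$ is preserved when we glue $c_0$ back in: this is automatic since $c_0$ lives in coordinate $k\notin X$. Thus $\bb=\bb'_{c_0}$ meets every required condition and the induction closes. The main obstacle I anticipate is the joint-coordination step for $c_0$: individually each pair condition cuts out a rectangular slab in $\zA_k$, but one has to certify that the intersection of these slabs still meets $T$. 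I expect this to come out of repeatedly applying the path extension lemma in tandem with Proposition~\ref{pro:max-gen}, essentially walking along as-paths in the fiber $T$ while staying inside every rectangle $B_2^{(i)}$; no further machinery beyond what Section~\ref{sec:rectangularity} and Lemma~\ref{lem:path-extension} provide should be needed.
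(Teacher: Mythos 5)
This theorem is imported from the companion paper \cite{Bulatov20:maximal} (Theorem~30 there) and carries no proof in the present paper, so there is nothing here to compare your route against directly; what follows is an assessment of the argument on its own terms.

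The inductive scaffold (base $n=2$; project away a coordinate $k\notin X$; apply the inductive hypothesis to $\rel'=\pr_{[n]\setminus\{k\}}\rel$; lift) is a plausible frame, and the base case is handled correctly. The genuine gap is exactly where you flag it, and it is not a technicality to be dispatched ``by repeatedly applying the path extension lemma'': the heart of quasi-2-decomposability is the simultaneous coordination of the constraints on the lifted coordinate. After the inductive hypothesis you only know that each pair $(\bb'[i],\bb'[j])$ lies in $\as{(\ba[i],\ba[j])}$; this pairwise condition does \emph{not} give an as-path from $\bb'$ to $\ba'$ inside $\rel'$. The clause ``Applying Lemma~\ref{lem:path-extension} to the as-path from $\bb'$ back toward $\ba'$'' therefore invokes a path that has not been shown to exist, and the conclusion that some $c_0$ lands in $T\cap\bigcap_i B_2^{(i)}$ is asserted rather than argued. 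Each slab $B_2^{(i)}$ is indeed forced by Proposition~\ref{pro:max-gen} to contain a rectangle, but those rectangles are fibers over different first coordinates $\bb'[i]$, and nothing in the cited machinery says their $\zA_k$-sides have a common as-maximal element sitting in the specific subalgebra $T$. Notice also that Lemma~\ref{lem:to-cong-block} in this very paper performs the analogous lifting step by \emph{invoking} Theorem~\ref{the:quasi-2-decomp} together with Corollary~\ref{cor:max-comp-product}; that is a strong hint the coordination problem is not a side remark one can wave at, but the actual mathematical content, and an argument that simply cites the path-extension and rectangularity lemmas again will circle back to what is to be proved. As it stands, the proposal identifies the correct obstacle but does not overcome it.
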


\begin{defin}
A relation $\rel\sse \zA_1\tm\ldots\tm \zA_n$ is said to be {\em
almost trivial} if there exists an equivalence relation $\th$ on the set
$[n]$ with classes $I_1\zd I_k$, such that
$$
\rel=\pr_{I_1}\rel\tm\ldots\tm\pr_{I_k}\rel
$$
where $\pr_{I_j}\rel=\{(a_{i_1},\pi_{i_2}(a_{i_1})\zd
\pi_{i_l}(a_{i_1}))\mid a_{i_1}\in \zA_{i_1}\}$, $I_j=\{i_1\zd i_l\}$,
for certain bijective mappings $\pi_{i_2}\colon \zA_{i_1}\to
\zA_{i_2}\zd \pi_{i_l}\colon \zA_{i_1}\to \zA_{i_l}$.
\end{defin}

An algebra $\zA$ is said to be \emph{maximal generated} if it is generated
by one of its maximal components.

\begin{lemma}[Lemma~35, \cite{Bulatov20:maximal}]\label{lem:at-for-simple}
Let $\rel$ be a subdirect product of simple maximal generated 
algebras $\vc\zA n$, say, $\zA_i$ is generated by a maximal component 
$C_i$; and let $\rel\cap(C_1\tm\ldots\tm C_n)\ne\eps$. Then $\rel$ 
is an almost trivial relation.  
\end{lemma}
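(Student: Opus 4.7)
My plan is to derive, from the binary projections of $\rel$, an equivalence relation on $[n]$ that gives the required partition, and then to upgrade this binary information to a full direct product decomposition using rectangularity tools. I start by studying each binary projection $\rel_{ij}=\pr_{ij}\rel$: as a subdirect product of the simple algebras $\zA_i$ and $\zA_j$, both link congruences $\lnk_1(\rel_{ij}),\lnk_2(\rel_{ij})$ are either equality or the full relation. If either is equality, $\rel_{ij}$ is the graph of a surjective homomorphism, which simplicity upgrades to an isomorphism $\vf_{ij}\colon\zA_i\to\zA_j$. If both are full, I apply Proposition~\ref{pro:max-gen} at the pair $(\bc[i],\bc[j])\in C_i\tm C_j$ (where $\bc\in\rel\cap\prod C_i$): since $C_i$ is an as-component of $\zA_i$ in the affine-free setting of this subsection, this forces $C_i\tm C_j\sse\rel_{ij}$, and the non-functionality arising when $|C_j|>1$ yields $\rel_{ij}=\zA_i\tm\zA_j$ (the degenerate case $|C_j|=1$ makes $\zA_j$ a singleton and the conclusion trivial). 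Thus each $\rel_{ij}$ is either a graph of an isomorphism or the full product.

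Next, I define $i\sim j$ iff $\rel_{ij}$ is a graph of an isomorphism. Reflexivity and symmetry are immediate; transitivity follows by inspecting $\pr_{ijk}\rel$: if $\rel_{ij},\rel_{jk}$ are graphs of $\vf,\psi$, then each $(a,b,c)\in\pr_{ijk}\rel$ satisfies $c=\psi\cc\vf(a)$, forcing $\rel_{ik}$ to be the graph of $\psi\cc\vf$. Let $I_1\zd I_k$ denote the $\sim$-classes. Within each class $I$, the compatible isomorphisms $\vf_{ij}$, $i,j\in I$, force $\pr_I\rel$ to be determined by any single coordinate, which is exactly the almost-trivial form required. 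For distinct classes, each $\pr_{I_\ell}\rel$ is simple (being isomorphic to some $\zA_{i^*}$), and re-applying the binary analysis to $\pr_{I_\ell\cup I_{\ell'}}\rel$ rules out the graph-of-iso alternative---which would, via the bijections already present within each class, combine into a bijection $\zA_{i^*}\to\zA_{j^*}$ for representatives and contradict $i^*\not\sim j^*$---leaving $\pr_{I_\ell\cup I_{\ell'}}\rel=\pr_{I_\ell}\rel\tm\pr_{I_{\ell'}}\rel$.

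Finally, I assemble the global decomposition $\rel=\prod_\ell\pr_{I_\ell}\rel$ by induction on the number of classes $k$: the cases $k\le 2$ are immediate. For $k\ge 3$, the restriction of $\rel$ to the coordinates in $[n]-I_1$ satisfies the hypotheses of the lemma via $\pr_{[n]-I_1}\bc$, so by induction $\pr_{[n]-I_1}\rel=\prod_{\ell\ge 2}\pr_{I_\ell}\rel$. Viewing $\rel$ as a binary subdirect product of the simple $\pr_{I_1}\rel$ and the product $\pr_{[n]-I_1}\rel$, I rule out $\lnk_1(\rel)$ being trivial: otherwise $\rel$ would be the graph of a homomorphism from a product of simples into a simple, whose kernel is a maximal congruence of the product and therefore factors through a single projection $\pr_{I_\ell}$, turning $\pr_{I_1\cup I_\ell}\rel$ into a graph of iso and contradicting cross-class fullness. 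With $\lnk_1$ full, I combine Proposition~\ref{pro:max-gen} with the connectivity of maximal components provided by Proposition~\ref{pro:as-connectivity} and the path extension of Lemma~\ref{lem:path-extension} to conclude $B_1'\tm B_2'\sse\rel$ for every pair of as-components of $\pr_{I_1}\rel$ and $\pr_{[n]-I_1}\rel$; the maximal generation of each $\zA_i$ by $C_i$ then exhausts $\pr_{I_1}\rel\tm\pr_{[n]-I_1}\rel$.

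The main obstacle is this last inductive step. Cross-class binary fullness alone does not imply a direct product decomposition in general (for instance, the affine parity relation $\{(a,b,c):a+b+c=0\}$ on $\zZ_2$ has full binary projections but is not the full cube), so the argument must genuinely exploit the affine-free hypothesis of this section through the coincidence of as-components with maximal components, the rectangularity of Proposition~\ref{pro:max-gen}, and the maximal generation of the $\zA_i$.
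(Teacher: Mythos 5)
The central gap is in your final inductive step, where you rule out $\lnk_1(\rel)$ being trivial by asserting that the kernel of a surjective homomorphism from $\prod_{\ell\ge 2}\pr_{I_\ell}\rel$ onto the simple $\pr_{I_1}\rel$ must ``factor through a single projection $\pr_{I_\ell}$.'' This is exactly the skew-freeness claim that your own $\zZ_2$ counterexample shows is false for general products of simple algebras, and you give no argument for it. It does hold when every factor is a simple algebra of TCT type $3$, $4$, or $5$, but that must be derived, not asserted. You try to supply it by invoking an affine-free hypothesis ``of this section'' --- but the lemma itself carries no affine-free restriction, so as written you have proved a different, narrower statement. The hypotheses you actually have (each $\zA_i$ simple, idempotent, omitting type $\one$, and generated by a nontrivial maximal component $C_i$) do in fact force each nontrivial $\zA_i$ to be non-abelian: a strongly $s$-connected $C_i$ with $|C_i|\ge 2$ forces a semilattice term operation that a module cannot have, so $\zA_i$ cannot be type $\two$. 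That deduction is what legitimately makes the products skew-free, and it is the link you needed to make; as submitted, the inductive step rests on an unproved structural fact.

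There is also a more economical route that avoids skew-freeness altogether and works verbatim without any affine-free hypothesis, entirely within the rectangularity toolkit already quoted in the paper. Once you know each cross-class $\pr_{ij}\rel$ is linked (which you do get from Proposition~\ref{pro:max-gen}: replace $C_i$ by the as-component containing it, apply the proposition at a pair coming from $\bc\in\rel\cap\prod C_i$, and then invoke maximal generation to blow $D_i\times D_j$ up to $\zA_i\times\zA_j$), you can take $I_1=\{1\}$ WLOG, note that $\bc[1]\in\max(\zA_1)$ and $\pr_{2\dots n}\bc\in\max(\pr_{2\dots n}\rel)$ because maximal components of a product are products of maximal components, and apply Corollary~\ref{cor:max-comp-product} directly to get $\as(\bc[1])\times\as(\pr_{2\dots n}\bc)\sse\rel$. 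Maximal generation of each $\zA_i$ then forces $\pr_{I_1}\rel\times\pr_{[n]-I_1}\rel\sse\rel$, and the induction on the number of classes closes without ever needing to analyze $\Con$ of a product. This is tighter, uses no facts beyond those the paper already records, and makes explicit that affine-freeness plays no role in the lemma.

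Minor points: your binary-case use of Proposition~\ref{pro:max-gen} should be stated for the as-component $D_i\supseteq C_i$ (not $C_i$ itself) when no affine-free assumption is in force; and in ``Case 2'' of your induction you gesture at combining Propositions~\ref{pro:max-gen},~\ref{pro:as-connectivity} and Lemma~\ref{lem:path-extension} without saying how --- Proposition~\ref{pro:max-gen} plus generation suffices there and the detour through path extension is unnecessary.
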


In the case of affine free algebras this claim can be strengthened by removing
the requirement $\rel\cap(C_1\tm\ldots\tm C_n)\ne\eps$.

\begin{prop}\label{pro:at-for-simple-sm}
Let $\rel$ be a subdirect product of simple maximal generated affine free 
algebras $\vc\zA n$ such that for any $i,j\in[n]$ there is 
$(a,b)\in\pr_{ij}\rel$ such that $a,b$ belong to maximal components 
generating $\zA_i,\zA_j$, respectively. Then $\rel$ is an almost trivial relation.  
\end{prop}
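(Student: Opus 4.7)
The plan is to reduce to Lemma~\ref{lem:at-for-simple} by producing a single tuple $\bb\in\rel$ whose coordinates simultaneously lie in generating maximal components; the extra work over Lemma~35 is to promote the pairwise non-emptiness assumption to a global one. Quasi-2-decomposability (Theorem~\ref{the:quasi-2-decomp}) will accomplish this once a suitable ``virtual'' tuple has been constructed.

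First I would apply Lemma~\ref{lem:at-for-simple} to each binary projection $\pr_{ij}\rel$: it is a subdirect product of the simple maximal generated affine free algebras $\zA_i,\zA_j$, and the proposition's hypothesis supplies exactly the required element in a product of generating maximal components. Hence every $\pr_{ij}\rel$ is almost trivial, and in the binary case that means either $\pr_{ij}\rel=\zA_i\tm\zA_j$ or $\pr_{ij}\rel$ is the graph of an isomorphism $\pi_{ij}\colon\zA_i\to\zA_j$. Declare $i\sim j$ when the latter occurs. Reflexivity and symmetry are immediate, and transitivity follows because for any $\bd\in\rel$ the two iso-graphs force $\bd[j]=\pi_{ij}(\bd[i])$ and $\bd[k]=\pi_{jk}(\bd[j])$, so $\pr_{ik}\rel$ must be the graph of the composition $\pi_{jk}\cc\pi_{ij}$.

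Let $I_1,\dots,I_k$ be the $\sim$-classes with chosen representatives $i_l\in I_l$, pick $a_{i_l}$ in a generating maximal component of $\zA_{i_l}$, and define a virtual tuple $\ba\in\prod_i\zA_i$ by $\ba[i]=\pi_{i_l,i}(a_{i_l})$ for $i\in I_l$ (with $\pi_{i_l,i_l}$ the identity). I then verify $(\ba[i],\ba[j])\in\amax(\pr_{ij}\rel)$ for every pair: when $i\sim j$, the projection $\pr_{ij}\rel\to\zA_i$ is an algebra isomorphism carrying maximal components to maximal components and preserving the property of generating the whole algebra; when $i\not\sim j$, maximal components of the product $\zA_i\tm\zA_j$ factor as products of maximal components, and $\ba[i],\ba[j]$ already sit in such. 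Because Theorem~\ref{the:set-preservation} guarantees that every subalgebra of a product of affine free algebras is affine free, $\amax$ coincides with $\max$ throughout, so Theorem~\ref{the:quasi-2-decomp} yields $\bb\in\rel$ with $(\bb[i],\bb[j])\in\as((\ba[i],\ba[j]))$, which in the affine free setting forces $\bb[i]$ to lie in the same maximal component of $\zA_i$ as $\ba[i]$, a generating one.

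Letting $C_i$ be this generating maximal component of $\bb[i]$, the tuple $\bb$ witnesses $\rel\cap(C_1\tm\dots\tm C_n)\ne\eps$, so a second application of Lemma~\ref{lem:at-for-simple}, now to $\rel$ itself, gives that $\rel$ is almost trivial. The main obstacle is the consistent construction of $\ba$: pair-maximality is easy locally, but globalizing it is exactly what the $\sim$-bookkeeping accomplishes, exploiting the rigidity forced by the iso-graph projections. The rest is routine assembly of the almost triviality of the pairs with quasi-2-decomposability and a final invocation of Lemma~35.
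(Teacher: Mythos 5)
Your proposal is correct and follows essentially the same route as the paper: both arguments reduce to Lemma~\ref{lem:at-for-simple} by using Theorem~\ref{the:quasi-2-decomp} to produce a tuple of $\rel$ lying in a product of generating maximal components, after establishing that each binary projection $\pr_{ij}\rel$ is either the full product or the graph of an isomorphism along which generating maximal components are transported. The only cosmetic differences are that you obtain this binary dichotomy by applying Lemma~\ref{lem:at-for-simple} to the projections, where the paper uses simplicity (linkedness) together with Proposition~\ref{pro:max-gen}, and that you keep the isomorphism-linked coordinates and organize them by an explicit equivalence relation (checking coherence of the maps $\pi_{ij}$) instead of the paper's preliminary reduction to the case where no projection is a bijection graph.
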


\begin{proof}
It suffices to prove that there are $\vc Cn$, maximal components generating the 
$\zA_i$'s, such that $\rel\cap(C_1\tm\ldots\tm C_n)\ne\eps$. Note first that 
if $\pr_{ij}\rel$ is the graph of a bijective mapping $\pi$, then $\zA_i,\zA_j$
are isomorphic, and $\pi$ is an isomorphism. Therefore if $C_i$ generates
$\zA_i$ then $\pi(C_i)$ is a maximal component that generates $\zA_j$. 
Set $C_j=\pi(C_i)$. For any $\ba\in\rel$ if $\ba[i]\in C_i$, then 
$\ba[j]\in C_j$. Therefore, it suffices to assume that for all $i,j\in[n]$
the relation $\relo=\pr_{ij}\rel$ is not a graph of a bijective mapping.

Since $\zA_i,\zA_j$ are simple, $\relo$ is linked. Suppose that $C_i,C_j$
are maximal components of $\zA_i,\zA_j$, respectively, generating them, 
such that 
$\relo\cap(C_i\tm C_j)\ne\eps$. Then by Proposition~\ref{pro:max-gen}
$C_i\tm C_j\sse\relo$, that is, $\zA_i\tm\zA_j\sse\relo$. For each $\ell\in[n]$
fix an arbitrary maximal component $C_\ell$ generating $\zA_\ell$ and
let $\ba\in\tms\zA n$ be such that $\ba[\ell]\in C_\ell$ for $\ell\in[n]$.
Then for any $i,j\in[n]$ we have $(\ba[i],\ba[j])\in\pr_{ij}\rel$. 
By Theorem~\ref{the:quasi-2-decomp} there is $\ba'\in\rel$ such that
$\ba'[i]\in C_i$ for all $i\in[n]$. This completes the proof.
\end{proof}

We complete this subsection citing the following structural result from 
\cite{Bulatov20:maximal} that will be needed a bit later.

\begin{corollary}[Corollary~38, \cite{Bulatov20:maximal}]\label{cor:max-comp-product}
Let $\rel$ be a subdirect product of smooth algebras
$\zA_1\zd\zA_n$ such that $\pr_{1i}\rel$ is linked for any $i\in\{2\zd n\}$. 
Let also $\ba\in\rel$ be such that $\ba[1]\in\max(\zA_1)$ and 
$\pr_{2\dots n}\ba\in\max(\pr_{2\dots n}\rel)$. Then 
$\as(\ba[1])\tm\as(\pr_{2\dots n}\ba)\sse\rel$.
\end{corollary}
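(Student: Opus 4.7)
I treat $\rel$ as a binary subdirect product $\rel\sse \zA_1\tm \rel'$ with $\rel'=\pr_{2\dots n}\rel$, and plan to apply Proposition~\ref{pro:max-gen}. First, by Lemma~\ref{lem:path-extension} projections of subdirect products preserve maximality in $\cG_s$, and for smooth algebras omitting type \one\ any $\cG_s$-maximal element is $\cG_{as}$-maximal (i.e.\ $\max\sse\amax$), so $\ba[1]\in\amax(\zA_1)$ and $\pr_{2\dots n}\ba\in\amax(\rel')$. Consequently $\as(\ba[1])$ is an as-component of $\zA_1$ and $\as(\pr_{2\dots n}\ba)$ is an as-component of $\rel'$. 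By Corollary~\ref{cor:path-extension}, the set $\rel^{*}=\rel\cap(\as(\ba[1])\tm\as(\pr_{2\dots n}\ba))$ is nonempty (it contains $\ba$) and hence is a subdirect product of $\as(\ba[1])\tm\as(\pr_{2\dots n}\ba)$.

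Each factor $\as(\ba[1])$ and $\as(\pr_{2\dots n}\ba)$ is a single strongly connected component of its own $\cG_{as}$, so it is its own unique as-component. Thus, once I show $\rel^{*}$ is linked, Proposition~\ref{pro:max-gen} applied to $\rel^{*}$ forces $\rel^{*}=\as(\ba[1])\tm\as(\pr_{2\dots n}\ba)$, which is the desired conclusion.

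The crux is the linkedness of $\rel^{*}$, which I plan to establish by induction on $n$. For $n=2$, $\rel$ is linked by hypothesis and linkedness transfers to $\rel^{*}$ using path extension. For $n\ge 3$, each $\pr_{1i}\rel$ is linked, and the inductive hypothesis applied to $\pr_{1i}\rel$ yields the pairwise rectangles $\as(\ba[1])\tm\as(\ba[i])\sse\pr_{1i}\rel$ for every $i\in\{2,\dots,n\}$. Combining these rectangles with the ``moreover'' clause of quasi-2-decomposability (Theorem~\ref{the:quasi-2-decomp}), applied with $X=\{2,\dots,n\}$ and symmetrically with $X=\{1\}$, produces in $\rel^{*}$ tuples that share prescribed partial projections, which furnishes the $\tol_1(\rel^{*})$- and $\tol_2(\rel^{*})$-connections between arbitrary elements. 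The main obstacle is precisely this combination step: enforcing agreement on the full $(2,\dots,n)$-projection (as required for $\lnk_1(\rel^{*})$) is much stronger than linkedness of any single two-coordinate projection, and making it work requires threading the pairwise rectangles from the induction through quasi-2-decomposability with carefully chosen fixed-coordinate sets $X$, rather than relying on any one two-coordinate link individually.
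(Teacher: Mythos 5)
This statement is not proved in the paper at all: it is quoted from \cite{Bulatov20:maximal} (Corollary~38 there), so there is no internal argument to compare your attempt against, and your text has to stand on its own as a proof. It does not, and the decisive problem is the one you flag yourself. The linkedness of $\rel^{*}$ --- equivalently, the passage from the pairwise rectangles $\as(\ba[1])\tm\as(\ba[i])\sse\pr_{1i}\rel$ to agreement on the entire projection $\pr_{2\dots n}\ba$ --- is essentially the whole content of the corollary, and your plan for it is only a sketch: ``threading the pairwise rectangles through quasi-2-decomposability with carefully chosen $X$'' is precisely the construction that has to be carried out, and Theorem~\ref{the:quasi-2-decomp} by itself only returns \emph{some} tuple whose pairwise projections lie in prescribed as-SCCs; its ``moreover'' clause lets you fix $\pr_X$ only when $\pr_X\ba\in\amax(\pr_X\rel)$, which is exactly the kind of fact you are trying to exploit, not something you have verified for the relations $\rel^{*}$ you introduce. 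Naming the obstacle is not overcoming it, so the argument is incomplete at its core.

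Beyond that, two reductions you treat as routine are not available from the results quoted in this paper. First, you assert $\max(\zA)\sse\amax(\zA)$ for arbitrary smooth algebras; here that inclusion is established only for majority free algebras (Corollary~\ref{cor:max-amax}), and without it neither $\as(\ba[1])$ nor $\as(\pr_{2\dots n}\ba)$ is known to be an as-component, so Corollary~\ref{cor:path-extension} and Proposition~\ref{pro:max-gen} do not apply to them --- note this already undermines your $n=2$ base case, not just the inductive step. Second, Proposition~\ref{pro:max-gen} is stated for a subdirect product of algebras and for as-components of link-congruence blocks; $\rel^{*}=\rel\cap(\as(\ba[1])\tm\as(\pr_{2\dots n}\ba))$ is not known to be a subalgebra (as-strongly-connected components are not shown to be subuniverses), nor is $\pr_{2\dots n}\rel$ shown to be smooth, so ``apply Proposition~\ref{pro:max-gen} to $\rel^{*}$'' and ``$\rel^{*}$ is a subdirect product of its two factors'' both need justification you do not supply. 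The general direction (induction on $n$ combining quasi-2-decomposability with the rectangularity of Proposition~\ref{pro:max-gen}) is plausibly the right skeleton, but as written the proof has a genuine gap at its central step and rests on unproved auxiliary claims.
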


\subsection{Undirected majority edges}\label{sec:undirected-majority}

In tis section we show that unlike in the general case, a (thick) majority
edge always contains a thin undirected  one.

\begin{theorem}\label{the:max-majority}
Let $\zA$ be a smooth affine free algebra.
\begin{itemize}
\item[(1)]
Let $C_1,C_2$ be maximal components of $\zA$. There are 
$a\in C_1,b\in C_2$ such that $ab$ is an undirected thin majority edge.
\item[(2)]
Let $a',b'\in\zA$ be such that $a'b'$ is a majority edge and this is 
witnessed by a congruence $\th$ of $\Sgg\zA{a',b'}$. Then there 
are $a\in a'\fac\th, b\in b'\fac\th$ such that $ab$ is an undirected thin 
majority edge. 
\end{itemize}
Moreover, $ab$ is an undirected thin majority edge for any 
$a\in C_1,b\in C_2$ (or $a\in a'\fac\th,b\in b'\fac\th$ such that $(a,b)$ 
is maximal in the subalgebra of $\zA^2$ generated by $(a,b),(b,a)$.
\end{theorem}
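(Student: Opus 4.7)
The heart of the theorem is the ``moreover'' clause, which I would prove as a core lemma: whenever $(a,b)$ is a maximal element of $\zC := \Sgg{\zA^2}{(a,b),(b,a)}$, then $ab$ is a thin undirected majority edge of $\zA$. Granting this lemma, part~(1) is immediate from Lemma~\ref{lem:double-swap}. For part~(2), I would prove the analogous existence statement inside the $\th$-blocks: starting from arbitrary $a_0 \in a'\fac\th$ and $b_0 \in b'\fac\th$, take a maximal element of $\Sgg{\zA^2}{(a_0,b_0),(b_0,a_0)}$ reachable by an upward semilattice path from $(a_0,b_0)$, and use Lemma~\ref{lem:quotient-edge} applied to the coordinate projections $\zA^2 \to \zA$ to show the coordinates of this maximum stay inside $a'\fac\th$ and $b'\fac\th$.

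To prove the core lemma, write $\ba = (a,b)$ and $\bb = (b,a)$. The coordinate swap is an automorphism of $\zC$ exchanging $\ba$ and $\bb$, so $\bb$ is maximal in $\zC$ as well. Since $\zA$ is affine free, so is $\zC$ by Theorem~\ref{the:set-preservation}, hence as-connectivity in $\zC$ coincides with s-connectivity and maximality of $\ba$ forces $\as(\ba) = \{\ba\}$ (and likewise for $\bb$). Consider
\[
\rel \;=\; \Sgg{\zC^3}{(\ba,\ba,\bb),\ (\ba,\bb,\ba),\ (\bb,\ba,\ba)}.
\]
The key reduction is this: if $(\ba,\ba,\ba) \in \rel$, then any ternary term $h$ with $h((\ba,\ba,\bb),(\ba,\bb,\ba),(\bb,\ba,\ba)) = (\ba,\ba,\ba)$, read out coordinatewise inside $\zA^6$, simultaneously witnesses the six identities $h(a,a,b) = h(a,b,a) = h(b,a,a) = a$ and $h(b,b,a) = h(b,a,b) = h(a,b,b) = b$; hence $h$ is a majority operation on $\{a,b\}$ in $\zA$, so $ab$ is a thin undirected majority edge.

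To establish $(\ba,\ba,\ba) \in \rel$, I would apply Theorem~\ref{the:quasi-2-decomp} with $X = \{1\}$ to the target tuple $(\ba,\ba,\ba)$. There are two conditions to check. First, $\pr_1\rel = \zC$ and $\ba \in \amax(\zC)$ (using $\amax = \max$ in the affine free setting). Second, each $\pr_{ij}\rel$ is the subalgebra of $\zC^2$ generated by $\{(\ba,\ba),(\ba,\bb),(\bb,\ba)\}$; applying binary idempotent terms to the two generators whose first coordinate is $\ba$ gives $\{\ba\} \tm \zC \sse \pr_{ij}\rel$, so in particular $(\ba,\ba) \in \pr_{ij}\rel$. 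Any thin semilattice edge leaving $(\ba,\ba)$ in $\zC^2$ would, by coordinatewise analysis, produce a thin semilattice edge leaving the maximal element $\ba$ of $\zC$, which is impossible; combined with affine-freeness this gives $\as((\ba,\ba)) = \{(\ba,\ba)\}$ in $\pr_{ij}\rel$ and hence $(\ba,\ba) \in \amax(\pr_{ij}\rel)$. Theorem~\ref{the:quasi-2-decomp} then returns $\bu \in \rel$ with $\bu[1] = \ba$ and $(\bu[i],\bu[j]) \in \as((\ba,\ba)) = \{(\ba,\ba)\}$ for every $i,j$, forcing $\bu = (\ba,\ba,\ba)$.

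The step I expect to require the most care is the verification that $\as((\ba,\ba)) = \{(\ba,\ba)\}$ inside each $\pr_{ij}\rel$: it rests both on the affine-freeness of $\zC$ (so that only thin semilattice edges contribute to as-connectivity) and on the coordinatewise analysis of a hypothetical thin semilattice edge in $\zC^2$. The analogue of Lemma~\ref{lem:double-swap} inside $\th$-blocks for part~(2) is the other delicate point: it requires tracing the upward semilattice path through the coordinate projections and pushing it down into $\Sgg\zA{a',b'}\fac\th$ via Lemma~\ref{lem:quotient-edge}.
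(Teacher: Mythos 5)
Your reduction of the theorem to the ``moreover'' clause, the use of Lemma~\ref{lem:double-swap} for part (1), the block argument for part (2), and the observation that $(\ba,\ba,\ba)\in\rel$ yields a term that is majority on $\{a,b\}$ all match the paper's proof. The gap is in the final step. You claim that maximality of $\ba$ in $\zC$ together with affine-freeness forces $\as(\ba)=\{\ba\}$, and consequently that $\as((\ba,\ba))=\{(\ba,\ba)\}$ inside $\pr_{ij}\rel$, so that Theorem~\ref{the:quasi-2-decomp} returns exactly $(\ba,\ba,\ba)$. This is not what maximality means: a maximal element is one belonging to a \emph{maximal strongly connected component} of $\cG_s(\zC)$, and that component $C=\se\ba$ can contain many elements. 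Thin semilattice edges may well leave $\ba$ (staying inside $C$), so a thin semilattice edge leaving $(\ba,\ba)$ in $\zC^2$ is not ``impossible,'' and $\as((\ba,\ba))$ is in general a subset of $C\tm C$ rather than a singleton. Consequently quasi-2-decomposability only produces a tuple $\bu\in\rel$ with $\bu[1]=\ba$ and $\bu[2],\bu[3]\in C$ (up to as-components), not $(\ba,\ba,\ba)$ itself; also your verification of the hypothesis $(\ba,\ba)\in\amax(\pr_{ij}\rel)$ rests on the same faulty singleton claim.

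What is missing is precisely the rectangularity step the paper uses to close this distance: having obtained from Theorem~\ref{the:quasi-2-decomp} that $\rel\cap C^3\ne\eps$, the paper sets $\relo=\Sgg\rel{\rel\cap C^3}$, uses Lemma~\ref{lem:path-extension} and the strong s-connectivity of $C^2$ to show $C^2\sse\pr_{ij}\relo$ for all $i,j$, and then applies Corollary~\ref{cor:max-comp-product} to conclude $C^3\sse\relo\sse\rel$, which in particular gives $(\ba,\ba,\ba)\in\rel$ even when $C$ is not a singleton. Without an argument of this kind your proof only works in the special case $\se\ba=\{\ba\}$. (Your treatment of part (2) is also slightly different from the paper's --- the paper invokes smoothness to get $\Sg{a',b'}=a'\fac\th\cup b'\fac\th$, so every maximal component of this subalgebra lies in a single $\th$-block, and then applies Lemma~\ref{lem:double-swap} inside it; your route via Lemma~\ref{lem:quotient-edge} can be made to work, since a thin semilattice edge crossing the blocks would give a semilattice operation on the thick edge and contradict its majority type, but this needs to be said explicitly.)
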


\begin{proof}
Let us first assume that $a,b\in\zA$ are such that $(a,b)$ is maximal in $\zB$, 
the subalgebra of $\zA^2$ generated by $\ba=(a,b)$ and $\bb=(b,a)$. 
Consider the relation $\rel\sse\zB^3$ generated by 
$(\ba,\ba,\bb),(\ba,\bb,\ba),(\bb,\ba,\ba)$. In order to prove that 
$ab$ is a thin majority undirected edge it suffices to show that 
$(\ba,\ba,\ba)\in\rel$. 

Let $C=\se\ba$. Then, as is easily seen, 
$C\tm\zB,\zB\tm C\sse\pr_{ij}\rel$ for $i,j\in[3]$. By 
Theorem~\ref{the:quasi-2-decomp} $\rel\cap C^3\ne\eps$. Set 
$\relo=\Sgg\rel{\rel\cap C^3}$. Then by Lemma~\ref{lem:path-extension} 
$C\sse\pr_i\relo$, and, since $C^2$ is strongly s-connected, 
$C^2\sse\pr_{ij}\relo$. Therefore, we are in the conditions of 
Corollary~\ref{cor:max-comp-product}, implying $C^3\sse\relo$. 
In particular, $(\ba,\ba,\ba)\in\relo\sse\rel$.

Now, part (1) follows immediately by Lemma~\ref{lem:double-swap}. 
For part (2) if $ab$ is a majority edge witnessed by a congruence $\th$,
then $\zD=\Sg{a,b}=a\fac\th\cup b\fac\th$, as $\zA$ is smooth. Therefore 
every maximal component of $\zD$ belogns to either $a\fac\th$ or $b\fac\th$, 
and both $\th$-blocks contain a maximal component. We can now apply
the argument above to a pair of such maximal components.
\end{proof}

Theorem~\ref{the:max-majority} easily implies 

\begin{corollary}\label{cor:sm-connect}
Let $\zA$ be an affine free algebra. Then
any maximal $a,b\in\zA$ are connected by a path
$a=c_1\zd c_k=b$, where each $c_i$ is a maximal element, 
each $c_ic_{i+1}$, $i\in[k-1]$, is a thin semilattice edge or a thin
undirected majority edge, and the path contains at most one majority edge.
\end{corollary}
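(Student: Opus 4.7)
The plan is to deduce the corollary directly from Proposition~\ref{pro:as-connectivity}(2) and Theorem~\ref{the:max-majority}(1), by splitting on whether $a$ and $b$ lie in the same maximal component of $\cG_s(\zA)$ or in different ones. Since $\zA$ is affine free, its graph contains only semilattice and majority edges, so every asm-path in $\zA$ is an sm-path; this is the crucial feature we rely on throughout.

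\medskip

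First I would handle the easy case: suppose $\se a=\se b$, i.e.\ $a$ and $b$ lie in the same maximal component $C$. Since $C$ is by definition a strongly connected component of the digraph $\cG_s(\zA)$ whose arcs are the thin semilattice edges, there is a directed s-path $a=c_1\zd c_k=b$ with $c_i\in C$ for all $i$; all such $c_i$ are then maximal and no majority edge is used. This gives a path of the required form with zero majority edges.

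\medskip

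Next I would handle the case $\se a\ne\se b$, so $a\in C_1$ and $b\in C_2$ for distinct maximal components. By Theorem~\ref{the:max-majority}(1) applied to $C_1$ and $C_2$, there exist $a'\in C_1$ and $b'\in C_2$ such that $a'b'$ is an undirected thin majority edge. As in the previous paragraph, inside $C_1$ there is a directed s-path $a=c_1\zd c_r=a'$ through maximal elements, and inside $C_2$ a directed s-path $b'=c_{r+1}\zd c_k=b$ through maximal elements. Concatenating these three pieces produces a path $a=c_1\zd c_k=b$ all of whose vertices are maximal, all of whose edges are thin semilattice or thin undirected majority, and which contains exactly one majority edge, namely $(c_r,c_{r+1})=(a',b')$.

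\medskip

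There is essentially no main obstacle: the heavy lifting has already been done in Theorem~\ref{the:max-majority}, whose content is precisely that arbitrary pairs of maximal components are connected by undirected majority edges; and Proposition~\ref{pro:as-connectivity}, which here is used only implicitly through the fact that maximal components of $\cG_s(\zA)$ are strongly s-connected (so we can freely travel inside a component). The only thing to verify is that the intermediate vertices inside each $C_i$ are maximal, which is immediate from the definition of $\max(\zA)$ as the union of the maximal strongly connected components of $\cG_s(\zA)$.
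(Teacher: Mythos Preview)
Your proposal is correct and matches what the paper intends: the paper gives no explicit argument, stating only that ``Theorem~\ref{the:max-majority} easily implies'' the corollary, and your two-case split (same maximal component via strong s-connectivity, different components via the undirected majority edge from Theorem~\ref{the:max-majority}(1)) is exactly the intended easy deduction. One minor remark: the invocation of Proposition~\ref{pro:as-connectivity} is unnecessary, as you yourself note---the s-connectivity inside a maximal component is by definition of $\cG_s(\zA)$, and the only external input needed is Theorem~\ref{the:max-majority}(1).
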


\section{Semilattice and majority: bounded width}\label{sec:sm}

In this section we give a simpler proof of the Bounded Width Theorem that
a $CSP(\zA)$ has bounded width if and only if $\zA$ omits types \one\ and
\two\ \cite{Barto09:bounded,Barto14:local,Bulatov09:bounded} (or, 
equivalently, if and only if $\zA$ contains no edges of the unary and affine types).

\subsection{Bounded width}

We start with reminding necessary definitions.
Let $\cP=(V,\dl,\cC)$ be a CSP and $W\sse V$. The \emph{restriction} 
of $\cP$ to $W$ is the CSP $\cP_W=(W,\dl\red W,\cC_W)$, where 
$\dl\red W$ is the restriction of $\dl$ on $W$, and for every 
$C=\ang{\bs,\rel}$ the set $\cC_W$ contains the constraint
$C_W=\ang{\bs\cap W,\pr_{\bs\cap W}\rel}$, where $\bs\cap W$ is 
the subtuple of $\bs$ containing all the elements from $W$ in $\bs$, say, 
$\bs\cap W=(\vc ik)$, and $\pr_{\bs\cap W}\rel$ stands for 
$\pr_{\{\vc ik\}}\rel$. A solution of $\cP_W$ is called a 
\emph{partial solution} of $\cP$ on $W$. The set of all partial solutions 
on $W$ is denoted by $\cS_W$. It will be convenient to define problems 
of bounded width as follows. Let $\cF$ be a set of pairs $(W,\vf)$, where
$|W|\le\ell$ and $\vf\in\cS_W$. We say that a tuple $\ba$ over set of 
variables $U$ is \emph{$\cF$-compatible} if for any $W\sse U$, 
$|W|\le k$, it holds $(W,\pr_W\ba)\in\cF$. The set $\cF$ is said to be a 
\emph{$(k,\ell)$-strategy} if the following conditions hold:
\begin{itemize}\itemsep0pt
\item[(S1)]
for every set $W$, $|W|\le \ell$, there is a constraint $\ang{\bs,\rel}\in\cC$
such that $W\sse\bs$;
\item[(S2)]
for any $(W,\vf)\in\cF$ and any $U\sse W$, it holds $(U,\vf\red U)\in\cF$;
\item[(S3)]
for every $\ang{\bs,\rel}\in\cC$, and any $(W,\vf)\in\cF$, $W\sse\bs$,
$|W|\le k$, there is a $\cF$-compatible tuple $\ba\in\rel$ such that
$\pr_W\ba=\vf$,
\end{itemize}
If $\cP$ has a $(k,\ell)$-strategy $\cF$ it is called \emph{$(k,\ell)$-minimal}.

Checking if a CSP is $(k,\ell)$-minimal can be done 
in polynomial time \cite{Dechter03:processing}. For instance the following 
procedure establishes $(k,\ell)$-minimality by finding the largest (with 
respect to inclusion) $(k,\ell)$-strategy.
Start with computing $\cS_W$ for all sets $W\sse V$ with $|W|\le\ell$. 
Then repeat the following steps while this changing the instance: 
(a) Pick $W\sse V$ with $|W|\le k$, $\ba\in\cS_W$, and 
$W\subseteq U\subseteq V$ with $|U|\le\ell$. If there is no $\bb\in\cS_U$ 
such that $\pr_W\bb=\ba$, remove $\ba$ from $\cS_W$. (b) Pick 
$W\sse U\sse V$ with $|W|\le k$, $|U|\le\ell$, and $\bb\in\cS_U$.
If $\pr_W\bb\not\in\cS_W$, remove $\bb$ from $\cS_U$. (c) For every 
$C=\ang{\bs,\rel}$ and $\ba\in\rel$, if there is $W\sse\bs$ with $|W|\le k$
such that $\pr_W\ba\not\in\cS_W$, remove
$\ba$ from $\rel$. As is easily seen, the procedure converges in time 
$O(mn^\ell)$, where $n=|V|$ and $m$ is the total number of tuples 
in all the constraint relations of $\cP$. Also, the resulting collection of 
sets $\cS_W$ is a $(k,\ell)$-strategy by construction, and the resulting 
instance $\cP$ still belongs to $\CSP(\cK)$ provided the original instance 
belongs to $\CSP(\cK)$, and the sets $\cS_W$ are the sets of partial 
solutions of $\cP$. The latter property allows us to assume that a
$(k,\ell)$-strategy can always be chosen to be a collection of sets 
of partial solutions.

Finally, $\CSP(\cK)$ is said to have \emph{width} $(k,\ell)$ if every 
$(k,\ell)$-minimal 
instance of this problem has a solution. A problem is said to have 
\emph{bounded width} if it has width $(k,\ell)$ for some $k,\ell$. Every 
CSP of bounded width has a polynomial time solution algorithm.

\begin{remark}
Note that sometimes $(k,\ell)$-strategy is defined in a different way. The
difference is in the definition of the problem $\cP_W$. The alternative 
definition includes into $\cC_w$ only those constraints $\ang{\bs,\rel}$,
for which $\bs\sse W$. This leads, of course, to a different concept 
of $(k,\ell)$-consistency and width $(k,\ell)$. This alternative concept
of consistency is much weaker, and does not result in an equivalent 
notion of bounded width.
\end{remark}

Problem $\CSP(\cK)$ has bounded width if and only if $\cK$ omits types 
\one\ and \two\
\cite{Larose07:bounded,Bulatov09:bounded,Barto14:local,Barto14:collapse}.
Moreover, a CSP has bounded width if and only if it has width $(2,3)$ 
\cite{Bulatov09:bounded,Barto14:collapse}. There is also an alternative 
characterization in terms of types of edges.

\begin{prop}[Theorem~5, \cite{Bulatov20:connectivity}]\label{pro:sm-omitting2}
For a class $\cK$ of similar idempotent algebras omitting type \one\ 
the following two conditions are equivalent.\\[1mm]
(1) The variety generated by $\cK$ omits types \one\ and \two.\\[1mm]
(2) Algebras from $\cK$ have no edges of the unary and affine types.
\end{prop}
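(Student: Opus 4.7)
The proof combines two tools: Theorem~\ref{the:set-preservation}, just proved in Section~\ref{sec:restricted}, together with the edge characterization of type-omitting from Theorem~12 of \cite{Bulatov16:connectivity} recalled in Section~\ref{sec:edges}, which states that an idempotent algebra $\zA$ omits types \one\ and \two\ if and only if $\cG(\zB)$ contains no edges of the unary or affine types for every subalgebra $\zB$ of $\zA$.

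For the direction (1)$\,\Rightarrow\,$(2), I would simply observe that each $\zA \in \cK$ lies in the variety it generates, hence omits types \one\ and \two; the edge characterization then immediately yields that $\cG(\zA)$ carries no edges of the unary or affine types.

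The direction (2)$\,\Rightarrow\,$(1) is where the real work happens, and the plan is to bring the hypotheses into a form where Theorem~\ref{the:set-preservation} with $T=\{\text{semilattice},\text{majority}\}$ applies. Two preliminary reductions are needed. First, since omitting a type is a statement about finite algebras in the variety, and any finite $\zB$ in the variety generated by $\cK$ already belongs to the variety generated by some finite subclass $\cK_0 \sse \cK$, it suffices to prove the statement for finite $\cK$. Second, Theorem~\ref{the:adding}(2) allows replacing each algebra in $\cK$ by a smooth reduct while preserving the absence of unary and affine edges. With this setup $\cK$ satisfies the hypotheses of Theorem~\ref{the:set-preservation}, whose conclusion gives that every finite algebra in the generated variety is $\{\text{semilattice},\text{majority}\}$-restricted, hence contains no affine edge. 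Combining this with the standard fact that ``omitting type \one'' is preserved under H, S, and P in idempotent varieties (so no unary edges can appear either), the edge characterization of \cite{Bulatov16:connectivity} cited above shows that the variety generated by $\cK$ omits types \one\ and \two.

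The main obstacle is not any individual inequality or construction but the bookkeeping around the two reductions. One has to track the signature carefully when replacing algebras by smooth reducts (edges are sensitive to which term operations are available), verify that information about finite algebras in the original variety is genuinely captured by the reduct's variety, and check that the ``no unary edges'' property propagates through HSP separately from the $T$-restriction handled by Theorem~\ref{the:set-preservation}. Once these bookkeeping issues are handled, the substantive content has already been absorbed into Theorem~\ref{the:set-preservation} and the cited edge characterization of types.
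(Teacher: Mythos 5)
The paper does not prove this proposition: it is imported verbatim as Theorem~5 of \cite{Bulatov20:connectivity}, so there is no in-paper argument to compare your proposal against. Your derivation from the present paper's own machinery is, however, essentially sound, and it is a reasonable guess at the kind of argument the cited source would give.

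A few clean-up remarks on the (2)$\,\Rightarrow\,$(1) direction. Once Theorem~\ref{the:set-preservation} with $T=\{\text{semilattice},\text{majority}\}$ tells you that every finite algebra in the variety generated by the smooth reducts is $T$-restricted, that conclusion already excludes unary edges (they are not in $T$); the separate appeal to preservation of ``omitting type \one'' under H, S, P is redundant. Second, you replace each $\zA\in\cK$ by a smooth reduct via Theorem~\ref{the:adding}, but that theorem is stated for a single algebra, and Theorem~\ref{the:set-preservation} requires the class to consist of \emph{similar} algebras: you should pass to reducts in a common reduced signature (e.g.\ apply the reduction to the product $\prod_{\zA\in\cK}\zA$, or invoke the class version of Theorem~12 of \cite{Bulatov20:connectivity} directly). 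Third, having concluded that the variety $V(\cK')$ of the reducts omits types \one\ and \two, you must still pull this back to $V(\cK)$; the needed monotonicity --- a reduct's variety can only have a larger typeset, so omission of a type passes from reduct to full algebra --- should be stated rather than left implicit. With these bookkeeping points made explicit, the argument is complete.
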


We also give a proof of the following theorem (see 
\cite{Bulatov09:bounded,Barto14:local,Barto14:collapse}).

\begin{theorem}\label{the:main}
Let $\cK$ be a class of similar affine free algebras. Then 
$\CSP(\cK)$ has bounded width. More precisely, every (2,3)-minimal 
instance of $\CSP(\cK)$ has a solution.
\end{theorem}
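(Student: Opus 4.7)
The plan is to proceed by induction on $\sum_{v\in V}|D_v|$, where $D_v$ denotes the set of values at $v$ appearing in some partial solution of the $(2,3)$-strategy of the given $(2,3)$-minimal instance $\cP=(V,\dl,\cC)$. Each $D_v$ is a subalgebra of $\zA_{\dl(v)}$. In the base case $|D_v|=1$ for every $v$, and $(2,3)$-minimality forces the unique assignment to satisfy every constraint, giving a solution.

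For the inductive step I first perform a semilattice cleanup. Since the algebras are affine free, as-paths coincide with s-paths and $\amax(\zB)=\max(\zB)$ for every subalgebra $\zB$. I claim each $D_v$ may be replaced by $\max(D_v)$ while keeping the instance $(2,3)$-minimal with nonempty domains. One removes non-maximal elements one at a time; for any partial solution $\vf$ employing a non-maximal value at $v$, Lemma~\ref{lem:path-extension} applied to the constraint relations produces a replacement with a strictly larger value, while Proposition~\ref{pro:max-gen} guarantees that maximal components act rectangularly in every binary projection, so $(2,3)$-minimality survives each removal. Hence we may assume every $D_v=\max(D_v)$.

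Next I pick a variable $v_0$ and any $a\in D_{v_0}$, restrict $D_{v_0}$ to $\{a\}$, and rerun the $(2,3)$-closure. The core claim is that no $D_w$ becomes empty. To prove it, I fix $w$ and some $b\in D_w$ and must produce a partial solution sending $v_0\mapsto a$ and $w\mapsto b'$ for some $b'\in D_w$. By Corollary~\ref{cor:sm-connect}, any two maximal elements of $D_w$ are joined by a path of thin semilattice and thin undirected majority edges containing at most one majority edge. Along semilattice segments, Proposition~\ref{pro:max-gen} applied to the binary projection $\pr_{v_0 w}\rel$ of a constraint $\ang{\bs,\rel}$ with $\{v_0,w\}\sse\bs$ lets me jump inside an s-component; across the single undirected majority edge I apply a uniform majority term operation, which is majority on the relevant pair by Theorem~\ref{the:max-majority}, to three tuples of $\pr_{v_0 w w'}\rel'$ for an auxiliary variable $w'$ and a constraint $\ang{\bs',\rel'}$ covering $\{v_0,w,w'\}$; these three tuples exist by $(2,3)$-minimality. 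Finally, Theorem~\ref{the:quasi-2-decomp} assembles the pairwise consistent values at the remaining variables into the required partial solution.

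The main obstacle will be showing that the propagation after fixing $v_0\mapsto a$ is globally consistent---that successive $(2,3)$-closures do not cascade into an empty domain. My strategy is a contradiction argument: suppose $w$ is the first variable whose domain is emptied; pick a surviving partial solution that previously justified some $b\in D_w$, and use the majority bridging step above to lift any alternative $b'$ to a valid partial solution extending $v_0\mapsto a$. Once the claim is established, the restricted instance is $(2,3)$-minimal with all domains nonempty and strictly smaller total domain size, so the inductive hypothesis yields a solution of the restricted instance---which is also a solution of $\cP$, since we only restricted one domain.
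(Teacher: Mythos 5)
There is a genuine gap, and it sits exactly where the paper has to work hardest. Your inductive step fixes a single (maximal) value $a$ at one variable $v_0$ and claims the re-run (2,3)-closure never empties a domain. Nothing in the available toolkit supports a reduction all the way down to a single element in one step: Proposition~\ref{pro:max-gen}, Corollary~\ref{cor:max-comp-product} and Theorem~\ref{the:quasi-2-decomp} control relations only modulo link congruences and up to as-components, so they can confine values to a block or a maximal component, never to a prescribed element. This is why the paper's proof does something different: it takes a \emph{maximal congruence} $\th$ of one domain $\zA_v$, chooses a $\th$-block $B$ that is maximal in the factor graph, and restricts $v$ \emph{and the whole cluster $W$ of variables $u$ for which $\cS^\th_{vu}$ is the graph of a mapping} coherently, via the bijections $\pi_{uw}$, to the corresponding blocks; the content of the step is Lemma~\ref{lem:to-cong-block}, which verifies (S3) for the restricted instance by a delicate case split between the linked case (where Proposition~\ref{pro:max-gen} and Corollary~\ref{cor:max-comp-product} apply) and the functional case (handled by the $\pi_{uw}$). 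Your proposal has no counterpart to the functional case at all: when $\cS^\th_{v_0u}$ is not linked, fixing $v_0\mapsto a$ forces a restriction at $u$ that your bridging argument (which presupposes linkedness to invoke Proposition~\ref{pro:max-gen}) cannot justify, and consistency among several such neighbours is precisely what the maps $\pi_{uw}$ are needed for. Only when the relevant domain is simple does the paper's step degenerate to fixing a single element, and the induction through factor levels is what makes the general case reachable.

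Two further steps are also unsupported as written. First, the "semilattice cleanup" (replacing every $D_v$ by $\max(D_v)$ simultaneously while preserving (2,3)-minimality) is asserted, not proved; Lemma~\ref{lem:path-extension} extends paths inside a single relation and does not by itself give condition (S3) across the instance --- the paper proves the analogous statement only for the restricted instance and only through the full argument of Lemma~\ref{lem:to-cong-block}, where maximality enters via the sets $\max(\cS'_{u_1u_2})$. Second, your use of Theorem~\ref{the:quasi-2-decomp} to "assemble the pairwise consistent values at the remaining variables into the required partial solution" is a misapplication: quasi-2-decomposability is a statement about one subdirect product, and even there it only returns a tuple whose coordinates lie in the as-components of the prescribed pairs, not the exact prescribed values; it cannot glue partial solutions of the whole instance across different constraints. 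The closing "first emptied variable" contradiction is a restatement of the desired claim rather than an argument. To repair the proof you would essentially have to reintroduce the maximal-congruence/block reduction and re-prove Lemma~\ref{lem:to-cong-block}.
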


\subsection{Proof of Theorem~\ref{the:main}}\label{sec:bw}

We start with a simple observation. Clearly, if we replace algebras from 
$\cK$ with their reducts that are also affine free, it suffices to prove 
Theorem~\ref{the:main} for the class of reducts. Therefore by 
Theorem~\ref{the:adding} we may assume that algebras in $\cK$ are 
smooth.

The overall method of proving Theorem~\ref{the:main} is to show that 
starting from a (2,3)-minimal CSP we can construct another CSP,
which is also (2,3)-minimal, but whose domains are smaller. Then we
conclude by induction, in which the base step is a CSP with singleton domains. 
Our reduction restricts one of the domains, say, $\zA$, 
to one of the congruence blocks of a maximal congruence of $\zA$. 

Let $\cP=(V;\dl;\cC)$ be a $(2,3)$-minimal problem instance. To simplify 
notation we use $\zA_v$ rather than $\zA_{\dl(v)}$ for $v\in V$.
We prove by induction on the number of elements in $\zA_v$, $v\in V$, that 
$\cP$ has a solution. If all $\zA_v$, $v\in V$, are 1-element, the result 
holds trivially. 

Suppose that the theorem holds for problem 
instances $\cP'=(V;\dl';\cC')$, where $|\zA_{\dl'(v)}|\le 
|\zA_v|$ for $v\in V$ (here $\zA_{\dl'(v)}$ denotes the set of partial
solutions to $\cP'$ on $\{v\}$), and at least one inequality is strict. 

For some $v\in V$, take a maximal congruence $\th$ of $\zA_v$ (it can be
the equality relation if $\zA_v$ is simple). Note that for any $u\in V-\{v\}$, 
$\cS_{vu}^\th=\{(a\fac\th,b)\mid (a,b)\in\cS_{vu}\}$ is either a linked 
relation, or the graph of a surjective mapping 
$\pi_u\colon\zA_u\to\zA_v\fac\th$. 

Let $W$ denote the set consisting of $v$ and all $u\in V$ such that
$\cS_{vu}^\th$ is the graph of $\pi_u$, and let $\th_u$ denote $\ker\pi_u$,
the congruence of $\zA_u$ which is the kernel of $\pi_u$, for $u\in W$, and 
let $\th_u$ denote the equality relation for $u\in V-W$; also let 
$\th_v=\th$. Since $\cP$ is (2,3)-minimal, for any $u,w\in W$ there is a 
bijective mapping $\pi_{uw}:\zA_u\fac{\th_u}\to\zA_w\fac{\th_w}$ such
that whenever $(a,b)\in\cS_{uw}$, $\pi_{uw}(a\fac{\th_u})=b\fac{\th_w}$.
Take a maximal (as a vertex of $\cG(\zA_v\fac\th)$) $\th$-block $B\sse\zA_v$
and let $\cP'$ be the problem $(V;\dl';\cC')$ given by
$$
\zA_{\dl'(u)}=\left\{\begin{array}{ll}
\pi_{vu}(B) & \hbox{if $u\in W$}, \\
\zA_u & \hbox{otherwise},
\end{array}\right.
$$
and for each $C=\ang{\bs,\rel}\in\cC$ there is $C'=\ang{\bs,\rel'}\in\cC'$  such
that $\ba\in\rel'$ if and only if $\ba\in\rel$ and
$a_u\in\pi_{vu}(B)$ for all $u\in W\cap \bs$.

The result now follows from Lemma~\ref{lem:to-cong-block} by the induction 
hypothesis.

\begin{lemma}\label{lem:to-cong-block}
$\cP'$ is (2,3)-minimal.
\end{lemma}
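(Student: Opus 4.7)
My plan is to exhibit as the $(2,3)$-strategy for $\cP'$ the collection $\cF'$ of pairs $(U,\vf)$ with $|U|\le 3$ and $\vf$ a partial solution of $\cP'$ on $U$, and verify the three defining properties. Condition (S1) is inherited from $\cP$ since $\cP'$ has exactly the same constraint scopes, and (S2) is immediate from the definition of a partial solution. The real content is (S3): given a constraint $C'=\ang{\bs,\rel'}\in\cC'$ and $(U,\vf)\in\cF'$ with $U\sse\bs$ and $|U|\le 2$, I must produce an $\cF'$-compatible $\ba\in\rel'$ with $\pr_U\ba=\vf$.

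Since $\vf\in\cS'_U\sse\cS_U$, the $(2,3)$-minimality of $\cP$ yields $\ba_0\in\rel$ extending $\vf$ whose 2-projections are partial solutions of $\cP$; the task is to show that $\ba_0$, or a suitable modification of it, actually lies in $\rel'$ and has 2-projections in $\cS'$. The main structural tool is the graph property coming from $W$: for $u,w\in W$, the relation $\cS_{uw}$ modulo $(\th_u,\th_w)$ is the graph of the bijection $\pi_{uw}$, so in any tuple $\bc\in\rel$, once $\bc(u)\in\pi_{vu}(B)$ for one $u\in W\cap\bs$, the condition $\bc(w)\in\pi_{vw}(B)$ follows at every other $w\in W\cap\bs$. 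If $W\cap\bs=\emptyset$ then $\rel'=\rel$ and there is nothing to check. If $U\cap W\ne\emptyset$ then the $B$-restriction carried by $\vf$ propagates by the graph property throughout $W\cap\bs$, placing $\ba_0$ in $\rel'$ automatically, and the same propagation handles the 2-projections. The remaining hard case is $U\cap W=\emptyset$ while $W\cap\bs\ne\emptyset$: here $\vf$ imposes no direct $B$-restriction, but by virtue of $\vf\in\cS'_U$ there exists a witness $\bb\in\rel'$ extending $\vf$.

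In this hard case I would combine $\ba_0$ and $\bb$ componentwise using the dot operation of Proposition~\ref{pro:good-operation} together with the thin undirected majority operations supplied by Theorem~\ref{the:max-majority}, exploiting the maximality of $B$ as a vertex of $\cG(\zA_v/\th)$ and the affine-free setting (by Proposition~\ref{pro:sm-omitting2} only semilattice and majority edges appear) so that the combined tuple absorbs each $W$-component back into the prescribed block $\pi_{vw}(B)$. The main obstacle is that one must simultaneously fix the values $\vf(u)$ for $u\in U$, drive every $W$-component into its designated block, and maintain compatibility of each 2-projection with \emph{every} other constraint of $\cP'$, not merely with $C'$ itself; the rectangularity results Proposition~\ref{pro:max-gen} and Corollary~\ref{cor:path-extension} are what would allow this to be carried out consistently across the whole family of constraints.
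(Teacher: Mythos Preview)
Your plan has two genuine gaps, and it diverges from the paper's argument at the very first step.

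First, your choice of strategy $\cF'$ --- all partial solutions of $\cP'$ --- is not the one the paper uses and is not well suited to the tools available. The paper takes $\cF'$ to consist of the \emph{maximal} elements of $\cS'_U:=\cS_U\cap\prod_{u\in U}\zA'_u$ for $|U|\le 3$. Restricting to $\max(\cS'_U)$ is essential, because the two structural results that drive the argument, Theorem~\ref{the:quasi-2-decomp} (quasi-2-decomposability) and Corollary~\ref{cor:max-comp-product}, both have maximality hypotheses. With your $\cF'$ you have no handle on an arbitrary partial solution of $\cP'$.

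Second, your ``easy'' case $U\cap W\ne\emptyset$ is not actually easy. You correctly observe that the graph property forces $\ba_0\in\rel'$. But $\cF'$-compatibility for your $\cF'$ demands that every $2$-projection $(\ba_0[u_s],\ba_0[u_t])$ be a partial solution of $\cP'$. When $u_s,u_t\notin W$, this is strictly stronger than membership in $\cS_{u_su_t}$: for any other constraint $\ang{\bs',\relo'}$ with $u_s,u_t\in\bs'$ and $\bs'\cap W\ne\emptyset$, you need $(\ba_0[u_s],\ba_0[u_t])\in\pr_{u_su_t}\relo'$, i.e.\ an extension into $\relo$ that also lands in the $B$-blocks on $\bs'\cap W$. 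Nothing in your argument provides this; it is precisely the hard part again.

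Third, in the hard case your proposed mechanism --- combining $\ba_0$ and a witness $\bb\in\rel'$ via the dot operation and the majority operations of Theorem~\ref{the:max-majority} --- does not work as stated. Those majority operations are local: each one is a term operation that happens to be majority on a single pair $\{a,b\}$. There is no single term that is simultaneously idempotent on the $U$-coordinates (to preserve $\vf$), absorbing on every $W$-coordinate into the prescribed $\th_w$-block, and compatible with every $\cS_{u_su_t}$. You have correctly identified the obstacle in your last paragraph, but you have not overcome it.

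The paper's proof proceeds entirely differently. Starting from $(a,b)\in\max(\cS'_{uw})$, it extends one coordinate at a time: at step $i$ it forms an auxiliary relation $\relo$ on $u_1,\dots,u_i,v$ that simultaneously encodes $\pr_{[i+1]}\rel$, the binary solutions $\cS_{u_ju_{i+1}}$, and the constraint $\cS^\th_{vu_{i+1}}$, then applies Theorem~\ref{the:quasi-2-decomp} to place $(\ba_i,e)$ in $\relo$ up to an $\sqq$-shift on the $v$-coordinate, and finally uses either the graph property (if $u_{i+1}\in W$) or linkedness together with Corollary~\ref{cor:max-comp-product} (if $u_{i+1}\notin W$) to land exactly at $e=B$. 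The maximality of $B$ in $\zA_v/\th$ and of $(a,b)$ in $\cS'_{uw}$ are what make these applications go through.
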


\begin{proof}
To simplify notation we write $\zA'_u$ rather than $\zA_{\dl'(u)}$.
Let $\cF$ denote the (2,3)-strategy for $\cP$ consisting of the pairs of 
the form $(\{u_1\},a)$, $a\in\cS_{u_1}$, 
$(\{u_1,u_2\},(a,b))$, $(a,b)\in\cS_{u_1u_2}$, 
$(\{u_1,u_2,u_3\},(a,b,c)$, $(a,b,c)\in\cS_{u_1u_2u_3}$.
We show that the set $\cF'$ of the pairs of the form 
$(\{u_1\},\max(\cS'_{u_1}))$, $(\{u_1,u_2\},\max(\cS'_{u_1u_2}))$, 
$(\{u_1,u_2,u_3\},\max(\cS'_{u_1u_2u_3}))$, where 
$\cS'_{u_1}=\zA'_{u_1}$, 
$\cS'_{u_1u_2}=\cS_{u_1u_2}\cap(\zA'_{u_1}\tm\zA'_{u_2})$, and 
$\cS'_{u_1u_2u_3}=\cS_{u_1u_2u_3}\cap
(\zA'_{u_1}\tm\zA'_{u_2}\tm\zA'_{u_3})$,
form a (2,3)-strategy for $\cP'$. Condition (S1) for $\cF'$ immediately 
follows from (S1) for $\cF$. Condition (S2) follows from the 
definition of $\cF'$, (S2) for $\cF$, and Corollary~\ref{cor:path-extension}. 

We will need the following simple statement.

\smallskip

\noindent
{\sc Claim.}
For  $w\in V$ such that $\cS_{vw}^\th$ is not the graph of a mapping,
if $a\in\max(\cS'_w)$ and $a\sqq b$, then $b\in\max(\cS'_w)$.

\smallskip

If $\cS_{vw}^\th$ is not the graph of a mapping, then it is linked. Let $C$
be the maximal component of $\zA_v\fac\th$ containing $B$ and $D$ 
the maximal component of $\zA_w$ containing $a$ and $b$. Since
$\cS_{vw}^\th\cap(C\tm D)\ne\eps$, by Proposition~\ref{pro:max-gen}
$C\tm D\sse \cS_{vw}^\th$. In particular, $b\in\cS'_w$.

\smallskip

To verify (S3) take $u,w\in V$, $(a,b)\in\cS'_{uw}$, and 
$\ang{\bs,\rel}\in\cC$ with $u,w\in\bs$. Without 
loss of generality let $\bs=\{\vc um\}$ and $u=u_1, w=u_2$. 
We need to prove that there exists $\ba\in\rel$ such that 
$\ba[u_1]=a,\ba[u_2]=b$, and for any $s,t\in[m]$ it holds
$(\ba[u_s],\ba[u_t])\in\cS'_{u_su_t}$. We will treat the $\th$-block
$B$ as an element of $\zA_v\fac\th$, and to make notation more uniform
denote it by $e$.

By induction on $i\in[m]$ we prove that there is $\ba_i\in\pr_{[i]}\rel$
such that $\ba[u_1]=a,\ba[u_2]=b$, and for any $s,t\in[i]$ it holds
$(\ba[u_s],\ba[u_t])\in\cS'_{u_su_t}$. The base case, $i=2$, follows
from (S3) for $\cF$. Suppose $\ba_i$ with the required properties 
exists. Consider the relation
\begin{eqnarray*}
\relo(\vc ui,v) &=& \exists u_{i+1}(\pr_{[i+1]}\rel(\vc ui,u_{i+1})\meet
(\cS_{vu_{i+1}}^\th)(v,u_{i+1})\\
& & \quad \meet\bigwedge_{j=1}^i\cS_{u_ju_{i+1}}(u_j,u_{i+1}).
\end{eqnarray*}
We show that $(\ba_i,e)\in\relo$. This implies that there is a value 
$c$ of $u_{i+1}$ that satisfies the required properties. We use 
Theorem~\ref{the:quasi-2-decomp}. As $\ba_i\in\pr_{[i]}\rel$, there is 
$d\in\zA_{u_{i+1}}$ with $(\ba_i,d)\in\pr_{[i+1]}\rel$. Moreover,
by the choice of $\cF$, $(\ba_i[u_j],d)\in\cS_{u_ju_{i+1}}$ for $j\in[i]$. 
There is also a value $b$ of $v$ such that $(\cS_{vu_{i+1}}^\th)(b,d)$.
Hence, $\ba_i\in\pr_{u_1\dots u_i}\relo$. Next, by (S3) for any $j\in[i]$ the 
pair $(e^*,\ba[u_j])\in\cS_{vu_j}$, $e^*\in e$, can be extended to 
$(e^*,\ba[u_j],d)\in\cS_{vu_ju_{i+1}}$. Therefore, there 
is $d\in\zA_{u_{i+1}}$ such that $(\ba[u_j],d)\in\cS_{u_ju_{i+1}}$ 
and $(e,d)\in\cS_{vu_{i+1}}^\th$. Again by (S3) for $\cF$ the pair 
$(\ba[u_j],d)$ can be extended to a tuple $\bb\in\pr_{[i+1]}\rel$,
$\bb[u_j]=\ba_i[u_j]$ and $\bb[u_{i+1}]=d$, and such that for every $s\in[i]$,
it holds $(\bb[u_s],d)\in\cS_{u_su_{i+1}}$. Therefore 
$(\ba_i[u_j],e)\in\pr_{u_jv}\relo$. Theorem~\ref{the:quasi-2-decomp} implies
that $(\ba_i,e')\in\relo$ for some $e'\in\zA_v\fac\th$ with $e\sqq e'$.

There are two possibilities. If, say, $u_{i+1}\in W$ and $u_j\in W$, 
for some $j\in[i]$, then $\pr_{vu_j}\relo$
is the graph of $\pi_{vu_j}$ and $e'=e$. Otherwise 
$\pr_{u_jv}\relo$ is linked, and $C\tm D\sse\pr_{vu_j}\relo$, where
$C$ is the maximal component of $\zA_v\fac\th$ containing $e$ and
$D$ is that of $\zA_{u_j}$ containing $\ba[u_j]$. Since 
$\ba_i\in\max(\pr_{[i]}\relo)$, by Corollary~\ref{cor:max-comp-product}
we conclude that $(\ba_i,e)\in\relo$.

The result follows.
\end{proof}

\bibliographystyle{plain}

\end{document}